\title{Compression by Contracting Straight-Line Programs}
\titlerunning{Compression by Contracting Straight-Line Programs}
\author{Moses Ganardi}
{Max Planck Institute for Software Systems (MPI-SWS), Kaiserslautern, Germany}
{ganardi@mpi-sws.org}
{https://orcid.org/0000-0002-0775-7781}
{}
\authorrunning{Moses Ganardi}
\keywords{grammar-based compression, balancing, finger search}
\newcommand{\N}{\mathbb{N}}
\newcommand{\Z}{\mathbb{Z}}
\newcommand{\B}{\mathcal{B}}
\newcommand{\F}{\mathcal{F}}
\newcommand{\G}{\mathcal{G}}
\renewcommand{\H}{\mathcal{H}}
\renewcommand{\L}{\mathcal{L}}
\renewcommand{\O}{\mathcal{O}}
\newcommand{\R}{\mathcal{R}}
\newcommand{\Q}{\mathcal{Q}}
\renewcommand{\S}{\mathcal{S}}
\newcommand{\V}{\mathcal{V}}
\newcommand{\val}[1]{\llbracket #1 \rrbracket}
\newcommand{\rk}{\mathsf{rk}}
\renewcommand{\dag}{\mathsf{dag}}
\newcommand{\lft}{\mathsf{left}}
\newcommand{\lsib}{\mathsf{lsib}}
\newcommand{\level}{\mathsf{level}}
\newcommand{\height}{\mathsf{height}}
\newcommand{\rev}{\mathsf{rev}}
\renewcommand{\root}{\mathsf{root}}
\newcommand{\access}{\mathsf{access}}
\newcommand{\setfinger}{\mathsf{setfinger}}
\newcommand{\movefinger}{\mathsf{movefinger}}
\renewcommand{\insert}{\mathsf{insert}}
\newcommand{\delete}{\mathsf{delete}}
\newcommand{\pred}{\mathsf{pred}}
\newcommand{\suc}{\mathsf{succ}}
\newcommand{\rank}{\mathsf{rank}}
\newcommand{\select}{\mathsf{select}}
\renewcommand{\split}{\mathsf{split}}
\begin{document}

\maketitle

\begin{abstract}
	In grammar-based compression a string is represented by a context-free grammar, also called a {\em straight-line program (SLP)},
	that generates only that string.
	We refine a recent balancing result stating that one can transform an SLP of size $g$ in linear time
	into an equivalent SLP of size $\O(g)$ so that the height of the unique derivation tree is $\O(\log N)$
	where $N$ is the length of the represented string (FOCS 2019).
	We introduce a new class of balanced SLPs, called {\em contracting} SLPs,
	where for every rule $A \to \beta_1 \dots \beta_k$ the string length of every variable $\beta_i$ on the right-hand side is smaller
	by a constant factor than the string length of $A$.
	In particular, the derivation tree of a contracting SLP has the property that every subtree
	has logarithmic height in its leaf size.
	We show that a given SLP of size $g$ can be transformed in linear time into an equivalent contracting SLP
	of size $\O(g)$ with rules of constant length.
	This result is complemented by a lower bound, proving that converting SLPs into
	so called $\alpha$-balanced SLPs or AVL-grammars can incur an increase by a factor of $\Omega(\log N)$.

	We present an application to the navigation problem in compressed unranked trees, represented by {\em forest straight-line programs (FSLPs)}.
	A linear space data structure by Reh and Sieber (2020) supports navigation steps such as going to the parent, left/right sibling,
	or to the first/last child in constant time.
	We extend their solution by the operation of moving to the $i$-th child in time $\O(\log d)$
	where $d$ is the degree of the current node.

	Contracting SLPs are also applied to the finger search problem over SLP-compressed strings
	where one wants to access positions near to a pre-specified {\em finger} position,
	ideally in $\O(\log d)$ time where $d$ is the distance between the accessed position and the finger.
	We give a linear space solution for the dynamic variant where one can set the finger in $\O(\log N)$ time,
	and then access symbols or move the finger in time $\O(\log d + \log^{(t)} N)$ for any constant $t$
	where $\log^{(t)} N$ is the $t$-fold logarithm of $N$.
	This improves a previous solution by Bille, Christiansen, Cording, and G{\o}rtz (2018) with access/move time $\O(\log d + \log \log N)$.
\end{abstract}

\newpage

\section{Introduction}

In grammar-based compression a long string is represented by a context-free grammar, also called a {\em straight-line program (SLP)},
that generates only that string.
Straight-line programs can achieve exponential compression, e.g.\ a string of length $2^n$ can be produced
by the grammar with the rules $A_n \to A_{n-1} A_{n-1}, \dots, A_0 \to a$.
While it is $\mathsf{NP}$-hard to compute a smallest SLP for a given string \cite{CharikarLLPPSS05}
there are efficient grammar-based compressors of both practical and theoretical interest such as
the LZ78/LZW-algorithms \cite{Ziv1978,Welch84}, \textsc{Sequitur} \cite{Nevill-ManningW97}, and \textsc{Re-Pair} \cite{Larsson2000}.
There is a close connection between grammar-based compression and the LZ77 algorithm, which parses a string into $z$ phrases (without self-references):
On the one hand $z$ is always a lower bound on the size of the smallest SLP for the string \cite{CharikarLLPPSS05}.
On the other hand one can always construct from the LZ77 parse an SLP of size $\O(z \log N)$ where $N$ is the string length
\cite{CharikarLLPPSS05,Rytter03} (see also \cite{Gawrychowski11} for LZ77 with self-referential phrases).
Furthermore, the hierarchical structure of straight-line programs
makes them amenable to algorithms that work directly on the compressed representation,
without decompressing the string first.
We refer to \cite{Lohrey12} for a survey on the broad literature on algorithms on grammar-compressed data.

\subsection{Balanced grammars}

For some algorithmic applications it is useful if the SLP at hand satisfies certain balancedness conditions.
In the following we always denote by $N$ the length of the represented string.
A recent result states that one can transform an SLP of size $g$ in linear time
into an equivalent SLP of size $\O(g)$ so that the height of the unique derivation tree is $\O(\log N)$ \cite{GanardiJL19}.
This yields a clean $\O(g)$ space data structure which supports random access to any position $i$ in the string in time $\O(\log N)$,
by descending in the derivation tree from the root to the $i$-th leaf.
The original solution for the random access problem by Bille, Landau, Raman, Sadakane, Satti, and Weimann relied on a sophisticated
weighted ancestor data structure \cite{BilleLRSSW15}.
Its advantage over the balancing approach from \cite{GanardiJL19} is that
it supports random access to the string defined by any given variable $A$ in time $\O(\log |A|)$.

Although the derivation tree of an SLP may have logarithmic height its subtrees may still be very unbalanced.
Arguably, the strongest balancedness notions are $\alpha$-balanced SLPs introduced by Charikar, Lehman, Liu, Panigrahy, Prabhakaran, Sahai, and Shelat \cite{CharikarLLPPSS05}
and AVL-grammars proposed by Rytter \cite{Rytter03}.
An SLP in Chomsky normal form is {\em $\alpha$-balanced} if for every rule $A \to BC$
the ratios $|B|/|A|$ and $|C|/|A|$ lie between $\alpha$ and $1-\alpha$ where $0 < \alpha \le 1/2$ is some constant.
An {\em AVL-grammar} is again an SLP in Chomsky normal form whose derivation tree is an AVL-tree,
i.e.\ for every rule $A \to BC$ the heights of the subtrees below $B$ and $C$ differ at most by one.
In fact, the aforementioned transformations from LZ77 into SLPs produce an $\alpha$-balanced SLP, with $\alpha \le 1-\frac{1}{2}\sqrt{2}$,
and an AVL-grammar, respectively \cite{CharikarLLPPSS05,Rytter03}.
Using the same proof techniques one can also transform an SLP of size $g$ into an $\alpha$-balanced SLP
or an AVL-grammar of size $\O(g \log N)$ \cite{CharikarLLPPSS05,Rytter03}.

Let us list a few algorithmic results on $\alpha$-balanced SLPs and AVL-grammars.
Note that in the following bounds we can always replace $g$ by $\O(g \log N)$ and allow arbitrary SLPs as an input.
Gawrychowski \cite{Gawrychowski11} proved that the {\em compressed pattern matching problem},
given an SLP of size $g$ for a string $s$ and a pattern $m$, does $p$ occur in $s$,
can be solved in $\O(g+m)$ time, assuming that the SLP is $\alpha$-balanced.
Gagie, Gawrychowski, Kärkkäinen, Nekrich, and Puglisi \cite{GagieGKNP12}
presented a solution for the {\em bookmarking problem} in $\alpha$-balanced SLPs or AVL-grammars of size $g$.
Given $b$ positions in the string, called bookmarks, we can decompress any substring of length $\ell$
that covers a bookmark in time $\O(\ell)$ and space $\O(g+b \log^* N)$.
Based on this bookmarking data structure they present {\em self-indexes} for LZ77 and SLPs \cite{GagieGKNP12,GagieGKNP14},
which support extracting substrings and finding all occurrences of a given pattern.
Finally, we mention the solutions for the Hamming distance problem and the subsequence problem on SLP-compressed strings,
considered by Abboud, Backurs, Bringmann, and Künnemann \cite{AbboudBBK17}.
As a first step their algorithms convert the input SLPs into AVL-grammars, and solve both problems in time
$\tilde \O(g^{1.410} \cdot N^{0.593})$, improving on the decompress-and-solve $\O(N)$ time algorithms.

\subsection{Main results}

The starting point of this paper is the observation that the size increase by a $\O(\log N)$ factor
in the transformation from SLPs to $\alpha$-balanced SLPs or AVL-grammars is unavoidable (\Cref{thm:lb}).
This lower bound holds whenever in the derivation tree any path from a variable $A$ to a leaf has length $\Theta(\log |A|)$.
This motivates the search for balancedness notions of SLPs that can be established
without increasing the size by more than a constant factor
and that provide good algorithmic properties.
We introduce a new class of balanced SLPs, called {\em contracting straight-line programs},
in which every variable $\beta_i$ occurring on the right-hand side of a rule $A \to \beta_1 \dots \beta_k$
satisfies $|\beta_i| \le |A|/2$.
The derivation tree of an contracting SLP has the property that every subtree has logarithmic height in its leaf size,
i.e.\ in the number of descendant leaves.
We explicitly admit rules with right-hand sides of length greater than two, however, the length will always
be bounded by a constant in this paper.
We say that an SLP $\G$ defines a string $s$ if some variable in $\G$ derives $s$ (and $s$ only).
The main theorem of this paper refines the balancing theorem from \cite{GanardiJL19} as follows:

\begin{theorem}
	\label{thm:contracting}
	Given an SLP $\G$ of size $g$, one can compute in linear time a contracting SLP of size $\O(g)$
	with constant-length right-hand sides which defines all strings that $\G$ defines.
\end{theorem}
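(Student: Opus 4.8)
The plan is to build on the balancing theorem from \cite{GanardiJL19}, which already gives an equivalent SLP of size $\O(g)$ whose derivation tree has height $\O(\log N)$, and then post-process it so that every right-hand side becomes contracting. It is convenient to think of the derivation tree $T$ of a single variable (say the start variable) and to define, for a node at depth $j$ from the root, its \emph{weight} as the number of leaves in its subtree. After balancing, the height bound tells us that along any root-to-leaf path the weight decreases geometrically \emph{on average}, but individual steps may drop the weight only slightly. To fix this, I would fix a constant $c > 1$ and group consecutive ``light'' edges: starting from a variable $A$ with $|A| = n$, follow the derivation tree downward and look at the maximal antichain of nodes whose weight first drops below $n/c$. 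These nodes are the intended children in the new, contracting rule for $A$; the portion of $T$ strictly above them is a small ``top part'' that we contract into a single new rule $A \to \beta_1 \cdots \beta_k$, where the $\beta_i$ are the (variables labelling the) antichain nodes from left to right. By construction $|\beta_i| < n/c \le |A|$, and with $c = 2$ we get exactly $|\beta_i| \le |A|/2$.

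The key quantitative step is to argue two things simultaneously: that each new rule has \emph{constant} length $k$, and that the total number of new rules is $\O(g)$. For the length bound, the crucial fact is that the top part consists of nodes of weight $\ge n/c$, so it is a tree with $\le c$ leaves-of-weight-$\ge n/c$ worth of ``mass'' — but that alone does not bound $k$, because a path of low-weight-loss internal nodes could be long. This is precisely where the height-$\O(\log N)$ property from \cite{GanardiJL19} is \emph{not} by itself enough, and one needs the stronger structural control that the balancing construction actually provides: the balanced SLP from \cite{GanardiJL19} can be taken to have the property that each variable's derivation tree, restricted to any weight window $[n/c, n]$, has only $\O(1)$ internal nodes — essentially because the construction produces a bounded number of ``pieces'' per level. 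If we take the balancing result as a black box giving only the height bound, the honest approach is instead to first replace each original rule by a balanced binary tree over its right-hand side (so all rules are binary, size still $\O(g)$), then observe that in a binary tree the top part above the antichain has at most $\O(\log c) = \O(1)$ leaves; but again a long thin path of near-$1$ weight-loss nodes remains the enemy. The resolution I would pursue: perform the weight-windowing \emph{recursively / in rounds}, i.e. not relative to the global root but letting each newly created variable spawn its own rule, and charge the internal nodes of each top part to the leaves it sits above, using that each leaf is charged $\O(1)$ times because a fixed derivation-tree node lies in only $\O(1)$ windows $[n/c^{i+1}, n/c^i]$.

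The main obstacle, then, is the interaction between the two goals: making the top parts small enough that rules are $\O(1)$-length while keeping the total grammar size linear. The natural way to handle it is an amortized charging argument: assign to each node $v$ of the balanced derivation DAG a potential equal to $\log |v|$, note that along any path the potential drops by $\ge \log c$ per new rule boundary and by $\ge 0$ otherwise, and bound the number of internal nodes inside a single window by showing they form a binary tree whose leaves are the window's bottom antichain, which has $\O(1)$ elements because each such leaf has weight in $[n/c^2, n/c]$ and their weights sum to $\le n$. One must also be careful with the DAG structure: a variable may be reused at many places with different ``ambient'' sizes, so the windowing should be defined per variable (using $|A|$, which is well-defined) rather than per occurrence, which makes the construction consistent and keeps it computable in linear time by a single bottom-up pass that, for each $A$, traverses only the $\O(1)$-size top part of its rule. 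Finally, one checks that the resulting SLP still defines every string that $\G$ defined — each original variable $A$ is retained (with a rewritten rule) so $\val{A}$ is unchanged — and that all the auxiliary variables introduced have right-hand sides of length $\le$ the original maximum plus $\O(1)$, hence constant after the initial binarization.
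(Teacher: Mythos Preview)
Your proposal has a genuine gap at the step where you bound the length of each new right-hand side by a constant. You write that the antichain (the leaves of the ``top part'') has $\O(1)$ elements ``because each such leaf has weight in $[n/c^2,n/c]$ and their weights sum to $\le n$''. This weight lower bound is false. A node $\beta$ lies in the antichain when its parent $\alpha$ satisfies $|\alpha|\ge n/c$ and $|\beta|<n/c$; nothing prevents $|\beta|=1$. Concretely, suppose the balanced SLP contains a variable $A$ with $|A|=n$ and rules $A\to X_1 a_1$, $X_1\to X_2 a_2,\ldots$, so $|X_i|=n-i$. The \cite{GanardiJL19} height bound caps the path length at $m=\O(\log N)$, but then every $X_i$ with $i\le m$ still has weight $\ge n-m\ge n/2$, so the whole path sits in the top part and the antichain is $\{X_m,a_m,a_{m-1},\ldots,a_1\}$ of size $m+1=\Theta(\log N)$. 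Your new rule for $A$ therefore has $\Theta(\log N)$ symbols, not $\O(1)$. The amortized charging argument you sketch for the total size breaks for the same reason: it rests on the same incorrect weight window $[n/c^2,n/c]$ for antichain leaves.

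Your fallback---assuming that the \cite{GanardiJL19} construction already yields only $\O(1)$ internal nodes per weight window---is essentially assuming the conclusion. That property is almost exactly ``contracting after merging $\O(1)$ levels'', and \cite{GanardiJL19} does not provide it; the present paper is precisely the strengthening needed. The paper's route is quite different from yours: it isolates the obstruction (chains of heavy children) via the heavy-forest decomposition (\Cref{prop:reduction-to-trees}), reducing the problem to building a contracting SLP for all \emph{prefixes} of a weighted tree. That subproblem is nontrivial---the discussion before \Cref{thm:prefix-slp} gives the weighted string $a_1\cdots a_n$ with $\|a_i\|=2^{n-i}$, which forces $\Omega(n^2)$ variables if right-hand sides have length $\le 2$---and is solved by a dedicated construction (\Cref{thm:prefix-slp}, \Cref{thm:tree-slp}) that simultaneously halves weight and length using rules of length up to~10. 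Your top-down windowing idea does not supply a substitute for this construction.
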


As an immediate corollary we obtain a simple $\O(g)$ size data structure which supports random access
to any symbol $A$ of the SLP in time $\O(\log |A|)$ instead of $\O(\log N)$.
This is useful whenever multiple strings $s_1, \dots, s_m$ are compressed using a single SLP
since we can support random access to any string $s_i$ in time $\O(\log |s_i|)$.
We present an example application to unranked trees represented by {\em forest straight-line programs (FSLPs)} introduced in \cite{GasconLMRS20}.
FSLPs are a natural generalization of string SLPs that can compress trees both horizontally and vertically,
and share the good algorithmic applicability of their string counterparts \cite{GasconLMRS20}.
Reh and Sieber presented a linear space data structure on FSLP-compressed trees that allows to perform navigation steps in constant time,
such as moving to the first/last child, left/right sibling, parent node, and returning the symbol of the current node \cite{RehS20}.
Using contracting SLPs we can extend their data structure by the operation of moving to the $i$-th child,
for a given number $1 \le i \le d$, in time $\O(\log d)$ where $d$ is the degree of the current node.

\begin{theorem}
	\label{thm:fslp}
	Given an FSLP $\G$ of size $g$,
	one can compute an data structure in $\O(g)$ time and space
	supporting the following operations in constant time:
	Move to the root of the first/last tree of a given variable,
	move to the first/last child, to the left/right sibling or to the parent of the current node,
	return the symbol of the current node.
	One can also move to the $i$-th child of the current node in time $\O(\log d)$
	where $d$ is the degree of the current node.
\end{theorem}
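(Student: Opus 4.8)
The plan is to reduce the $i$-th child navigation step to a random-access query on a contracting SLP that encodes the horizontal structure (the "child sequence") of each node. An FSLP compresses a forest both vertically (tree concatenation along a spine) and horizontally (forest concatenation of sibling subtrees); the Reh--Sieber data structure \cite{RehS20} already handles the vertical navigation and the constant-time horizontal steps (first/last child, left/right sibling) by walking the FSLP derivation structure. What their structure does \emph{not} give efficiently is jumping directly to the $i$-th child, since the child sequence of a node may be encoded by a deeply nested chain of horizontal concatenation rules, so naively descending costs time proportional to the height of that chain rather than $\O(\log d)$.

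First I would isolate, for each node $v$ reached during navigation, the portion of the FSLP that derives the ordered sequence of its children's root symbols; because FSLP horizontal concatenation behaves like string SLP concatenation, this portion is (essentially) a string SLP $\G_v$ over the alphabet of FSLP symbols whose derived string has length $d = \deg(v)$. These per-node SLPs are all sub-objects of the one global FSLP of size $g$, so their total size is $\O(g)$. Next I would apply \Cref{thm:contracting} to the FSLP once, up front: transform it in linear time into an equivalent contracting representation of size $\O(g)$ with constant-length right-hand sides, so that every horizontal "child-sequence" variable $A$ now derives its string of length $|A|$ via a derivation tree of height $\O(\log |A|)$. Combined with the immediate corollary of \Cref{thm:contracting} (random access to the string of any variable $A$ in time $\O(\log |A|)$), this lets us, given a pointer to the node $v$ whose child sequence is derived by some variable $A$ with $|A| = d$, descend from $A$ to its $i$-th leaf in $\O(\log d)$ time, read off the root symbol of the $i$-th child there, and hand the resulting pointer back to the Reh--Sieber machinery. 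All the other operations (moving to root of first/last tree of a variable, first/last child, left/right sibling, parent, reading the current symbol) are inherited verbatim from \cite{RehS20}; the only new ingredient is that we run the preprocessing through \Cref{thm:contracting} instead of the weaker balancing of \cite{GanardiJL19}, which preserves the $\O(g)$ space and $\O(g)$ construction time and all the constant-time steps, since those steps never depend on the length of any right-hand side being exactly two.

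The main obstacle I anticipate is bookkeeping at the interface between the FSLP formalism and the string-SLP statement of \Cref{thm:contracting}: one must verify that the horizontal fragment of an FSLP really can be viewed as an ordinary string SLP defining a family of strings (one per horizontal variable), so that the "defines all strings that $\G$ defines" clause of \Cref{thm:contracting} applies simultaneously to all child sequences; that the contracting transformation does not disturb the vertical (spine) structure that Reh--Sieber rely on for the constant-time steps; and that after transformation a node $v$ still carries a handle to the unique variable deriving its child sequence, with $|A|=\deg(v)$, so that the random-access descent lands on the correct child and produces a pointer in the form the rest of the data structure expects. Once the FSLP is massaged into the right shape, the $\O(\log d)$ bound is exactly the random-access corollary of \Cref{thm:contracting}, and the rest is routine.
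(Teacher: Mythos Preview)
Your proposal is correct and matches the paper's approach: after putting the FSLP into normal form, the paper extracts the \emph{rib SLP} $\G_\boxminus$ (the string SLP over symbols $\underline{A}$ with $A\in\V_0^\bot$ capturing horizontal concatenation), applies \Cref{thm:contracting} to $\G_\boxminus$ only, and then computes the pointer $\sigma_\boxminus(B,j)$ to the $j$-th child in time $\O(\log |\val{B}_\boxminus|)\le\O(\log d)$ via \Cref{thm:slp-navi}, leaving the Reh--Sieber machinery otherwise intact. Your anticipated obstacles are precisely the two details the paper makes explicit: since contracting is applied only to the rib SLP the spine structure used for the constant-time steps is untouched, and when the current node is the root of a context $a_i(L_i\,x\,R_i)$ the child sequence is \emph{not} derived by a single variable but splits as $L_i$, the parameter position, and $R_i$, which is handled by a three-way case distinction on $j$ versus $|\val{L_i}_\boxminus|+1$.
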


A second application concerns the {\em finger search problem} on grammar-compressed strings.
A finger search data structure supports fast updates and searches to elements
that have small rank distance from the {\em fingers}, which are pointers to elements in the data structure.
The survey \cite{Brodal04} provides a good overview on dynamic finger search trees.
In the setting of finger search on a string $s$, Bille, Christiansen, Cording, and G{\o}rtz \cite{BilleCCG18} considered three operations:
$\access(i)$ returns symbol $s[i]$, $\setfinger(i)$ sets the finger at position $i$ of $s$,
and $\movefinger(i)$ moves the finger to position $i$ in $s$.
Given an SLP of size $g$ for a string of length $N$, they presented an $\O(g)$ size data structure
which supports $\setfinger(i)$ in $\O(\log N)$ time,
and $\access(i)$ and $\movefinger(i)$ in $\O(\log d + \log \log N)$ time where $d$ is the distance
from the current finger position \cite{BilleCCG18}.
If we assume that the SLP is $\alpha$-balanced or an AVL-grammar, one can come up with a linear space solution which supports
$\access(i)$ and $\movefinger(i)$ in $\O(\log d)$ time (\Cref{thm:opt}).
For general SLPs we present a finger search structure with improved time bounds:

\begin{theorem}
	\label{thm:finger}
	Let $t \ge 1$. Given an SLP of size $g$ for a string of length $N$,
	one can support $\setfinger(i)$ in $\O(\log N)$ time, and $\access(i)$ and $\movefinger(i)$ in $\O(\log d + \log^{(t)} N)$ time,
	where $d$ is the distance between $i$ and the current finger position, after $\O(tg)$ preprocessing time and space.
\end{theorem}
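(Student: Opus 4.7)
The plan is to combine \Cref{thm:contracting} with a $t$-level predecessor-style hierarchy built on the resulting contracting SLP.

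First, apply \Cref{thm:contracting} in linear time to obtain a contracting SLP $\G'$ of size $\O(g)$ with constant-length right-hand sides that defines the same string. This is the source of the $\log d$ term: because every variable $A$ satisfies $|A| \ge 2^h$ whenever there is a path of length $h$ from $A$ to a leaf in its derivation tree, the subtree rooted at any ancestor of length $\ell$ has height $\O(\log \ell)$, and near a finger position only $\O(\log d)$ ancestors of length $\O(d)$ are involved in any local navigation.

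Represent the finger at position $p$ by the ancestor stack $A_0, \dots, A_h$ from the start symbol down to the leaf at $p$, together with the nested covering intervals $I_0 \supseteq \dots \supseteq I_h = \{p\}$; since $\G'$ is contracting, $h = \O(\log N)$. Then $\setfinger(i)$ is a standard random access that walks down $\G'$ in $\O(\log N)$ time and records the stack. For $\access(i)$ and $\movefinger(i)$ with $d = |i - p|$, I would split the work into two phases: (i) find the lowest common ancestor $A_k$, i.e.\ the smallest $k$ with $i \in I_k$; (ii) descend from $A_k$ to the leaf at position $i$. Phase (ii) is carried out in $\O(\log d)$ time using a heavy-path decomposition of the derivation tree, precomputed once on the DAG by storing for each variable its heavy child: from $A_k$ the path to position $i$ leaves the heavy path at most $\O(\log d)$ times before entering a subtree of length $\le d$, where the contracting property finishes the job directly.

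For phase (i), the ancestor stack is a nested family of $\O(\log N)$ intervals in which we need a predecessor-type query for $i$. To reach $\O(\log^{(t)} N)$ I would attach to the DAG a $t$-level hierarchy of auxiliary search structures in the spirit of iterated logarithmic predecessor structures: level $j$ indexes positions at granularity $\log^{(j)} N$, each successive level is queried inside the block located at the previous level, and the preprocessing per level is $\O(g)$, giving $\O(tg)$ in total. The crucial invariant is that the hierarchy is stored on the DAG, not on one particular root-to-finger path, so that it stays valid as the finger moves. Putting the two phases together yields $\O(\log d + \log^{(t)} N)$ per $\access$ and $\movefinger$, while $\setfinger$ remains $\O(\log N)$ and the space is $\O(tg)$.

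The main obstacle I anticipate is the interaction between the two phases. The hierarchy in phase (i) must return the LCA without knowing $d$ in advance, so I expect to need an output-sensitive formulation in which the query time is measured against the actually-visited ancestor rather than against $N$; arranging this while keeping the structure DAG-global, and dovetailing it with the heavy-path descent of phase (ii) so that the two $\log d$ contributions really add rather than multiply, is the most delicate part of the construction.
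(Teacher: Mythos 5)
The central gap is your phase (ii). Descending from the branch point to position $i$ is exactly the \emph{fringe access} problem, and your heavy-path argument does not give $\O(\log d)$: each departure from a heavy path only halves the subtree length, so the number of departures before the subtree length drops to $\le d$ is $\Theta(\log(N/d))$, not $\O(\log d)$ (and locating each departure point on a heavy path needs a weighted ancestor query you have not accounted for). The contracting property gives you a descent in time $\O(\log|A|)$ from a variable $A$, which is only useful once $|A|$ is polynomial in $d$; getting down to such a variable quickly is the whole difficulty. This is where the paper spends its machinery (\Cref{prop:acc}): it builds $t$ forests on the dag --- $F_0$ grouping variables by $\rk(A)=\min\{k: |A|\le 2^{2^k}\}$ and $F_k$ by the thresholds $\log^{(k)}N$ --- equips them with constant-time weighted ancestor structures (possible because contracting SLPs have $\O(\log N)=\O(w)$ height), and takes ``long steps'' that square-root the relevant length until $\log|A| = \O(\log i)$. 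The outcome is $\O(\log d + \log^{(t+1)}N)$, \emph{not} $\O(\log d)$; indeed the paper explicitly leaves clean $\O(\log d)$ fringe access in linear space open for general SLPs and achieves it only for path-balanced ones (\Cref{thm:opt}). So the $\log^{(t)}N$ term in the theorem must come from your phase (ii), whereas you have attached it to phase (i).

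Phase (i) and the finger representation also need repair. Finding the lowest ancestor of the finger whose interval contains $i$ is a predecessor query on the $\O(\log N)\le\O(w)$ prefix sums of the current root-to-finger path, which the paper answers in $\O(1)$ using a \emph{dynamic} P\v{a}tra\c{s}cu--Thorup structure with constant-time insert and split (\Cref{thm:dyn-pred}); no iterated-logarithm hierarchy is needed here, and a dag-global structure cannot answer this query at all, since the set of ancestor intervals depends on the particular root-to-finger path (the same variable occurs in many contexts). Finally, storing the finger as the plain ancestor stack breaks the time bound for $\movefinger$: moving by $d=1$ across a high branch point forces you to write down a fresh path of length $\Theta(\log N)$. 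The paper avoids this by maintaining an \emph{accelerated path} in which portions of the derivation-tree path are compressed into single long edges taken from the forests, so that the new suffix of the path has length $\O(\log d + \log^{(t)}N)$ and can be appended within the budget.
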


Here $\log^{(t)} N$ is the $t$-fold logarithm of $N$, i.e. $\log^{(0)} N = N$ and $\log^{(t+1)} N = \log \log^{(t)} N$.
Choosing any constant $t$ we obtain a linear space solution for dynamic finger search,
supporting $\access(i)$ and $\movefinger(i)$ in $\O(\log d + \log^{(t)} N)$ time.
Alternatively, we obtain a clean $\O(\log d)$ time solution if we admit a $\O(g \log^* N)$ space data structure.
Furthermore, \Cref{thm:finger} also works for multiple fingers where every finger uses additional $\O(\log N)$ space.

Let us remark that \Cref{thm:contracting} holds in the {\em pointer machine model} \cite{Tarjan79},
whereas for \Cref{thm:fslp} and \Cref{thm:finger} we assume the {\em word RAM model}
with the standard arithmetic and bitwise operations on $w$-bit words, where $w \ge \log N$.
The assumption on the word length is standard in the area of grammar-based compression, see \cite{BilleLRSSW15,BilleCCG18}.

\subsection{Overview of the proofs}

The proof of \Cref{thm:contracting} follows the ideas from \cite{BilleLRSSW15} and \cite{GanardiJL19}.
The obstacle for $\O(\log N)$ time random access or $\O(\log N)$ height are occurrences of {\em heavy} variables on right-hand sides
of rules $A \to \beta_1 \dots \beta_k$, i.e.\ variables $\beta_i$ whose length exceeds $|A|/2$.
These occurrences can be summarized in the {\em heavy forest},
which is a subgraph of the directed acyclic graph associated with the SLP.
The random access problem can be reduced to {\em weighted ancestor queries} (see \Cref{sec:ds})
on every heavy tree whose edges are weighted by the lengths of the variables
that branch off from the heavy tree.
Using a ``biased'' weighted ancestor data structure
one can descend in the derivation tree in $\O(\log N)$ time, spending amortized constant time on each heavy tree \cite{BilleLRSSW15}.
Our main contribution is a solution of the weighted ancestor problem in the form of an SLP:
Given a tree $T$ of size $n$ where the edges are labeled by weighted symbols, we construct a contracting SLP of size $\O(n)$
defining all prefixes in $T$, i.e.\ labels of paths from the root to some node.
The special case of defining all prefixes of a weighted string by a weight-balanced SLP of linear size (i.e.\ $T$ is a path)
was solved in \cite{GanardiJL19}; however, the constructed SLP only satisfies a weaker balancedness condition.

To solve finger search efficiently, Bille, Christiansen, Cording, and G{\o}rtz first consider the {\em fringe access problem} \cite{BilleCCG18}:
Given a variable $A$ and a position $1 \le i \le |A|$, access symbol $A[i]$, ideally in time $\O(\log d)$ where $d = \min\{i, |A|-i+1\}$.
For this purpose the SLP is partitioned into {\em leftmost} and {\em rightmost trees},
which produce strings of length $N$, $N^{1/2}$, $N^{1/4}$, $N^{1/8}$, etc.
The leftmost/rightmost trees can be traversed in $\O(\log \log N)$ time using a $\O(\log \log N)$ time
weighted ancestor data structure by Farach-Colton and Muthukrishnan \cite{FarachM96}.
Applying this approach to contracting SLPs one can solve fringe access in time $\O(\log d + \log \log |A|)$
since the trees have $\O(\log N)$ height, for which one can answer weighted ancestor queries in constant time
using a predecessor data structure by P\v{a}tra\c{s}cu-Thorup \cite{PatrascuT14}.
Using additional weighted ancestor structures, we can reduce the term $\log \log |A|$ to $\log^{(t)} N$.

\section{Straight-line programs}

\label{sec:prelim}

A {\em context-free grammar} $\G = (\V,\Sigma,\R,S)$ consists of a finite set $\V$ of {\em variables},
an alphabet $\Sigma$ of {\em terminal symbols}, where $\V \cap \Sigma = \emptyset$,
a finite set $\R$ of {\em rules} $A \to u$ where $A \in \V$ and $u \in (\V \cup \Sigma)^*$ is a {\em right-hand side},
and a {\em start variable} $S \in \V$.
The set of {\em symbols} is $\V \cup \Sigma$.
We call $\G$ a {\em straight-line program (SLP)} if every variable occurs exactly once on the left-hand side of a rule
and there exists a linear order $<$ on $\V$ such that $A < B$ whenever $B$ occurs
on the right-hand side of a rule $A \to u$.
This ensures that every variable $A$ derives a unique string $\val{A} \in \Sigma^*$.
We also write $|A|$ for $|\val{A}|$.
A string $s \in \Sigma^*$ is {\em defined} by $\G$ if $\val{A} = s$ for some $A \in \V$.
The {\em size} of $\G$ is the total length of all right-hand sides of the rules in $\G$.
For a variable $A \in \V$ we denote by $\height(A)$ the height of the unique derivation tree rooted in $A$.
The height of $\G$ is the height of $S$.
We define the directed acyclic graph $\dag(\G) = (\V \cup \Sigma, E)$
where $E$ is a multiset of edges, containing for every rule $(A \to \beta_1 \dots \beta_k) \in \R$ with $\beta_1, \dots, \beta_k \in \V \cup \Sigma$
the edges $(A,\beta_1), \dots, (A,\beta_k)$.
An SLP $\G$ can be transformed in linear time into an SLP $\G'$ in {\em Chomsky normal form} which defines all strings that $\G$ defines,
i.e.\ each rule is either a binary rule $A \to BC$ or a terminal rule $A \to a$ where $A,B,C \in \V$ and $a \in \Sigma$.

An SLP is {\em $\alpha$-balanced}, for some constant $0 < \alpha \le 1/2$, if it is in Chomsky normal form
and for all rules $A \to BC$ both $|B|/|A|$ and $|C|/|A|$ lie between $\alpha$ and $1-\alpha$.
An {\em AVL-grammar} is an SLP in Chomsky normal form where for all rules $A \to BC$
we have $|\height(B)-\height(C)| \le 1$.
An SLP in Chomsky normal form is {\em $(\alpha,\beta)$-path balanced}, for some constants $0 < \alpha \le \beta$, if for every variable $A$
the length of every root-to-leaf path in the derivation tree is between $\alpha \log |A|$ and $\beta \log |A|$.
Observe that every $\alpha$-balanced SLP is $(1/\log (\alpha^{-1}),1/\log ((1-\alpha)^{-1}))$-path balanced
and AVL-grammars are $(0.5,2)$-path balanced.
The latter follows from the fact that the height decreases at most by 2 when going from an AVL-tree
to an immediate subtree.
There are algorithms that compute for given a string $w$ an $\alpha$-balanced SLP \cite{CharikarLLPPSS05} and an AVL-grammar \cite{Rytter03}
of size $\O(g \log N)$ where $g$ is the size of a smallest SLP for $w$.
We show that these bounds are optimal even for path balanced SLPs: There are strings for which the smallest path balanced SLPs
have size $\Omega(g \log N)$.

\begin{theorem}
	\label{thm:lb}
	There exists a family of strings $(s_n)_{n \ge 1}$ over $\{a,b\}$ such that $|s_n| = \Omega(2^n)$, $s_n$ has an SLP of size $\O(n)$
	and every $(\alpha,\beta)$-path balanced SLP has size $\Omega(n^2)$.
\end{theorem}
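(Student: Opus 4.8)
The plan is to exhibit a single string (one per $n$) whose smallest SLP is small but which is "spread out" enough that no path-balanced SLP can reuse variables efficiently. I would take $s_n$ to be a concatenation of exponentially long blocks whose lengths are pairwise "incomparable" in a precise sense---for instance $s_n = w_1 w_2 \cdots w_n$ where $w_j = a^{2^j} b$, or a similar family like $a\,b\,a^2\,b\,a^4\,b\cdots a^{2^n}b$. Such a string clearly has an ordinary SLP of size $\O(n)$: build each power $a^{2^j}$ by repeated doubling (sharing the doubling chain across all blocks), append the $b$'s, and concatenate. So $|s_n| = \Theta(2^n)$ and the SLP upper bound is immediate; the whole content is the $\Omega(n^2)$ lower bound for path-balanced SLPs.

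For the lower bound I would argue as follows. Fix an $(\alpha,\beta)$-path balanced SLP $\G$ for $s_n$ and look at its derivation tree. The key structural fact to extract from path-balancedness is that for every variable $A$, \emph{all} root-to-leaf paths in $A$'s derivation tree have length $\Theta(\log|A|)$, hence every variable occurring at depth $\ell$ below $A$ derives a string of length $2^{\Theta(\log|A|-\ell)}$ --- in particular the "weight function" $|\cdot|$ decreases geometrically along every path. Now consider, for each block boundary position $p$ in $s_n$ (there are $n$ of them), the root-to-leaf path in $\G$'s derivation tree to the leaf just left of $p$; equivalently consider the $\Theta(\log N)$ variables hanging off that path. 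The idea is to count, via a charging/entropy argument, how many \emph{distinct} variables of $\G$ must appear when we expose all $n$ such paths simultaneously: because the block lengths are chosen so that no variable can "straddle" two different block boundaries and still have the correct length and a geometrically-decreasing weight profile on both sides, a variable can be charged to at most $\O(1)$ of the $n$ boundaries (for a fixed "length scale"). Summing over the $\Theta(\log N)=\Theta(n)$ length scales along each of the $n$ paths, and dividing by the $\O(1)$ sharing, yields $\Omega(n \cdot n) = \Omega(n^2)$ distinct variables, hence size $\Omega(n^2)$.

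The main obstacle is making the "a variable cannot be reused across too many block boundaries" step rigorous. The cleanest route I would pursue: use path-balancedness to show that if a variable $A$ lies on the exposed path for boundary $p$ at depth $\ell$, then $|A| = 2^{\Theta(\log N - \ell)}$ and moreover the \emph{positions} where $A$ occurs in $s_n$ are tightly constrained --- $A$ occurs at a set of positions all congruent in a strong sense dictated by how the long runs $a^{2^j}$ are parsed. Since the run lengths $2^1,\dots,2^n$ are multiplicatively independent in the relevant range, a variable of a given length can be aligned consistently with at most a constant number of the runs, so it is "useful" near at most $\O(1)$ boundaries. One must be careful that terminal-symbol nodes and very short variables are handled separately (they contribute only $\O(n)$ overall and can be discarded), and that the $\Theta$ constants hidden in path-balancedness are tracked so the geometric sums really give $\Omega(n)$ usable scales per path rather than a constant. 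A slightly different, possibly smoother, presentation would phrase this as a lower bound on the number of \emph{phrases} needed in the LZ-like factorization induced by the derivation tree and then invoke the standard relation between such factorizations and SLP size, but the combinatorial heart --- multiplicative independence of the block lengths forcing $\Omega(n^2)$ distinct subtrees --- is the same.
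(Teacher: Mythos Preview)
Your high-level strategy---trace root-to-leaf paths to $n$ distinguished positions and argue that these paths together force $\Omega(n^2)$ distinct DAG edges---is exactly the paper's approach. The gap is in the sharing bound. For your string $a\,b\,a^2\,b\,a^4\,b\cdots a^{2^n}b$, the claim that a variable of a given length is ``useful near at most $\O(1)$ boundaries'' is false: a variable $V$ with $\val{V}=a^\ell b a^r$ fits around the $j$-th $b$ whenever $\ell$ and $r$ do not exceed the adjacent block lengths, and since those lengths are \emph{increasing} this holds for all sufficiently large $j$ simultaneously. (The phrase ``multiplicatively independent'' does no work here---$2^1,\dots,2^n$ are all powers of $2$.) Hence a single DAG edge can lie on $\Omega(n)$ of your paths and the charging argument collapses. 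The vaguer alignment/congruence idea does not rescue this, because path-balancedness constrains only how long a variable is, not where its occurrences sit inside $s_n$.

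The paper's fix is to make the $n$ markers \emph{provably unique} so that edge-disjointness is automatic rather than something to be argued by charging. Over an unbounded alphabet it takes $s_n = b_1 a^{2^n} b_2 a^{2^n}\cdots b_n$; over $\{a,b\}$ it replaces each $b_i$ by a short separator $T_i = b\,a^{2i-2}\,b\,a^{2i-1}\,b$, arranged so that each pattern $ba^{2i-2}b$ and $ba^{2i-1}b$ occurs exactly once in $s_n$. For each $i$ one then looks at the segment $\pi_i$ of the root-to-$T_i$ path between the lowest ancestor whose value meets some other $T_j$ (forcing length $\ge 2^n$, hence depth $\ge \alpha n$ by path-balancedness) and the lowest ancestor whose value still contains one of $T_i$'s unique patterns (which sits at depth $\O(\log n)$ above the leaf). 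Every variable $Y$ on $\pi_i$ has $\val{Y}$ containing a pattern that appears only inside $T_i$, so the $\pi_i$ are pairwise edge-disjoint in $\dag(\G)$ by inspection, and $\Omega(n^2)$ follows immediately. The idea you are missing is not a subtler counting argument but a better choice of string: equal-length blocks separated by unique markers, rather than geometrically growing blocks separated by a single repeated symbol.
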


\begin{proof}
First we use an unbounded alphabet.
Let $s_n = b_1 a^{2^n} b_2 a^{2^n} \dots b_{n-1} a^{2^n} b_n$, which has an SLP of size $\O(n)$
with the rules $S \to b_1 A_n b_2 A_n \dots b_n$, $A_0 \to a$ and $A_i \to A_{i-1} A_{i-1}$ for all $1 \le i \le n$.
Consider an $(\alpha,\beta)$-path balanced SLP $\G$ for $s_n$.
We will show that $\dag(\G)$ has $\Omega(n^2)$ edges.
Let $1 \le i \le n$ and consider the unique path in $\dag(\G)$ from the starting variable to $b_i$.
Let $\pi_i$ be the suffix path starting in the lowest node $A_i$ such that $\val{A_i}$ contains some symbol $b_j$ with $i \neq j$.
Therefore $|A_i| \ge 2^n$.
Since $\G$ is $(\alpha,\beta)$-path balanced $\pi_i$ has length $\ge \alpha n$.
Since all paths $\pi_i$ are edge-disjoint it follows that $\G$ has size $\Omega(n^2)$.

For a binary alphabet define the separator string $T_i = b a^{2i-2} b a^{2i-1} b$ for $1 \le i \le n$
and define $s_n = T_1 a^{2^n} T_2 \dots T_{n-1} a^{2^n} T_n$ of length $\Omega(2^n)$.
The string $s_n$ has an SLP of size $\O(n)$, which first defines all strings $a^0, \dots, a^{2n-1}$
and then all separator strings $T_i$.
Consider an $(\alpha,\beta)$-path balanced SLP $\G$ for $s_n$.
To prove that $\dag(\G)$ has $\Omega(n^2)$ edges
we use the fact that each substring $b a^k b$ for $0 \le k \le 2n-1$ occurs exactly once in $s_n$ since $2n-1 < 2^n$.
Let $1 \le i \le n$ and consider the unique path $\rho_i$ in $\dag(\G)$ from the starting variable to the symbol $b$
in the middle of the separator $T_i = b a^{2i-2} b a^{2i-1} b$.
Let $B_i$ be the lowest node on $\rho_i$ such that $\val{B_i}$ contains either $b a^{2i-2} b$ or $b a^{2i-1} b$.
Since the successor of $B_i$ on $\rho_i$ produces a string strictly shorter than $|T_i| \le 4n$,
the suffix path of $\rho_i$ starting in $B_i$ has length at most $1+\beta \log (4n) = \O(\log n)$.
Let $A_i$ be the lowest ancestor of $B_i$ on $\rho_i$ such that $\val{A_i}$ contains some symbol from a separator $T_j$ for $i \neq j$.
Therefore $|A_i| \ge 2^n$ and hence the suffix path of $\rho_i$ starting in $A_i$ has length at least $\alpha \log(2^n) = \alpha n = \Omega(n)$.
This implies that the path $\pi_i$ from $A_i$ to $B_i$ has length $\Omega(n) - \O(\log n) = \Omega(n)$.
All paths $\pi_i$ are edge-disjoint since for any edge $(X,Y)$ in $\pi_i$, $\val{Y}$ is of the form
$a^\ell b a^{2i-2} b a^r$ or $a^\ell b a^{2i-1} b a^r$.
This implies that $\G$ has size $\Omega(n^2)$. 
\end{proof}

We will define contracting SLPs with respect to a weighted alphabet.
A {\em weighted alphabet} is a finite set of symbols $\Gamma$ equipped with a weight function
$\|\cdot\| \colon \Gamma \to \N \setminus \{0\}$,
which is extended additively to $\Gamma^*$.
The standard weight function is the length function $| \cdot |$.
A symbol $\beta$ occurring in a weighted string $s$ is {\em heavy in $s$} if $\|\beta\| > \|s\|/2$;
otherwise it is {\em light in $s$}.

Consider an SLP $\G = (\V,\Sigma,\R,S)$ over a weighted alphabet $\Sigma$.
We extend the weight function $\|\cdot\|$ to $\V$ by $\|A\| = \|\val{A}\|$.
A symbol $\beta \in \V \cup \Sigma$ is a {\em heavy child} of $A \in \V$
if $\beta$ is heavy on the right-hand side of the rule $A \to u$.
We also call $\beta$ a {\em heavy} symbol.
A rule $A \to u$ is {\em contracting} if $u$ contains no heavy variables,
i.e.\ every variable $B$ occurring in $u$ satisfies $\|B\| \le \|A\|/2$.
Let us emphasize that heavy {\em terminal} symbols from $\Sigma$ are permitted in contracting rules.
If all rules in $\G$ are contracting we call $\G$ contracting.
By {\em expanding} an occurrence of a variable $B$ in a rule $A \to uBv$
we mean replacing that occurrence of $B$ by the right-hand side $x$ of its rule $B \to x$.
If $B$ is heavy in $uBv$ and $B \to x$ is a contracting rule then the new rule $A \to uxv$ is contracting.

\section{Transformation into contracting SLPs}

\label{sec:prefix}

A {\em labeled tree} $T = (V,E,\gamma)$ is a rooted tree where each edge $e \in E$ is labeled by a string $\gamma(e)$
over a weighted alphabet $\Gamma$.
A {\em prefix} in $T$ is the labeling of a path in $T$ from the root to some node.
The first step towards proving \Cref{thm:contracting} is a reduction to the following problem:
Given a labeled tree $T$, construct a contracting SLP over the weighted alphabet $\Gamma$ of size $\O(|T|)$ which defines all prefixes in $T$.

\subsection{Decomposition into heavy trees}

Consider an SLP $\G = (V,\Sigma,\R,S)$
and suppose that a rule $A \to \beta_1 \dots \beta_k$ contains a unique heavy symbol $\beta_i$.
Then we call $\beta_1 \dots \beta_{i-1}$ the {\em light prefix} of $A$ and 
$\beta_{i+1} \dots \beta_k$ the {\em light suffix} of $A$.
The {\em heavy forest} $H = (\V \cup \Sigma,E_H)$ contains all edges
$(A,\beta)$ where $\beta \in \V \cup \Sigma$ is a heavy child of $A \in \V$,
which is a subgraph of $\dag(\G)$.
Notice that the edges in $H$ point towards the roots, i.e.\ if $(\alpha,\beta) \in E$
then $\alpha$ is a child of $\beta$ in $H$.
We define two labeling functions:
The {\em left label} $\lambda(e)$ of an edge $e = (A,\beta)$ is the {\em reversed} light prefix of $A$
and the {\em right label} $\rho(e)$ of $e$ is the light suffix of $A$.
The connected components of $(H,\lambda)$ and $(H,\rho)$ are called the
{\em left labeled} and {\em right labeled heavy trees},
which can be computed in linear time from $\G$.
The root of a heavy tree is either a terminal symbol
or a variable whose rule is contracting.

\begin{figure}

\centering

\begin{tikzpicture}

\tikzset{every node/.style={minimum size = 4pt, inner sep = 2pt, circle}}
\tikzstyle{heavy}=[draw=orange,thick]

\node (X) {$S$};
\node[below right = 1cm and 0cm of X] (Y) {$T$};
\node[below left = 1cm and 1cm of X] (A) {$A$};
\node[below right = 1cm and 1.5cm of X] (B) {$B$};
\node[below left = 1cm and 1cm of Y] (C) {$C$};
\node[below left = 1cm and 0cm of Y] (D) {$D$};
\node[below right = 1cm and 1cm of Y] (Z) {$U$};
\node[below right = 1cm and 2.5cm of Y] (E) {$E$};
\node[above right = 1cm and -.5cm of E] (W) {$V$};

\draw[->,>=stealth,heavy] (X) to node[left] {\scriptsize $A$} node[right] {\scriptsize $B$} (Y);
\draw[->,>=stealth] (X) to (A);
\draw[->,>=stealth] (X) to (B);
\draw[->,>=stealth] (Y) to (C);
\draw[->,>=stealth] (Y) to (D);
\draw[->,>=stealth,heavy] (Y) to node[below left] {\scriptsize $DC$} node[above right] {\scriptsize $\varepsilon$}  (Z);
\draw[->,>=stealth,heavy] (W) to node[above left] {\scriptsize $\varepsilon$} node[below right] {\scriptsize $E$}  (Z);
\draw[->,>=stealth] (W) to (E);

\end{tikzpicture}

\caption{An excerpt from the dag representation of an SLP. The variables $S,T,U,V$ form a heavy tree with root $U$.
The value of $S$ can be split into the prefix $ACD$, the root $U$ of the heavy tree, and the suffix $B$.
Observe that the left labeling of the path from $U$ to $S$ is $DCA$, which is the reverse of the prefix $ACD$.}
\label{fig:heavy-forest}

\end{figure}
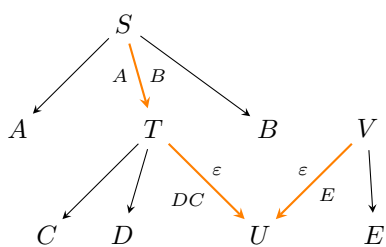

\begin{proposition}
	\label{prop:reduction-to-trees}
	Given an SLP $\G$
	and contracting SLPs $\H_{\mathsf{L}}$ and $\H_{\mathsf{R}}$
	defining all prefixes of all left labeled and right labeled heavy trees of $\G$.
	Let $g$ be the total number of variables in the SLPs and $r$ be the maximal length of a right-hand side.
	One can compute in linear time a contracting SLP $\G'$ which defines all strings that $\G$ defines,
	has $\O(g)$ variables and right-hand sides of length $\O(r)$.
\end{proposition}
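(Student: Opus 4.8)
The plan is to reconstruct $\G'$ by composing the heavy-tree data in $\H_{\mathsf{L}}$ and $\H_{\mathsf{R}}$ with the "backbone" given by the roots of the heavy trees. Recall from the decomposition that every symbol $A$ either is the root of its heavy tree (in which case its rule is already contracting, or $A$ is terminal) or it has a unique heavy child; iterating the heavy-child relation from $A$ reaches the root $R_A$ of the heavy tree containing $A$, and along the way the light prefixes concatenate (in the correct order) to a string $p_A$ and the light suffixes to $q_A$, so that $\val{A} = \val{p_A}\,\val{R_A}\,\val{q_A}$. Crucially, $\val{p_A}$ is a prefix of the left-labeled heavy tree read from $R_A$ down to $A$ (it is the reverse of the left labeling of that path — cf.\ \Cref{fig:heavy-forest}), and $\val{q_A}$ is a prefix of the corresponding right-labeled heavy tree. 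Hence $\H_{\mathsf{L}}$ contains a variable $L_A$ with $\val{L_A} = \val{p_A}$ and $\H_{\mathsf{R}}$ a variable $R'_A$ with $\val{R'_A} = \val{q_A}$ (after reversing, for the left side; since $\H_{\mathsf{L}}$ is built to define the prefixes of the left-labeled trees, which are the reverses, one takes a reversal of that SLP, which preserves contractingness and size up to a constant).

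First I would fix, for each symbol $A$ of $\G$ that is not a heavy-tree root, the variables $L_A \in \H_{\mathsf{L}}$ and $R'_A \in \H_{\mathsf{R}}$ as above, computable in linear time by a single pass over the heavy forest. Then $\G'$ is obtained by taking the disjoint union of $\H_{\mathsf{L}}$, $\H_{\mathsf{R}}$, and a fresh copy of the heavy-tree roots of $\G$, and adding for every non-root symbol $A$ a new rule
\[
	A \;\to\; L_A \; R_A \; R'_A,
\]
where $R_A$ is the (copied) root of the heavy tree of $A$, and for every heavy-tree root we keep its original (already contracting) rule, but with each occurrence of a symbol $B$ on its right-hand side replaced by the new variable for $B$. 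Terminal rules are kept verbatim. This clearly has $\O(g)$ variables and right-hand sides of length $\O(r)$, and a straightforward induction on the structure of $\G$ shows $\val{A}$ is unchanged, so $\G'$ defines exactly the strings $\G$ defines. The acyclicity/linear-order condition for an SLP is inherited because $\H_{\mathsf{L}},\H_{\mathsf{R}}$ are SLPs and the new rules only refer to smaller-weight symbols.

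The main obstacle — and the only place real care is needed — is verifying that every new rule is \emph{contracting}. For a root rule this is immediate since it was contracting in $\G$ and substituting a variable $B$ by the new variable with the same value does not change weights. For the rule $A \to L_A\,R_A\,R'_A$ one must check $\|L_A\|,\|R_A\|,\|R'_A\| \le \|A\|/2$. Here $R_A$ is the \emph{heavy} symbol reached from $A$, so a priori $\|R_A\|$ can be close to $\|A\|$; the fix is the standard one (as in \cite{BilleLRSSW15,GanardiJL19}): do not make $A$ point directly to the root, but only "one expansion step" — replace $A \to \beta_1\dots\beta_k$ (with heavy child $\beta_i$) by $A \to (\text{rev light prefix up to }\beta_i)\,\beta_i\,(\text{light suffix})$ where we recurse into $\beta_i$ only if $\beta_i$ is itself heavy, i.e. we keep the original binary-style rule $A \to u_A\,\beta_i\,v_A$ with $u_A,v_A$ already light by definition, and $\beta_i$ is a variable of strictly smaller weight whose own rule handles the next level. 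Concretely $\G'$ keeps, for each non-root $A$, the rule $A \to \ell_A\; \beta_i \; r_A$ where $\ell_A$ is a new variable of $\H_{\mathsf{L}}$ for the (reversed) light prefix and $r_A \in \H_{\mathsf{R}}$ for the light suffix — both light in $\val A$ by construction of the heavy decomposition, hence of weight $\le \|A\|/2$ — and $\beta_i$, being a strictly lighter variable, satisfies $\|\beta_i\| \le \|A\|/2$ as well since $\|\beta_i\| < \|A\|$ and weights are integers only forces $\le \|A\|-1$; to actually get the factor $1/2$ one invokes that $\beta_i$'s rule is contracting and $\beta_i$ is replaced by its own right-hand side when it is heavy, pushing the heavy part down one more level until it becomes a heavy-tree root, which is exactly what $\H_{\mathsf{L}},\H_{\mathsf{R}}$ already encode. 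I would therefore phrase the construction as: let $\beta_i = B$; if $B$ is a heavy-tree root, use the rule $A \to \ell_A\,B\,r_A$ directly (all three light); if not, $B$ has the rule $B \to \ell_B\,B'\,r_B$ in $\G'$ and we \emph{expand} $B$ in $A$'s rule once, obtaining $A \to \ell_A\,\ell_B\,B'\,r_B\,r_A$, and iterate — but since $\ell_A\ell_B\cdots$ and $\cdots r_B r_A$ are precisely the prefixes of the heavy trees, we simply set $A \to L_A\,R_A\,R'_A$ with the $\H$-variables as above, and the heavy symbol $R_A$ is now a heavy-tree \emph{root}, which is light in $\val A$ because $\val A$ strictly contains it plus at least the light prefix/suffix — wait, that need not give factor $1/2$ either. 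The clean statement I would prove is the one already used in \cite{BilleLRSSW15}: walking from $A$ along heavy children, weights at least halve every two steps, so the heavy-tree root $R_A$ has $\|R_A\| \le \|A\|$, and one absorbs the last heavy step into the rule — i.e. $\G'$ actually does one further expansion at the root if needed. I will make this precise by induction on heavy-tree depth, which is the technical heart; everything else is bookkeeping.
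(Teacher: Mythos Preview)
Your skeleton matches the paper's: form $\G'$ from $\H_{\mathsf{L}}^{\rev}$, $\H_{\mathsf{R}}$, and for each $A$ a rule $A \to P_A\,\beta\,S_A$ where $\beta = R_A$ is the heavy-tree root. Where you go wrong is the contractingness argument, and the fix you propose does not work.

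First, the claim that ``walking from $A$ along heavy children, weights at least halve every two steps'' is false. A chain of rules $A_0 \to A_1 b_1$, $A_1 \to A_2 b_2,\dots$ with each $b_i$ of weight $1$ makes every $A_i$ heavy in $A_{i-1}$ while $\|A_i\| = \|A_{i-1}\|-1$; nothing halves. So you cannot bound $\|R_A\|$ by $\|A\|/2$ this way, and your induction on heavy-tree depth has no base.

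Second, you assert that $P_A$ and $S_A$ are ``light in $\val{A}$ by construction of the heavy decomposition''. This is also false: since $\|A\| = \|P_A\| + \|\beta\| + \|S_A\|$, at most one of the three is heavy, but any single one of $P_A$, $\beta$, $S_A$ can have weight arbitrarily close to $\|A\|$.

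The paper resolves both issues with two targeted expansions. For the root $\beta$: it is either terminal (hence allowed to be heavy) or has a contracting rule $\beta \to u$; one \emph{inlines} $u$, obtaining $A \to P_A\,u\,S_A$, and every variable in $u$ has weight $\le \|\beta\|/2 \le \|A\|/2$. For a heavy $P_A$ or $S_A$: expand it once using its rule in $\H_{\mathsf L}$ or $\H_{\mathsf R}$. That rule is contracting over the weighted alphabet $\V\cup\Sigma$, so the only possibly heavy symbol introduced is some $B\in\V$; but such a $B$ sits inside the light prefix/suffix of an ancestor $A'$ of $A$ in the heavy tree, hence $\|B\| \le \|A'\|/2 \le \|A\|/2$. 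One more parallel expansion of heavy $\V$-symbols inside the $\H$-rules finishes the job. The point you are missing is precisely this ``any $\V$-symbol appearing in $\H_{\mathsf L}$/$\H_{\mathsf R}$ is light relative to every descendant in the heavy tree'' observation; it replaces your (incorrect) halving-along-the-chain argument.
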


\begin{proof}
	Given an SLP $\G = (\V,\Sigma,\R,S)$.
	Consider a variable $A \in \V$ and its unique derivation
	\begin{equation}
		\label{eq:deriv}
		A = \beta_0 \Rightarrow_\G u_1 \beta_1v_1 \Rightarrow_\G u_1u_2\beta_2v_2v_1 \Rightarrow_\G \dots \Rightarrow_\G u_1 \dots u_m \beta_m v_m \dots v_1
	\end{equation}
	where in every step we apply the rule $\beta_{i-1} \to u_i \beta_i v_i$ and $\beta_i$ is the parent node of $\beta_{i-1}$ in $H$.
	Finally, $\beta_m$ is the root of the heavy tree containing $A$, which may be a variable or a terminal symbol.
	The SLP $\H_{\mathsf{R}}$ contains a variable $S_A$ with $\val{S_A} = v_m \dots v_1$.
	The SLP $\H_{\mathsf{L}}$ contains a variable $P_A$ with $\val{P_A} = \rev(u_1 \dots u_m)$.
	Let $\G'$ be the union of $\H_{\mathsf{L}}^\rev$ and $\H_{\mathsf{R}}$
	where $\H_{\mathsf{L}}^\rev$ is obtained from $\H_{\mathsf{L}}$ by reversing all right-hand sides.
	In $\G'$ we view the symbols in $\V$ as variables (they were terminal symbols in $\H_{\mathsf{L}}$ and $\H_{\mathsf{R}}$).
	We remark that $\G'$ may not be contracting caused by heavy symbols $A \in \V$ on a right-hand side.

	We add a rule to $\G'$ for every variable $A \in \V$:
	Let $\beta$ be the root of the heavy tree in $\G$ which contains $A$. We add to $\G'$ the rule
	\begin{equation}
		\label{eq:pref-suf}
		A \to \begin{cases}
			P_A \, \beta \, S_A, & \text{if $\beta \in \Sigma$,} \\
			P_A \, u \, S_A, & \text{if $\beta \in \V$ and $(\beta \to u) \in \R$,}
		\end{cases}
	\end{equation}
	whose right-hand side has length at most $r+2$.
	Using \eqref{eq:deriv} it is easy to show that all variables $A \in \V$ derive the same string in $\G$ and in $\G'$
	by induction on the height of $A$ in $\G$.
	We need to ensure that $\G'$ is contracting.
	Observe that rules of the form \eqref{eq:pref-suf} do not contain a heavy symbol in $u$
	since $u$ originated from a contracting rule $\beta \to u$ of $\G$.
	Hence, the only heavy variables in \eqref{eq:pref-suf} might be $P_A$ and $S_A$.
	In parallel, we expand the heavy variable in all noncontracting rules of the form \eqref{eq:pref-suf}.
	If the heavy symbol is $S_A$ (the case $P_A$ is similar)
	consider the rule for $S_A$ from $\H_{\mathsf{R}}$.
	It is contracting in $\H_{\mathsf{R}}$ but it may contain a heavy variable $B \in \V$,
	which is contained in the light suffix of some ancestor $A'$ of $A$ in $H$.
	In particular $\|B\| \le \|A'\|/2 \le \|A\|/2$,
	and therefore the new rule for $A$ is contracting.
	
	Now all rules for variables $A \in \V$ in $\G'$ are contracting.
	As remarked above, the only remaining heavy variables in $\G'$ are from $\V$
	occurring in rules of $\G'$ originating from $\H_{\mathsf{L}}$ and $\H_{\mathsf{R}}$.
	By expanding such heavy occurrences in parallel we obtain a contracting SLP.
	Since we apply expanding on every rule at most once the right-hand sides of $\G'$ are of length $\O(r)$.
\end{proof}

In the rest of this section we will prove the following result.

\begin{theorem}
	\label{thm:tree-slp}
	Given a labeled tree $T$ with $n$ edges and labels of length $\le \ell$,
	one can compute in linear time a contracting SLP with $\O(n)$ variables
	and right-hand sides of length $\O(\ell)$ defining all prefixes in $T$.
\end{theorem}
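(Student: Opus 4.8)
The plan is to follow \cite{BilleLRSSW15} and \cite{GanardiJL19} and reduce the problem on the labeled tree $T$ to the corresponding problem on strings via a heavy path decomposition. I would first root $T$ and contract every edge carrying the empty label (this affects no prefix), so that prefix weights strictly increase along every root-to-leaf path. Then I decompose the edge set of $T$ into vertical spines, following from each node the child whose subtree has the most edges; each spine $v_0,\dots,v_p$ becomes a string $\sigma = A_1\cdots A_p$ over a weighted super-alphabet, where $A_i$ is a fresh variable with the all-terminal rule $A_i\to\gamma(e_i)$ of length $\le\ell$ and weight $\|\gamma(e_i)\|$. Since the spine lengths sum to $n$ and there are at most $n$ edges joining a spine to a light subtree, the construction will use only $\O(n)$ variables and only rules of length $\O(\ell)$; and because every edge lies in a unique spine or is a unique connecting edge, recursing on the (strictly smaller) light subtrees takes linear time in total.

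The heart of the argument is the case where $T$ is a path: for a weighted string $s = c_1\cdots c_m$, construct a contracting SLP of size $\O(m)$ with constant-length rules defining every prefix $c_1\cdots c_k$. Here I would take the linear-size prefix SLP of \cite{GanardiJL19}, which is only weight-balanced in a weak sense, and upgrade it to a contracting one by a purely local rewriting: whenever a variable $B$ on the right-hand side of a rule $A\to\dots B\dots$ has $\|B\| > \|A\|/2$, I \emph{expand} $B$ in place, replacing it by the right-hand side of its own rule. Each expansion shrinks $\|B\|$ by the constant factor guaranteed by \cite{GanardiJL19}, so a constant number of rounds brings every variable-to-head weight ratio below $1/2$; right-hand sides grow by at most a constant factor and the (possibly heavy) terminal symbols are left untouched, which is exactly what separates the contracting property from weak weight-balance.

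Applying the string case to each spine string $\sigma$ yields variables $P_0,\dots,P_p$ with $P_i$ deriving the path label from the spine root to $v_i$ (inlining any super-symbol $A_i$ that occurs heavily, which is harmless as its rule contains no variables). It then remains to give, for each node $w$ inside a light subtree attached at the spine node $v_i$, a contracting rule for the global prefix of $w$: this prefix equals the spine prefix $P_{i'}$ followed by the label of the path from $v_{i'}$ to $w$, where I take $v_{i'}$ to be the highest spine ancestor of $v_i$ whose prefix has weight at most half that of $w$. The trailing part then has weight at least half the target, so it cannot be represented by a single variable; instead I would build a contracting variable $Q$ for it (a short concatenation of a spine substring, the connecting label, and a subtree-relative prefix, all already available, followed by a constant number of expansions) and inline the top of $Q$ into $w$'s rule, obtaining a contracting rule of length $\O(\ell)$. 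Since every step is linear in the $\O(n)$ variables produced, the whole construction runs in linear time.

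I expect the combining step to be the main obstacle. The obvious rule for a node deep inside a light subtree prepends a context of weight up to half the target to a trailing piece of possibly tiny weight, which alone is not contracting; the fix — descend to a lighter ancestor and then express the necessarily heavy trailing piece by inlining a pre-balanced variable for it — is easy per node, but making it go through with only $\O(n)$ auxiliary variables over the whole tree requires the ``biased'' accounting that underlies the weighted-ancestor structures of \cite{BilleLRSSW15}, and getting that accounting to coexist with both the contracting property and constant-length rules is the delicate point of the proof.
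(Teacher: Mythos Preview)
The central gap is in your string case. You claim the prefix SLP of \cite{GanardiJL19} becomes contracting after a constant number of in-place expansion rounds because ``each expansion shrinks $\|B\|$ by the constant factor guaranteed by \cite{GanardiJL19}''. But \cite{GanardiJL19} guarantees only a \emph{path-length} bound: the distance in the derivation tree from a prefix variable $S_i$ to a leaf $a_j$ is $\O(1+\log(\|S_i\|/\|a_j\|))$. That is an amortised statement over a whole root-to-leaf path and does \emph{not} give any per-rule contraction $\|\beta\|\le c\|A\|$ for the individual edges $A\to\cdots\beta\cdots$. The paper's own example makes this concrete: for $s=a_1\cdots a_n$ with $\|a_i\|=2^{n-i}$, in every factor $a_i\cdots a_j$ the leftmost symbol already carries more than half the weight, so with binary rules any variable for a proper prefix sits atop a heavy chain of length $\Theta(k)$, and your expansion on the rule for $P_k$ would have to be iterated $\Theta(k)$ times before the heavy child disappears, blowing the right-hand side up to non-constant length. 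This is exactly why the paper does not patch \cite{GanardiJL19} but builds a new base SLP with the asymmetric split $S\to U\,a_i\,V_1\,V_2$ (where $V_1,V_2$ come from halving the light suffix by \emph{length}) and then layers on the $L_{\alpha,\beta}$ variables; the ``left-heavy'' invariant, not any property inherited from \cite{GanardiJL19}, is what makes the single-expansion argument go through there.

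Your tree-to-string reduction is also structurally different from the paper's and the combining step you flag as the main obstacle is genuinely unresolved: descending to a lighter spine ancestor $v_{i'}$ just moves the heaviness into the trailing piece $Q$, and you give no mechanism that produces a \emph{contracting} rule for $Q$ from already-available variables while keeping the total count at $\O(n)$. The paper sidesteps this entirely by a two-pass bootstrap: first (\Cref{prop:tree-weak}) it builds a \emph{non}-contracting SLP for all prefixes via a rank/peak-node decomposition (not heavy paths) whose only heavy edges lie on disjoint paths, so the heavy trees are caterpillars; then (\Cref{prop:path-like}) it feeds those caterpillars to the string case and finally invokes the generic reduction of \Cref{prop:reduction-to-trees}. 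That indirection is precisely what lets the $\O(n)$ bound and the contracting property coexist.
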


\begin{proof}[Proof of \Cref{thm:contracting}]
	We bring the input SLP $\G$ of size $g$ into Chomsky normal form in linear time.
	We compute all left labeled and right labeled heavy trees in $\G$ and
	compute contracting SLPs $\H_{\mathsf{L}}$ and $\H_{\mathsf{R}}$ for their prefixes
	with $\O(g)$ variables and constant-length right-hand sides
	using \Cref{thm:tree-slp}.
	Then \Cref{prop:reduction-to-trees} yields a contracting SLP $\G'$
	which defines all strings that $\G$ defines, has $\O(g)$ variables and constant-length right-hand sides.
\end{proof}

Using the following preprocessing we can always assume that every edge in $T$ is labeled by a single symbol:
Edges labeled by $\varepsilon$ can clearly be contracted.
For every edge labeled by a string $u$ of length $>1$ we replace $u$ by a new symbol $X_u$ of weight $\|u\|$
and introduce the rule $X_u \to u$.
The new goal is to construct an SLP for all prefixes in the new tree with constant-length right-hand sides.
By expanding all heavy occurrences of symbols $X_u$ in such an SLP
we increase the lengths of the right-hand sides by at most $\ell$.
We will also ignore the empty prefix and assume that all symbols in $T$ are distinct.
This allows us to identify the nodes of a derivation tree with their labels, which are variables or terminal symbols.

\subsection{Prefixes of weighted strings}

We start with the case where the tree is a path, i.e.\ we need to define
all prefixes of a weighted string of length $n$ using $\O(n)$ contracting rules.
The following theorem refines \cite[Lemma~III.1]{GanardiJL19} where only the path length from a prefix variable $S_i$
to a symbol $a_j$ in the derivation tree was bounded by $\O(1 + \log \frac{\|S_i\|}{\|a_j\|})$.

\begin{theorem}
	\label{thm:prefix-slp}
	Given a weighted string $s$ of length $n$ one can compute in linear time a contracting SLP with $\O(n)$ variables
	with right-hand sides of length at most 10 that defines all nonempty prefixes of $s$.
\end{theorem}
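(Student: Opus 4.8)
The plan is to build the SLP recursively, mirroring the classical weight-balanced construction but being careful to enforce the \emph{contracting} condition on every rule. Write $s = a_1 a_2 \dots a_n$ with weights $w_i = \|a_i\|$. For $1 \le i \le n$ let $\pi_i = a_1 \dots a_i$ denote the $i$-th nonempty prefix; we must produce a variable $S_i$ with $\val{S_i} = \pi_i$ for each $i$, using $\O(n)$ variables in total and right-hand sides of length $\le 10$. First I would recursively handle the left half $s_{\mathsf{L}} = a_1 \dots a_{\lceil n/2 \rceil}$ and the right half $s_{\mathsf{R}} = a_{\lceil n/2 \rceil + 1} \dots a_n$, obtaining prefix variables for both halves together with a single variable $L$ for the whole left half $\val{L} = s_{\mathsf{L}}$ and a variable $R$ for the whole right half. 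The prefixes of $s$ split into two groups: the short ones, $\pi_i$ with $i \le \lceil n/2 \rceil$, which are exactly the prefixes of $s_{\mathsf{L}}$ already built; and the long ones, $\pi_i$ with $i > \lceil n/2 \rceil$, for which we want a rule of the shape $S_i \to L\, S'_{i}$ where $S'_i$ is the $(i - \lceil n/2 \rceil)$-th prefix of $s_{\mathsf{R}}$ from the recursion.

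The obstacle is that such a rule $S_i \to L\, S'_i$ need not be contracting: one of $L$ or $S'_i$ may be heavy in $\pi_i$. This is precisely where the extra room in right-hand sides of length up to $10$ (rather than $2$) is spent, and where the weight-balanced decomposition enters. I would not combine $L$ with the recursively built left prefixes directly; instead, inside the recursion I would additionally maintain, for the left half, a balanced binary decomposition of $s_{\mathsf{L}}$ into $\O(1)$ "blocks" of geometrically decreasing weight that suffice to assemble any left prefix, and symmetrically on the right a decomposition of each right prefix $S'_i$. Concretely, following the idea behind weight-balanced SLPs: a prefix $\pi_i$ with $i$ in the right half can be written as a concatenation $L\, a_{j_1}\, a_{j_2} \dots$ refined so that each piece has weight at most half of what precedes it from the right; one represents $\pi_i$ bottom-up as a right-leaning comb whose pieces are taken from a pool of $\O(n)$ precomputed variables, and one shows that only $\O(1)$ pieces are needed at the top rule because the weights of successive pieces, read from the light end, at least double. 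A standard amortization argument then bounds the total number of variables by $\O(n)$.

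The key steps, in order, are: (1)~set up the recursion on $s$, splitting at the midpoint by count, and recursively obtain prefix variables and block-decompositions for the two halves; (2)~for each long prefix $\pi_i$, concatenate the whole-left-half variable $L$ with the appropriate right-prefix assembly, but rebalance greedily from the lighter side so that along the resulting right-leaning chain each consecutive component has weight at most half of the accumulated weight to its left---this guarantees every emitted rule is contracting; (3)~argue the chain for any single prefix has constant length by the doubling-of-weights property, so right-hand sides have length $\O(1)$, and one can arrange the constant to be $\le 10$; (4)~count variables: each recursion level contributes $\O(1)$ new variables per prefix created at that level plus the shared block variables, and summing the geometric series over $\lceil \log n \rceil$ levels gives $\O(n)$ total; (5)~verify linear running time, since all the assemblies are computed by a single bottom-up sweep. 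The main obstacle is step~(2)--(3): making the greedy rebalancing from the light side produce a \emph{contracting} rule while simultaneously keeping its length bounded by an absolute constant, and in particular checking that the heavy \emph{terminal} symbols permitted by the definition do not break either invariant---this is exactly the point where the present statement strengthens \cite[Lemma~III.1]{GanardiJL19}, which only controlled root-to-leaf path lengths and not the contracting property of each individual rule.
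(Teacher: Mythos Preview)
Your proposal has a genuine gap at the point you yourself flag as the main obstacle, namely steps~(2)--(3). You split at the midpoint by \emph{count}, then assert that each long prefix $\pi_i = L\,S'_i$ can be rewritten as a constant-length contracting rule because ``the weights of successive pieces, read from the light end, at least double.'' But nothing in a length-midpoint split gives you any control over weights: the left half $s_{\mathsf L}$ is an arbitrary weighted string, and there is no reason it decomposes into $\O(1)$ blocks with geometrically decreasing weight, nor that expanding the heavy side of $L\,S'_i$ terminates after $\O(1)$ steps. The obvious fix---expand whichever of $L$ or $S'_i$ is heavy by its own (inductively contracting) rule---does make the result contracting, but the right-hand side length then grows by an additive constant at every recursion level, giving $\Theta(\log n)$ rather than $\le 10$. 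Your phrases ``balanced binary decomposition into $\O(1)$ blocks'' and ``rebalance greedily from the lighter side'' are exactly where the real work lies, and they are not substantiated.

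The paper's construction is structurally different and supplies the missing idea. It first builds a \emph{base SLP} $\B$ for $s$ by splitting at the \emph{weight} midpoint $s = u\,a_i\,v$ (so $u,v$ are light) and then splitting $v$ at its \emph{length} midpoint, yielding a rule $S \to U\,a_i\,V_1\,V_2$ of length~$\le 4$ that is both contracting and \emph{left-heavy} (the prefix $Ua_i$ outweighs every later variable). Prefixes of $s$ are then the strings $\lft(S,a_j)$ that branch left off the root-to-leaf path in the derivation tree of $\B$. The paper introduces variables $L_{\alpha,\beta}$ for such left-branching strings and defines them by a \emph{diagonal} rule
\[
L_{\alpha,\beta}\ \to\ \lsib(\alpha')\; L_{\alpha',\beta'}\; \lsib(\beta),
\]
which simultaneously moves $\alpha$ one step down and $\beta$ one step up. The left-heavy property is precisely what forces the middle variable $L_{\alpha',\beta'}$ to be light in this rule; the only possibly heavy symbols are the constant-length $\lsib$ strings, which after one expansion give length~$\le 10$. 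The $\O(n)$ bound comes from restricting to pairs with $\level(\alpha)\le\height(\beta)$ in a pruned tree $D_0$ and bounding $\sum_\beta \height(\beta)$ by a leaf-count argument, not from a geometric series over recursion levels. None of these mechanisms---the weight/length double split, the left-heavy invariant, or the diagonal $L_{\alpha,\beta}$ scheme---appear in your plan.
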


Let us illustrate the difficulty of defining all prefixes with contracting rules.
Consider the weighted string $s = a_1 \dots a_n$ where symbol $a_i$ has weight $2^{n-i}$.
Since in every factor $a_i \dots a_j$ the left-most symbol $a_i$ is heavy,
every rule for $a_i \dots a_j$ must split off the first symbol $a_i$.
If for every prefix we would only repeatedly split off the first symbol
we would create $\Omega(n^2)$ many variables.
This shows that there is no better solution with right-hand sides of length $\le 2$.
However, using longer rules we can simultaneously reduce both the weight (in a contracting fashion) and the length.

\subparagraph{Computing the base SLP}

Let $s = a_1 \dots a_n$ be a weighted string.
First we recursively construct a contracting ``base'' SLP $\B = (\V,\Sigma,\R,S)$ for $s$.
It will have the additional property of being {\em left-heavy}, i.e.\ for every rule $A \to \beta_1 \dots \beta_k$
and all $2 \le i \le k$ with $\beta_i \in \V$ we have $\|\beta_1 \dots \beta_{i-1}\| \ge \|\beta_i\|$.
Let us emphasize that the condition does not apply when $\beta_i$ is a terminal symbol.
The case $n = 1$ is clear.
If $n > 1$ we factorize $s = u a_i v$ such that $u,v \in \Sigma^*$
have weight at most $\|s\|/2$.
Next factorize $v = v_1v_2$ such that $|v_1|$ and $|v_2|$ differ at most by one.
We add the rule $S \to Ua_iV_1V_2$ to the SLP,
possibly omitting variables if some of the strings $u, v_1, v_2$ are empty.
Finally, we recursively define the variables $U$, $V_1$ and $V_2$.

The SLP $\B$ is clearly contracting, has at most $n$ variables, since every variable can be identified with
the unique symbol $a \in \Sigma$ on its right-hand side, and its right-hand sides have length at most 4.
Notice that the rule $S \to Ua_iV_1V_2$ is left-heavy since $\|ua_i\| > \|s\|/2 \ge \|v_1\|+\|v_2\|$.

\begin{lemma}
	The base SLP $\B$ can be computed in linear time from $s$.
\end{lemma}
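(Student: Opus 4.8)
The plan is to turn the recursive definition of $\B$ into a concrete linear-time algorithm. First I would precompute the vector of prefix weights $W[j] = \|a_1 \dots a_j\|$ for $0 \le j \le n$ in a single left-to-right pass; this takes $\O(n)$ time and allows the weight $\|a_i \dots a_j\| = W[j] - W[i-1]$ of any factor to be read off in constant time. Then I would implement the recursion as a procedure $\mathsf{build}(i,j)$ that constructs the variable for the factor $a_i \dots a_j$: it must locate the split point of the heavy symbol, i.e.\ the unique index $p$ with $i \le p \le j$ such that the prefix $a_i \dots a_{p-1}$ and the suffix $a_{p+1} \dots a_j$ each have weight at most $(W[j]-W[i-1])/2$, emit the rule $S \to U\,a_p\,V_1 V_2$ (omitting empty parts), split the suffix into two nearly-equal-length halves, and recurse on the (at most) three subintervals.

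The key step is finding the split index $p$ within the total budget of $\O(n)$ over the whole recursion, rather than $\O(\log)$ per call, which would cost $\O(n\log n)$ overall. The crucial structural observation is that $p$ is a "median" position: it is the first index whose prefix weight $W[p] - W[i-1]$ exceeds half the total. Over the entire recursion the subintervals at any fixed depth are disjoint, but a binary search inside each would still be too expensive. Instead I would locate $p$ by a two-pointer scan that walks inward from both ends of $[i,j]$ simultaneously, advancing whichever pointer lies on the lighter side, and stopping when the pointers meet; the meeting point is exactly the heavy symbol (or, if no symbol is heavy, any split point into two halves of weight $\le$ half works and the inward scan still finds a valid $p$). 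The cost of this scan is proportional to $\min(p-i,\ j-p)+1$, i.e.\ proportional to the size of the smaller of the two recursive subintervals containing symbols strictly before/after $p$. Since $a_p$ is removed at this level, each recursion step pays for the smaller side, which is the standard "smaller half" charging argument: summed over the whole recursion tree this telescopes to $\O(n \log n)$ in the worst case — so a pure inward scan is not quite enough.

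The hard part, then, is shaving this to truly linear time, and I expect the fix to be one of two standard tricks. Either (a) observe that we do not actually need the heavy symbol's exact position via scanning: because the recursion on the suffix already splits by \emph{length} into near-equal halves, the recursion depth is $\O(\log n)$ and at each of the $\O(\log(j-i))$ levels of a single call's subtree we only need $\O(1)$ prefix-weight lookups if we locate $p$ by binary search on $W$ — but binary search is $\O(\log n)$ per call and there can be $\Omega(n)$ calls. Or (b), the clean route: charge the cost of locating $p$ in the call on $[i,j]$ to the symbol $a_p$ itself, which is consumed at this node and never appears in any recursive subcall; if we can locate $p$ in $\O(1)$ amortized time then we are done. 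This is achievable by processing the whole construction \emph{bottom-up / in one sweep}: maintain the interval endpoints implicitly and, since each symbol $a_p$ is "used up" exactly once as the distinguished middle symbol of exactly one rule, and the length-halving guarantees the recursion tree has $\O(n)$ nodes of total size $\sum (\text{interval length}) = \O(n\log n)$... I would instead argue linearity directly: the number of variables is $\le n$ (each identified with its distinguished symbol), each rule has constant length, and the split index $p$ for the node owning interval $[i,j]$ is found by an inward scan whose length is bounded by the length of the \emph{shorter} suffix/prefix piece; crucially the length-balanced split of $v$ means the two suffix children have nearly equal length, so the inward scan from the right meets $p$ after $\O(|v_1|)=\O(|v_2|)$ steps, and this cost can be charged to the smaller suffix child — a potential/amortization argument then gives the total $\O(n)$. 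I would present this charging carefully, noting that it is the length-balancing of $v=v_1v_2$ (not weight) that makes the recursion tree shallow and the smaller-half charging valid, and flag that without the length-balancing clause the bound would degrade.
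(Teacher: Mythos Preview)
Your proposal correctly identifies the setup (prefix sums, locate the split index $p$, recurse) and correctly identifies that the only nontrivial cost is locating $p$. But it never actually lands on a linear-time method, and the final charging claim is wrong.

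Your scan from the right (or the two-pointer scan) costs $\Theta(|v|+1) = \Theta(|v_1|+|v_2|+1)$ per call, and charging $|v_1|$ to $V_1$ and $|v_2|$ to $V_2$ gives a total of $\sum_V |V|$ over all nodes $V$ that arise as a $V_1$- or $V_2$-child. This sum is $\Theta(n\log n)$ in the worst case, not $\O(n)$. A concrete bad instance: take weights $\|a_j\| = 2^{n-j}$. Then at every call the prefix $u$ is empty, the split is at the leftmost symbol, and $v$ is the entire remaining suffix; the scan from the right costs $|v| \approx$ the current interval length. The recurrence becomes $T(n) = (n-1) + T(\lceil (n-1)/2\rceil) + T(\lfloor (n-1)/2 \rfloor)$, i.e.\ $T(n) = \Theta(n\log n)$. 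Your ``potential/amortization argument then gives the total $\O(n)$'' is asserted but cannot hold for a linear scan; the length-balancing of $v$ into $v_1,v_2$ does not help here because it still leaves $|v|$ as the per-call cost.

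The paper's fix is exactly the one you circled around in option~(a) but dismissed too quickly: locate $p$ by \emph{exponential search from the right end} of the interval, not by a linear scan and not by a full binary search over the whole interval. This costs $\O(1+\log|v|)$ per call (test whether the suffix of length $2^k$ exceeds half the weight for $k=0,1,\dots$, then binary-search inside the last doubling window). Now the cost charged to each light child $V_1,V_2$ is $\O(\log|V_i|)$ rather than $|V_i|$. Since the light nodes whose leaf size lies in $[2^{k-1}+1,2^k]$ form an antichain in the derivation tree, there are at most $n/2^{k-1}$ of them, each contributing $\O(k)$; summing $\sum_k k\cdot n/2^k = \O(n)$ gives the linear bound. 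The essential missing idea in your write-up is that the per-call search must be logarithmic in $|v|$, not linear in it, and this is what makes the light-node charging converge.
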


\begin{proof}
	For the given weighted string $s = a_1 \dots a_n$ we precompute all prefix sums $\|a_1 \dots a_i\|$,
	which allows us to compute the weight of any factor $\|a_i \dots a_j\|$ in constant time.
	The weighted split point $i$ for $s = a_1 \dots a_n$ can be found using exponential search
	in time $\O(\log (n-i+1))$:
	We compute $\|a_{n-2^k+1} \dots a_n\|$ for $k = 0, 1, \dots$
	until it exceeds $\|s\|/2$, which happens when $k \ge \log(n-i+1)$.
	Then we know that the split point $i$ lies in the interval $[n-2^k+1,n-2^{k-1}]$ of length $2^{k-1}$.
	We can search for the split point in that interval using binary search in time $\O(k) = \O(\log(n-i+1))$.
	Finally we compute the factorization $s = u a_i v_1 v_2$
	and continue recursively with $u$, $v_1$ and $v_2$ (if they are nonempty).
	
	We claim that the running time is $\O(n)$.
	Consider the derivation tree $D$ of $\B$ where each node has up to four children.
	Consider a rule $A \to U a V_1 V_2$ in $\B$ where some of the variables may be missing.
	The time spent at node $A$ is $\O(1 + \log (|V_1|+|V_2|))$.
	We charge cost $\log |V_1|$ to $V_1$ and cost $\log |V_2|$ to $V_2$.
	Notice that the nodes $V_1, V_2$ are always {\em light},
	i.e.\ their leaf size (number of descendant leaves) is at most half the leaf size of their parent node $A$.
	Then the total running time is $\O(n + \sum_A \log |A|)$
	where $A$ ranges over all light nodes in $D$.
	Observe that for every $k$ the set of light nodes $A$ whose leaf size is in $[2^{k-1}+1, 2^k]$
	forms an antichain in the derivation tree and hence there exist at most $n/2^k$ such nodes
	with total cost $\O(k \cdot n/2^k)$.
	The total running time amounts to $\O(\sum_{k=0}^\infty k \cdot n/2^k) = \O(n)$.
\end{proof}

\subparagraph{Defining left branching strings}

Consider the derivation tree $D$ of $\B$.
We identify its nodes with the set of symbols $\S = \V \cup \Sigma$.
Let $\preceq_D$ and $\prec_D$ be the ancestor and the proper ancestor relation on $\S$.
For all $\alpha \preceq_D \beta$ we define $\lft(\alpha,\beta) = u$
where $\alpha \Rightarrow^*_\B u \beta v$ is the unique derivation with $u,v \in \Sigma^*$.
In the derivation tree $\lft(\alpha,\beta)$ is the string that branches off to the left on the path from $\alpha$ to $\beta$.
We have the property
\begin{equation}
	\label{eq:transitive}
	\lft(\alpha,\gamma) = \lft(\alpha,\beta) \, \lft(\beta,\gamma)
\end{equation}
whenever $\alpha \preceq_D \beta \preceq_D \gamma$.
Notice that every proper nonempty prefix of $s$ can be written as $\lft(S,a_i) = a_1 \dots a_{i-1}$.
For all $\alpha \in \S \setminus \{S\}$ with parent node $\alpha'$
we define the {\em left sibling string} $\lsib(\alpha) = u$ where $\alpha' \to u \alpha v$ is the unique rule in $\B$.
It satisfies
\begin{equation}
	\label{eq:lsib-lft}
	\lsib(\alpha) \Rightarrow_\B^* \lft(\alpha',\alpha).
\end{equation}
Notice that we can have $\lft(\alpha,\beta) = \lft(\alpha',\beta')$
for different pairs $(\alpha,\beta)$, $(\alpha',\beta')$.
For a unique description we restrict to a subset of nodes in the derivation tree $D$.
Let $\S_0 \subseteq \S$ be the set of nodes which are not a left-most child in $D$,
i.e.\ symbols $\alpha$ such that $\alpha = S$ or $\lsib(\alpha) \neq \varepsilon$.
In particular the start variable $S$ belongs to $\S_0$.
Observe that $\lft(\alpha,\beta) = \lft(\alpha',\beta')$
where $\alpha'$ and $\beta'$ are the lowest ancestors of $\alpha$ and $\beta$, respectively, that belong to $\S_0$.
In particular, every proper nonempty prefix of $s$ is of the form $\lft(\alpha,\beta)$
for some $\alpha,\beta \in \S_0$.
Let $D_0$ be the unique unordered tree with node set $\S_0$
whose ancestor relation is the ancestor relation of $D$ restricted to $\S_0$.
\Cref{fig:base-slp} shows an example of a tree $D$ with the modified tree $D_0$.

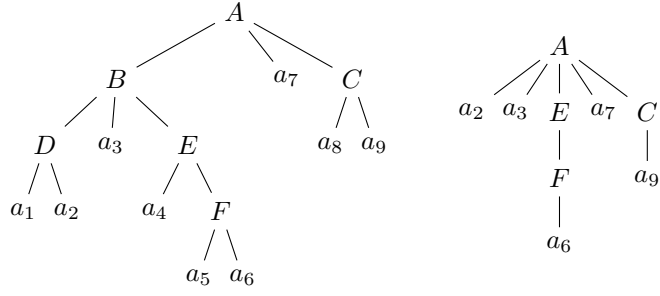
\begin{figure}

\tikzset{every tree node/.style={minimum size = 5pt}}
\tikzset{level distance = 25pt, sibling distance = 0pt}
\tikzset{
edge from parent/.style=
         {draw, edge from parent path={(\tikzparentnode) -- (\tikzchildnode)}}}
\centering
\raisebox{-0.5\height}{
\begin{tikzpicture}
\Tree [.{$A$} [.{$B$} [.{$D$} [.{$a_1$} ] [.{$a_2$} ] ] [.{$a_3$} ]  [.{$E$} [.{$a_4$} ] [.{$F$} [.{$a_5$} ] [.{$a_6$} ] ] ] ]
[.{$a_7$} ]
[.{$C$} [.{$a_8$} ] [.{$a_9$} ] ] ] ]
\end{tikzpicture}
}\hspace{1em}
\raisebox{-0.5\height}{
\begin{tikzpicture}
\Tree [.{$A$} [.{$a_2$} ] [.{$a_3$} ] [.{$E$} [.{$F$} [.{$a_6$} ] ] ] [.{$a_7$} ] [.{$C$} [.{$a_9$} ] ] ]
\end{tikzpicture}
}

\caption{The derivation tree $D$ of a base SLP and the modified tree $D_0$ containing all symbols
which are not a left-most child in $D$.}
\label{fig:base-slp}

\end{figure}

We will introduce variables $L_{\alpha,\beta}$ for the strings $\lft(\alpha,\beta)$.
The variable $L_{\alpha,\beta}$ can be defined using $L_{\alpha',\beta'}$
where $\alpha'$ is a child of $\alpha$ in $D_0$ and $\beta'$ is the parent of $\beta$ in $D_0$.
To achieve the $\O(n)$ bound we will restrict to variables $L_{\alpha,\beta}$
that are used in the derivation of a prefix variable,
namely
\[
	\L = \{ L_{\alpha,\beta} \mid \alpha,\beta \in \S_0, \, \alpha \prec \beta, \, \level(\alpha) \le \height(\beta) \}.
\]
Here $\level(\alpha)$ refers to the length of the path in $D_0$ from the root $S$ to $\alpha$,
and $\height(\beta)$ is the height of the subtree of $D_0$ below $\beta$.

\begin{lemma}
	We can compute in linear time a contracting SLP $\G = (\V \cup \L, \Sigma, \R \cup \Q, S)$
	with right-hand sides of constant length
	such that $\val{L_{\alpha,\beta}} = \lft(\alpha,\beta)$ for all $L_{\alpha,\beta} \in \L$.
\end{lemma}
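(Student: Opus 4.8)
The plan is to define, for every admissible pair $(\alpha,\beta) \in \S_0^2$ with $\alpha \prec \beta$ and $\level(\alpha) \le \height(\beta)$, a variable $L_{\alpha,\beta}$ deriving $\lft(\alpha,\beta)$, using a divide-and-conquer rule that splits the path from $\alpha$ to $\beta$ at a carefully chosen intermediate node. Concretely, given $L_{\alpha,\beta}$, pick a node $\gamma$ on the $D_0$-path strictly between $\alpha$ and $\beta$ and use \eqref{eq:transitive} to write $\lft(\alpha,\beta) = \lft(\alpha,\gamma)\,\lft(\gamma,\beta)$, so the rule is $L_{\alpha,\beta} \to L_{\alpha,\gamma}\,L_{\gamma,\beta}$; the base cases, where the path has length one, are handled by unfolding one $\lsib$-string via \eqref{eq:lsib-lft}, i.e.\ $L_{\alpha,\beta} \to$ (the right-hand side corresponding to $\lsib(\beta)$, read in $\S_0$-contracted form). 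The two design goals are (i) the rule must be \emph{contracting}, i.e.\ both $\|\lft(\alpha,\gamma)\|$ and $\|\lft(\gamma,\beta)\|$ are at most $\|\lft(\alpha,\beta)\|/2$, and (ii) the total number of distinct variables appearing is $\O(n)$. I expect a length-two right-hand side does not suffice to meet (i) for the base-case prefixes (as the $2^{n-i}$-weight example in the discussion shows), so I would allow the base-case rules to have the constant length inherited from $\B$'s right-hand sides, padded out to the uniform bound $10$ stated in \Cref{thm:prefix-slp}; the recursive rules stay length two.

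First I would address contractibility. The weight $\|\lft(\alpha,\beta)\| = \|u\|$ where $\alpha \Rightarrow^* u\beta v$. Since $\B$ is \emph{left-heavy}, moving down one level from a node to a non-leftmost child $\beta_i$ contributes a left-branching string $\beta_1 \cdots \beta_{i-1}$ of weight $\ge \|\beta_i\|$, hence $\ge \|\beta_i\cdots\beta_k\|/2$ — meaning the left-branching weight accumulated so far is comparable to the "remaining" weight below. Tracking this carefully along the $D_0$-path from $\alpha$ to $\beta$, the cumulative left-branching weight $\|\lft(\alpha,\cdot)\|$ grows geometrically enough that there is an intermediate node $\gamma$ with both $\|\lft(\alpha,\gamma)\| \le \|\lft(\alpha,\beta)\|/2$ and $\|\lft(\gamma,\beta)\| \le \|\lft(\alpha,\beta)\|/2$; the natural choice is the first node $\gamma$ on the path at which $\|\lft(\alpha,\gamma)\|$ exceeds half the total. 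One then verifies that the step from $\gamma$'s predecessor to $\gamma$ adds at most $\|\lft(\alpha,\beta)\|/2$ (because, by left-heaviness, a single $D_0$-edge's left-sibling weight is bounded by the weight of everything strictly to its left in its rule, which at that point is still $\le \|\lft(\alpha,\beta)\|/2$). This is the step I expect to be the main obstacle — making the "first node past halfway" argument give a clean factor-$2$ bound on \emph{both} halves simultaneously, using only left-heaviness of $\B$ and not any global balance; it may require choosing $\gamma$ slightly differently (e.g.\ the last node before the left-branching weight would exceed half, or splitting inside a single rule's right-hand side when one $D_0$-edge is itself too heavy) and combining with the constant-length-rule escape hatch.

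Next I would bound the number of variables. The recursion on $L_{\alpha,\beta}$ spawns $L_{\alpha,\gamma}$ and $L_{\gamma,\beta}$, and one must check these again lie in $\L$, i.e.\ satisfy $\level(\cdot) \le \height(\cdot)$; this is where the restriction in the definition of $\L$ pays off, since $\gamma$ sits between $\alpha$ and $\beta$ so $\level(\alpha) \le \level(\gamma)$ and $\height(\gamma) \le \height(\beta)$, and one picks $\gamma$ to keep both inequalities. To count $|\L|$, I would argue as in \cite{GanardiJL19}: for fixed $\ell$, the pairs $(\alpha,\beta)$ with $\level(\alpha) = \ell$ and $\height(\beta) \ge \ell$ are in bijection with, or injectively charged to, leaves or nodes at a controlled depth, so $|\L| = \O(\sum_\beta (\height(\beta)+1))$ telescopes against the leaf-size/antichain bound already used in the base-SLP running-time lemma, giving $\O(n)$. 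Finally I would note that all data — the tree $D_0$, the $\level$ and $\height$ values, prefix-sum weights for the $\lft$-weight computations, and the split points $\gamma$ — are computable in linear time by one bottom-up and one top-down pass over $D_0$, after which every rule is emitted in $\O(1)$ time. Since each right-hand side has length $\le 10$ and every rule is contracting by construction, $\G = (\V \cup \L, \Sigma, \R \cup \Q, S)$ is the desired SLP.
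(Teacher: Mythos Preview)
Your binary split $L_{\alpha,\beta}\to L_{\alpha,\gamma}\,L_{\gamma,\beta}$ cannot work as stated. Since $\|\lft(\alpha,\gamma)\|+\|\lft(\gamma,\beta)\|=\|\lft(\alpha,\beta)\|$ exactly, having \emph{both} summands $\le \|\lft(\alpha,\beta)\|/2$ forces both to equal half, which generically fails; so a length-two recursive rule is never contracting except at a perfect midpoint. You note the escape hatch of longer rules for the base cases, but the same obstruction hits every recursive step, and you give no mechanism for it there.

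There is a second, independent gap in the membership argument. For $L_{\gamma,\beta}\in\L$ you need $\level(\gamma)\le\height(\beta)$, but a weight-based (or path-midpoint) choice of $\gamma$ gives no control over $\level(\gamma)$; it can be arbitrarily close to $\level(\beta)$, far above $\height(\beta)$. (Your sentence ``$\height(\gamma)\le\height(\beta)$'' has the inequality reversed: $\gamma$ is an ancestor of $\beta$, so $\height(\gamma)\ge\height(\beta)$, which helps $L_{\alpha,\gamma}$ but says nothing about $L_{\gamma,\beta}$.) Enlarging $\L$ to all ancestor pairs would destroy the $\O(n)$ size bound.

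The paper's construction sidesteps both issues by \emph{peeling one layer from each end} rather than splitting in the middle: the rule is
\[
L_{\alpha,\beta}\to \lsib(\alpha')\;L_{\alpha',\beta'}\;\lsib(\beta),
\]
where $\alpha'$ is the $D_0$-child of $\alpha$ on the path and $\beta'$ is the $D_0$-parent of $\beta$. Now there is only \emph{one} recursive $L$-variable, so the sum constraint disappears; left-heaviness of $\B$ gives $\|\lft(\alpha',\beta')\|\le\|\alpha'\|\le\|\lsib(\alpha')\|$, hence $2\|\lft(\alpha',\beta')\|\le\|\lft(\alpha,\beta)\|$. And since $\level(\alpha')=\level(\alpha)+1$ while $\height(\beta')\ge\height(\beta)+1$, the constraint $\level\le\height$ is preserved automatically. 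The $\lsib$ strings may contain a heavy $\V$-variable, but that is eliminated by a single expansion. This is the key idea your plan is missing.
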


\begin{proof}
Recall that $\R$ is the set of rules in the base SLP $\B$.
For $\alpha,\beta \in \S_0$ with $\alpha \prec \beta$ and $\level(\alpha) \le \height(\beta)$
we add to $\Q$ the following rule:
\begin{enumerate}[(i)]
\item If $\beta$ is a child of $\alpha$ in $D_0$
then $\lft(\alpha,\beta) = \lft(\alpha',\beta)$ where $\alpha'$ is the parent node of $\beta$ in $D$.
Furthermore $\lsib(\beta) \Rightarrow_\B^* \lft(\alpha',\beta)$ by definition of $\lsib$, and hence
we add the rule $L_{\alpha,\beta} \to \lsib(\beta)$.
\label{item:next-ancestor}
\item If $\beta$ is a child of a child $\gamma$ of $\alpha$ in $D_0$
we add the rule $L_{\alpha,\beta} \to \lsib(\gamma) \, \lsib(\beta)$.
\item Otherwise the path $\pi$ from $\alpha$ to $\beta$ in $D_0$ has length at least 3.
Let $\alpha'$ be the child of $\alpha$ on $\pi$,
and $\beta'$ be the parent node of $\beta$ in $D_0$.
Observe that $\alpha' \prec \beta'$, $\level(\alpha') = \level(\alpha)+1$
and $\height(\beta') \ge 1+\height(\beta)$.
Therefore $\level(\alpha') \le \height(\beta')$.
We introduce the rule
$L_{\alpha,\beta} \to \lsib(\alpha') \, L_{\alpha',\beta'} \, \lsib(\beta)$. \label{item:far-ancestor}
\end{enumerate}
One can prove $\val{L_{\alpha,\beta}} = \lft(\alpha,\beta)$ for all $L_{\alpha,\beta} \in \L$
by induction on the length of the path from $\alpha$ to $\beta$,
using the facts \eqref{eq:transitive} and \eqref{eq:lsib-lft}.
Observe that the right-hand sides of the rules in $\R$ have length at most 7.
However $\G$ is possibly not contracting.

We first show that $L_{\alpha',\beta'}$ is light in the rule
$L_{\alpha,\beta} \to \lsib(\alpha') \, L_{\alpha',\beta'} \, \lsib(\beta)$ from \eqref{item:far-ancestor}.
Let $\alpha''$ be the parent node of $\alpha'$ in $D$
and consider the rule $\alpha'' \to u \alpha' v$ in $\B$.
Observe that $u$ is nonempty since $\alpha' \in \S_0$.
Since $\alpha' \in \V$ is a variable and $\B$ is left-heavy
we know that $\|\alpha'\| \le  \|u\|$, and as $\|\lft(\alpha',\beta')\| \le \|\alpha'\|$ we get $\|\lft(\alpha',\beta')\| \le  \|u\|$.
Then the claim follows from
\[
	2 \cdot \|\lft(\alpha',\beta')\| \le \|u\| + \|\lft(\alpha',\beta')\| = \|\lft(\alpha'',\beta')\| \le \|\lft(\alpha,\beta)\|.
\]
Hence, all heavy variables in $\G$ must be variables $A \in \V$,
which in turn have contracting rules $A \to u$ in $\R$.
By expanding such heavy variables we obtain a contracting SLP
with right-hand sides of length $7-1+4$. This concludes the proof.
\end{proof}

\subparagraph{Size analysis}
We have seen that $\G$ defines all nonempty prefixes
($S$ derives $s$ and every proper nonempty prefix is defined by some variable $L_{S,a_i}$).
To prove \Cref{thm:prefix-slp} it remains to show that $\G$ has $\O(n)$ variables.
We need the following simple lemma.

\begin{lemma}
	\label{lem:louisa}
	Suppose that $T$ is a tree with $n$ leaves such that the height of every subtree $T'$
	is bounded by $\log_2 m$ where $m$ is the leaf size of $T'$.
	Then $\sum_{v \in V(T)} \height(v) = \O(n)$.
\end{lemma}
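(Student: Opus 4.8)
The plan is to group the nodes of $T$ by their leaf size and bound the contribution of each group separately. Concretely, for every integer $k \ge 0$ let $A_k$ be the set of nodes $v \in V(T)$ whose leaf size $m_v$ satisfies $2^{k} \le m_v < 2^{k+1}$. The two facts I would use are: (1) the nodes in $A_k$ form an antichain in $T$, so since the subtrees rooted at them are leaf-disjoint and each has at least $2^k$ leaves, we have $|A_k| \le n/2^k$; and (2) by the hypothesis, every $v \in A_k$ satisfies $\height(v) \le \log_2 m_v < k+1$. Combining these,
\[
	\sum_{v \in V(T)} \height(v) = \sum_{k \ge 0} \sum_{v \in A_k} \height(v) \le \sum_{k \ge 0} |A_k| \cdot (k+1) \le \sum_{k \ge 0} \frac{n(k+1)}{2^k}.
\]
The series $\sum_{k \ge 0} (k+1)/2^k$ converges to a constant (it equals $4$), so the whole sum is $\O(n)$, as claimed.

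A minor point to get right is the edge case of leaves themselves (leaf size $1$, height $0$), which contribute nothing and sit in $A_0$; and the fact that $A_k$ genuinely partitions $V(T)$ as $k$ ranges over all nonnegative integers, since every node has leaf size $\ge 1$. I also want to double-check the antichain claim: if $u$ were a proper ancestor of $v$ with both in $A_k$, then $u$'s subtree strictly contains $v$'s subtree but also contains at least one more leaf hanging off the path (or $u$ would equal $v$), forcing $m_u > m_v$; however both lie in the half-open interval $[2^k, 2^{k+1})$, so this is not yet a contradiction by itself. The cleaner argument is simply that the subtrees rooted at distinct nodes of an antichain are leaf-disjoint, so even without ruling out ancestor pairs, I can instead take $A_k$ to be a maximal antichain among nodes of leaf size in that range — but it is easier still to just observe directly that the subtrees rooted at \emph{all} nodes with leaf size $\ge 2^k$ that are minimal with this property are leaf-disjoint; any node of $A_k$ lies in exactly one such minimal subtree, and each minimal subtree has fewer than $2^{k+1}$ leaves (its root is in some $A_j$ with $j \ge k$, but its children have leaf size $< 2^k$, so the root's leaf size is $< $ (arity)$\cdot 2^k$)\dots. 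This is getting delicate, so I expect the bookkeeping around exactly how to count $|A_k|$ to be the only real obstacle; everything else is the geometric-series computation above.

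To sidestep the subtlety entirely, the version I would actually write uses a charging scheme: for each node $v$, charge one unit to each of the $\height(v)$ proper ancestors-or-self levels below it — more precisely, for each $v$ and each $j$ with $1 \le j \le \height(v)$, there exists a descendant leaf $\ell$ of $v$ at depth $\ge j$ below $v$; charge the pair $(v,j)$ to that leaf $\ell$ together with the ancestor $w$ of $\ell$ at distance exactly $j$ above $\ell$ (so $v \preceq w$). For a fixed leaf $\ell$, the pairs charged to it are indexed by a node $w$ on the root-to-$\ell$ path and an ancestor $v$ of $w$ with $\height(v) \ge \mathrm{dist}(v,\ell)$; using the hypothesis $\height(v) \le \log_2 m_v$ one shows $\mathrm{dist}(v,\ell) \le \log_2 m_v$, and since the $m_v$ along the path from $\ell$ to the root at least double each step that forces only $\O(\log)$-many viable $v$ per $w$ and $\O(\log)$-many viable $w$, giving $\O(\log^2)$ charge per leaf — which is too weak. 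So I will fall back to the antichain argument but state it carefully: partition $V(T)$ into the levels $A_k$ by $\lfloor \log_2 m_v \rfloor = k$, note each $A_k$ is an antichain because two comparable nodes have strictly different leaf sizes and moreover the \emph{ancestor} has at least double (since between a node and a proper descendant the leaf size can only grow, and the descendant's siblings contribute — here I use that $T$ has no degree-one internal nodes, or absorb a constant factor if it does), hence $|A_k| \le n/2^k$, and conclude with the geometric sum. The degree-one case is the fiddly bit, but in our application $D_0$ has no such nodes after the preprocessing, and in any event a node of leaf size $m$ with a single child of leaf size $m$ can be contracted without changing either side of the inequality.
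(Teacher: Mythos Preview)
Your overall strategy---layer the nodes, bound each layer's size by $n/2^k$, and sum a geometric series---is exactly right, but you are layering by the wrong parameter. The sets $A_k=\{v:2^k\le m_v<2^{k+1}\}$ are \emph{not} antichains in general, even under the hypothesis of the lemma and even when $T$ has no unary internal nodes, and the bound $|A_k|\le n/2^k$ can genuinely fail. A concrete counterexample: let the root $u$ have two children $v,w$; give $v$ four leaf children and $w$ three leaf children. Then $m_u=7$, $m_v=4$, $m_w=3$, the hypothesis holds everywhere ($\height(u)=2\le\log_2 7$, $\height(v)=\height(w)=1$), there are no unary nodes, yet $A_2=\{u,v\}$ contains an ancestor pair and $|A_2|=2>7/4=n/2^2$. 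Your fallback claim that a proper ancestor ``has at least double'' the leaf size is also false in this example, since $m_u=7<8=2m_v$; so neither the antichain assertion nor the doubling assertion can be rescued by ruling out unary nodes or by contracting them.

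The one-word fix is to group by \emph{height} instead of by leaf size. Nodes of height exactly $h$ do form an antichain, because height strictly decreases from parent to child; and by the hypothesis each such node has leaf size at least $2^h$, so there are at most $n/2^h$ of them. Then
\[
\sum_{v\in V(T)}\height(v)=\sum_{h\ge 0} h\cdot\#\{v:\height(v)=h\}\le \sum_{h\ge 0} h\cdot \frac{n}{2^h}\le 2n.
\]
This is precisely the paper's argument (the paper establishes the count $\le n/2^h$ by a short induction on $n$, but the antichain observation above gives it in one line).
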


\begin{proof}
	For $0 \le h \le \height(T)$ let $\#_h(T)$ be the number of nodes in $T$ with height $h$.
	We prove that $\#_h(T) \le n/2^h$ by induction on $n$.
	If $n = 1$ then $T$ only has one node and the statement holds.
	If $n > 1$ consider the subtrees $T_1, \dots, T_k$ rooted in the children of the root of $T$.
	If $h = \height(T) \le \log_2 n$ then $\#_h(T) = 1 \le n/2^h$.
	For $h < \height(T)$ let $n_i$ be the leaf size of the $i$-th child of the root of $T$.
	Then we have $\#_h(T) = \sum_{i=1}^k \#_h(T_i) \le \sum_{i=1}^k n_i/2^h = n/2^h$.
	Therefore
	\[
		\sum_{v \in V(T)} \height(v) = \sum_{h \ge 0} h \cdot \#_h(T) \le n \cdot \sum_{h \ge 0} h \cdot 2^{-h} \le 2n,
	\]
	which concludes the proof.
\end{proof}

\begin{lemma}
	The SLP $\G$ has $\O(n)$ variables.
\end{lemma}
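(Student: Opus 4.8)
The plan is to count the variables $L_{\alpha,\beta} \in \L$ by charging each one to a structural feature of the modified tree $D_0$, and then to bound the total charge using \Cref{lem:louisa}. First I would recall that the variables in $\V$ number at most $n$ (each base variable carries a distinct terminal symbol), so it suffices to bound $|\L|$. Recall
\[
	\L = \{ L_{\alpha,\beta} \mid \alpha,\beta \in \S_0, \ \alpha \prec \beta, \ \level(\alpha) \le \height(\beta) \},
\]
where $\level$ and $\height$ are taken in $D_0$. The key observation is that for a fixed $\beta$, the ancestors $\alpha$ of $\beta$ with $\level(\alpha) \le \height(\beta)$ are exactly the topmost $\height(\beta)+1$ ancestors along the root-to-$\beta$ path; in particular the number of pairs with second coordinate $\beta$ is at most $\min(\level(\beta), \height(\beta)+1) \le \height(\beta)+1$. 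Hence $|\L| \le \sum_{\beta \in \S_0} (\height(\beta)+1) = |\S_0| + \sum_{\beta \in \S_0} \height(\beta)$.

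Next I would apply \Cref{lem:louisa} to the tree $D_0$. For this I need to check its hypothesis: the height of every subtree $T'$ of $D_0$ is bounded by $\log_2 m$ where $m$ is the leaf size of $T'$. This is where the contracting property of the base SLP $\B$ enters. Since $\B$ is contracting, along any root-to-leaf path in its derivation tree $D$ the weight at least halves at every variable step; but I want a statement about \emph{leaf size} in $D_0$, not weight in $D$. The cleanest route is: a subtree of $D_0$ rooted at $\beta$ corresponds to the restriction to $\S_0$ of the subtree of $D$ rooted at $\beta$, and the leaves of this $D_0$-subtree are in bijection with the leaves (terminal symbols) of the $D$-subtree, so the $D_0$-leaf-size of $\beta$ equals the string length $|\beta|$. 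Meanwhile the height of $\beta$ in $D_0$ is at most its height in $D$, and since $\B$ is contracting (weights halve along variable edges, and each edge in $D$ carries weight $\ge 1$, so also lengths halve — wait, I must be careful: contracting controls weight, and terminal symbols can be heavy). The safe argument: contracting means every \emph{variable} child has weight $\le$ half the parent's; since weights are $\ge 1$ and a length-$m$ string has weight $\ge m$... actually I should argue directly about lengths — each variable on a right-hand side has string length $\le |A|/2$ because $\|\beta_i\| \le \|A\|/2$ and for the length function this reads $|\beta_i| \le |A|/2$, so along any path in $D$ the length strictly more than halves at each variable node, giving $\height_D(\beta) \le \log_2 |\beta|$. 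Edges to terminal leaves add one final step but a leaf has height $0$, so the bound $\height_{D_0}(\beta) \le \log_2(\text{leaf size of }\beta)$ holds (taking the subtree at a leaf to have leaf size $1$ and height $0$).

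Having verified the hypothesis, \Cref{lem:louisa} gives $\sum_{\beta \in V(D_0)} \height(\beta) = \O(|\S_0|)$, hence $|\L| = \O(|\S_0|) = \O(n)$ since $\S_0 \subseteq \S = \V \cup \Sigma$ and $|\V| \le n$, $|\Sigma|$-many relevant terminals is $\le n$. Combining, $\G$ has $|\V| + |\L| = \O(n)$ variables. The main obstacle I anticipate is the second paragraph's bookkeeping: correctly matching up \emph{leaf size in $D_0$} with \emph{string length}, and \emph{height in $D_0$} with a logarithm of that — in particular being careful that passing from $D$ to $D_0$ (deleting leftmost-child chains) does not decrease leaf size (it does not, since contracting leftmost children into their non-leftmost descendants preserves the set of terminal leaves) and does not increase height. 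Everything else — the per-$\beta$ counting of valid $\alpha$'s, and invoking \Cref{lem:louisa} — is routine.
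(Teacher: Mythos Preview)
Your overall plan---charge each $L_{\alpha,\beta}$ to $\beta$, bound $|\L| \le \sum_{\beta\in\S_0}(\height(\beta)+1)$, and invoke \Cref{lem:louisa}---matches the paper exactly. The gap is in how you verify the hypothesis of \Cref{lem:louisa}.

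Two of your claims fail. First, the leaves of the $D_0$-subtree at $\beta$ are \emph{not} in bijection with the terminal leaves of the $D$-subtree at $\beta$. By definition $D_0$ only contains nodes that are not leftmost children, so every terminal that happens to be a leftmost child of its parent in $D$ is simply absent from $D_0$; in the paper's \Cref{fig:base-slp} the $D_0$-subtree at $A$ has five leaves while $|A|=9$. Second, ``contracting'' here is with respect to the \emph{weight} function $\|\cdot\|$ on the weighted alphabet of \Cref{thm:prefix-slp}, not the length function. From $\|\beta_i\|\le\|A\|/2$ you cannot conclude $|\beta_i|\le|A|/2$; in particular the leftmost child $U$ in a base rule $A\to U a_i V_1 V_2$ satisfies only a weight bound, and its length can exceed $|A|/2$. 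So neither the leaf-size identification nor the height bound via contracting goes through.

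The paper closes this gap using the specific construction of the base SLP rather than the contracting property alone. It first strips the leaves of $D_0$ (so all remaining nodes are variables), then shows directly that $|B|\le|A|/2$ whenever $B$ is a child of $A$ in $D_0$: if $B$ is one of $V_1,V_2$ this comes from the \emph{length}-balanced split $v=v_1v_2$ in the construction (not from contracting), and if $B$ lies below $U$ one descends the leftmost chain and applies the same argument one level down. This is exactly the structural fact you need but did not supply; the contracting property by itself is too weak to control lengths along leftmost chains.
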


\begin{proof}
	The SLP $\G$ consists of $n$ variables from the base SLP $\B$
	and the variables in $\L$.
	A variable $L_{\alpha,\beta} \in \L$ is uniquely determined by $\beta$ and the level of $\alpha$,
	which is an integer between $0$ and $\height(\beta)$ (height in $D_0$).
	Hence it suffices to show that
	\[
		\sum_{\beta \in \S_0} (\height(\beta) + 1) = \sum_{\beta \in \S_0} \height(\beta) + \O(n) \le \O(n).
	\]
	For the analysis we remove all leaves from $D_0$
	so that all nodes in $D_0$ are variables from $\V$.
	This decreases each term $\height(\beta)$ by exactly one,
	and therefore the sum decreases by at most $|\S_0| = \O(n)$.
	We claim that $|B| \le |A|/2$ whenever $B$ is a child of $A$ in $D_0$.
	Consider the rule $A \to U a_i V_1 V_2$ where some of the variables $U,V_1,V_2$ could be missing.
	Recall that $B$ does not belong to $D_0$ as a left-most child.
	If $B$ is either $V_1$ or $V_2$ then $|A| \ge 1 + (2|B| - 1)$ holds by construction of $\B$.
	If $U \preceq B$ then we obtain $|B| \le |U|/2 \le |A|/2$ inductively.
	Hence $\height(A) \le \log_2 |A|$ and thus, by \Cref{lem:louisa}, we obtain the linear size bound.
\end{proof}

\subsection{Prefixes in trees}

In the light of \Cref{thm:prefix-slp} we can apply \Cref{prop:reduction-to-trees}
to all SLPs $\G$ whose heavy forest is a disjoint union of paths.
This can be easily extended to path-like trees, e.g.\ to {\em caterpillar trees}
where every node has at most one child which is not a leaf.

\begin{proposition}
	\label{prop:path-like}
	Given a labeled caterpillar tree $T$ with $n$ edges and labels of length $\le \ell$,
	one can compute a contracting SLP $\G$ defining all nonempty prefixes in $T$
	such that $\G$ has $\O(n)$ variables and right-hand sides of length $\O(\ell)$.
\end{proposition}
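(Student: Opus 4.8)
The plan is to reduce a caterpillar tree to the path case already handled by \Cref{thm:prefix-slp}. A caterpillar tree $T$ consists of a central \emph{spine} path $v_0, v_1, \dots, v_p$ from the root, together with \emph{pendant leaves} hanging off the spine nodes via single edges. First I would form the spine as a weighted string: for each spine edge I use its single-symbol label (recall the preprocessing that reduces every edge label to one symbol of appropriate weight), obtaining a weighted string $s$ of length $p \le n$. Applying \Cref{thm:prefix-slp} yields a contracting SLP $\G_0$ with $\O(p)$ variables and right-hand sides of length $\le 10$ that defines all nonempty prefixes of $s$; crucially, for every spine node $v_i$ it contains a variable that derives the label of the root-to-$v_i$ path.

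Next I would handle the pendant leaves. A prefix of $T$ ending at a pendant leaf $w$ attached to spine node $v_i$ has the form (label of root-to-$v_i$ path) followed by (the single symbol labeling the edge $v_i \to w$). So for each such leaf $w$ I add one rule $L_w \to P_i\, c_w$, where $P_i$ is the prefix variable for $v_i$ from $\G_0$ and $c_w \in \Sigma$ is the terminal symbol on the edge to $w$. This adds $\O(n)$ new variables and right-hand sides of length $2$. I also keep the spine-prefix variables themselves, since a prefix may end at an internal spine node. Altogether $\G$ has $\O(n)$ variables and right-hand sides of length $\O(1)$, and after the earlier preprocessing step (expanding heavy occurrences of the symbols $X_u$ introduced for multi-symbol edge labels) the right-hand side lengths grow to $\O(\ell)$, as required.

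It remains to check that $\G$ is contracting. The rules inherited from $\G_0$ are contracting by \Cref{thm:prefix-slp}. For a new rule $L_w \to P_i\, c_w$: the symbol $c_w$ is a terminal, and heavy terminal symbols are explicitly permitted in contracting rules, so it imposes no constraint. The only possible heavy \emph{variable} is $P_i$; but $\|P_i\|$ is the weight of the root-to-$v_i$ path, which is strictly less than $\|L_w\| = \|P_i\| + \|c_w\|$, and since $\|c_w\| \ge 1$ while all weights are positive integers, $P_i$ is heavy only in the trivial sense $\|P_i\| > \|L_w\|/2$ — which is fine \emph{unless} we also need $P_i$ to be non-heavy. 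To be safe I would instead expand $P_i$ in place: write the rule for $P_i$ in $\G_0$ as $P_i \to \gamma_1 \cdots \gamma_k$ and replace the new rule by $L_w \to \gamma_1 \cdots \gamma_k\, c_w$. Since the $P_i$-rule of $\G_0$ is contracting, every variable $\gamma_j$ has $\|\gamma_j\| \le \|P_i\|/2 \le \|L_w\|/2$, so $L_w$'s rule is contracting; its length is at most $10+1 = 11$. This one-time expansion per pendant leaf keeps everything linear.

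The main obstacle I anticipate is purely bookkeeping: making sure the reduction from the abstract caterpillar tree to a weighted \emph{string} respects the preprocessing conventions (empty labels contracted, multi-symbol labels replaced by fresh weighted symbols, all symbols distinct) so that \Cref{thm:prefix-slp} applies verbatim, and then confirming that the modest rule-expansions needed to restore the contracting property do not blow up the right-hand side lengths beyond $\O(\ell)$. None of this requires a new idea — the genuine work was already done in \Cref{thm:prefix-slp} — so the proposition follows by a short argument along these lines.
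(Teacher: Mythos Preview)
Your proposal is correct and follows essentially the same approach as the paper: apply \Cref{thm:prefix-slp} to the spine to obtain prefix variables $P_{v_i}$, then for each pendant leaf $w$ attached at $v_i$ introduce a rule $P_w \to P_{v_i}\,\omega(v_i,w)$ and expand $P_{v_i}$ whenever it is heavy. One small remark: your hedging about whether a heavy $P_i$ is ``fine'' is misplaced---a heavy variable on the right-hand side \emph{does} violate the contracting condition by definition---but you then do exactly the right thing (expand it once), so the argument goes through.
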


\begin{proof}
	We can assume that every edge is labeled by a single symbol by the preprocessing from the beginning of \Cref{sec:prefix}.
	Consider the path $(v_0, \dots, v_k)$ in $T$ from the root $v_0$ to a leaf $v_k$
	which contains all inner nodes.
	We apply \Cref{thm:prefix-slp} on the string $\omega(v_0,v_1) \dots \omega(v_{k-1},v_k)$
	and obtain a contracting SLP $\G$ of size $\O(n)$
	with variables $P_{v_1}, \dots, P_{v_k}$ producing the prefixes
	$\omega(v_0,v_1) \dots \omega(v_{i-1},v_i)$.
	Any other prefix defined in some node $v$ is of the form $P_{v_i} \omega(v_i,v)$.
	Hence we introduce a rule $P_v \to P_{v_i} \omega(v_i,v)$ and possibly expand the variable $P_{v_i}$,
	if it is heavy.
	The resulting SLP has $\O(n)$ variables and right-hand sides of constant length.
\end{proof}

In the following we construct an SLP for the prefixes in arbitary trees.
The produced SLP will not be contracting in general, but its heavy forest will be a disjoint union of caterpillar trees.
Put differently, such a caterpillar tree consists of a central path $\alpha_1, \dots, \alpha_m$ such that
every $\alpha_i$ occurs at most once heavily in a rule $A \to u$ where $A$ is heavy, namely $A = \alpha_{i-1}$.

\begin{proposition}
	\label{prop:tree-weak}
	Given a labeled tree $T$ with $n$ edges we can compute an SLP $\G$ defining all nonempty prefixes in $T$
	such that
	\begin{enumerate}[(a)]
	\item $\G$ has $4n$ variables and right-hand sides of length $\le 6$,
	\item the subgraph of $\dag(\G)$ induced by the set of heavy symbols is a disjoint union of paths. \label{cond}
	\end{enumerate}
\end{proposition}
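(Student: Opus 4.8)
The plan is to mimic the base-SLP construction of \Cref{thm:prefix-slp}, but applied to the tree $T$ rather than to a path, and then to define prefix variables analogously to \Cref{prop:path-like}. After the preprocessing at the start of \Cref{sec:prefix} we may assume every edge of $T$ carries a single weighted symbol and all symbols are distinct. First I would build a "base SLP" $\B$ for the tree by a recursive centroid-style decomposition: given a subtree $T'$ with root $r$, pick an edge $e = (x,y)$ on the path from $r$ towards the heaviest leaf such that the portion of the path label above $e$ and the subtree hanging below $y$ each have weight at most $\|T'\|/2$ --- more precisely, split $T'$ at a node so that no piece is heavy, introducing a variable whose rule lists the (reversed) prefix part, the pivot symbol, and a constant number of variables for the remaining lighter subtrees, recursing on each. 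Since $T$ is an arbitrary tree and not a path, the pivot node may have several child subtrees; to keep right-hand sides of length $\le 6$ I would first binarize the branching (standard: replace a node with $c$ children by a small binary tree of auxiliary nodes with $\varepsilon$-labeled edges), so that at each recursion step we split off at most a constant number of pieces. Each piece has weight at most half, so $\B$ is contracting, has $\O(n)$ variables (charge each variable to the pivot symbol it consumes), and constant-length rules; linear-time computability follows by the same prefix-sum plus exponential-search amortization as in the lemma computing the base SLP.

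Next I would introduce, exactly as in \Cref{prop:path-like} and in the left-branching analysis, a prefix variable $P_v$ for each node $v$ of $T$, producing the label of the root-to-$v$ path. The key structural point is that in the derivation tree $D$ of $\B$, the path from the root down to the leaf corresponding to $v$ passes through a sequence of pivot nodes, and $\lft$-style concatenation (\eqref{eq:transitive}) lets us express the root-to-$v$ prefix as a product of a logarithmic number of "left sibling strings" along that path. I would define $P_v$ by a rule of the form $P_v \to \lsib(\alpha') \, L_{\alpha',\beta'} \, \omega(\dots)$ reusing shared $L_{\alpha,\beta}$ variables for common ancestral segments, so that the total number of such variables stays $\O(n)$ by the same counting argument (a variable is determined by $\beta$ and the level of $\alpha$, and $\sum_\beta (\height(\beta)+1) = \O(n)$ via \Cref{lem:louisa}, using that children in the reduced tree have at most half the leaf size). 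Right-hand sides stay of length $\le 6$ after at most one round of expanding a heavy child.

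The crucial new requirement is condition~\eqref{cond}: the subgraph of $\dag(\G)$ induced by the heavy symbols must be a disjoint union of paths. A symbol $\beta$ is heavy in a rule $A \to u$; I must show that in the resulting $\G$ no heavy symbol is the heavy child of two different rules, and no rule has two heavy children --- equivalently the heavy subgraph has in- and out-degree at most one at every node, hence is a union of paths. This is where the structure of the construction has to be designed carefully: the base rules $A \to U a_i V_1 V_2 \dots$ are contracting by construction, so the only heavy children in $\G$ arise from the prefix rules $P_v \to \lsib(\alpha')\, L_{\alpha',\beta'}\, \cdots$, where $L_{\alpha',\beta'}$ may be heavy (but $\lsib(\alpha')$ is light by the left-heavy property, just as in the proof that $L_{\alpha',\beta'}$ is light in rule \eqref{item:far-ancestor}). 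I would arrange that each $L_{\alpha,\beta}$ variable, and each $P_v$, has at most one heavy child, and --- the delicate part --- that a given heavy $L_{\alpha,\beta}$ is referenced as the heavy child of at most one parent, by choosing the "canonical" decomposition of each prefix so that the heavy segment is shared in a tree-like, non-overlapping way along the $D_0$-ancestor chains. The main obstacle I anticipate is exactly this: proving the heavy subgraph is a disjoint union of paths rather than a general forest will require pinning down precisely which $L_{\alpha,\beta}$ occurs heavily in which rule, and showing these occurrences line up into chains; the size bound (a) and the contracting-of-$\B$ parts are routine adaptations of the already-established lemmas.
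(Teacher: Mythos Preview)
Your proposal has a genuine gap at the very first step. The base SLP $\B$ of \Cref{thm:prefix-slp} derives a \emph{single} string $s$; its derivation tree has one leaf per position of $s$, and every prefix is then $\lft(S,a_i)$ for some leaf $a_i$. A labeled tree $T$ has no single string to derive --- it has one root-to-node word per node, and these diverge at every branching point. Your recipe ``split $T'$ at a node so that no piece is heavy, introducing a variable whose rule lists the (reversed) prefix part, the pivot symbol, and a constant number of variables for the remaining lighter subtrees'' never says what string this variable is supposed to produce, and once the pieces are genuine subtrees rather than substrings the entire $\lft/\lsib/L_{\alpha,\beta}$ machinery --- which is defined on the derivation tree of a \emph{string} SLP and relies on \eqref{eq:transitive} --- has no meaning. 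Binarizing $T$ only bounds node degrees; it does not turn $T$ into a string. You also flag condition~\eqref{cond} as ``the main obstacle'' but offer no mechanism that forces the heavy symbols into simple paths; note that in the string case the final SLP is fully \emph{contracting}, so that proof contains no tool for controlling a nontrivial heavy subgraph, which is exactly what is needed here.

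The paper's argument is entirely different and is designed so that \eqref{cond} falls out almost for free. It uses no base SLP and no $L$-variables. Instead it (i) contracts each maximal unary path of $T$ into one edge, introducing chain variables $P_{v_0,v_i}$ with rules $P_{v_0,v_i}\to P_{v_0,v_{i-1}}\,\omega(v_{i-1},v_i)$; (ii) removes leaves to obtain $T'$ with at most $n/2$ edges; (iii) partitions $T'$ into maximal sets sharing a common \emph{peak node} (the highest ancestor with the same rank $\rk(v)$) and recurses on each part, yielding variables $B_{\hat v,v}$; and (iv) defines each prefix variable $A_x$ by a rule of length $\le 6$ of the shape $A_x \to A_s\, P_{s,\hat u}\, B_{\hat u,u}\, P_{u,\hat v}\, B_{\hat v,v}\, P_{v,x}$. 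The peak inequality $d(v)-d(\hat v) < d(\hat v)$ makes every $A$- and $B$-occurrence on a right-hand side light, so the only possibly heavy symbols are the $P$-chain variables and terminals. Since each $P_{v_0,v_i}$ with $i<k$ occurs on exactly one right-hand side and each terminal $P_{v_0,v_k}$ occurs heavily only inside the recursive SLP (handled by induction), every heavy symbol is the heavy child of at most one heavy variable, giving the disjoint-union-of-paths structure.
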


\begin{proof}

We proceed by induction on $n$.
The case $n = 1$ is clear so let us assume a tree $T = (V,E,\omega)$ with $n \ge 2$ edges.
A node $v$ in $T$ is {\em unary} if it has exactly one child.
We partition $E$ into maximal {\em unary paths} $\pi = (v_0, \dots, v_k)$, where $k \ge 1$,
$v_1, \dots, v_{k-1}$ are unary nodes, $v_k$ is not unary and $v_0$ is either not unary or the root.
For every such a path $\pi$ we create an SLP $\G_\pi$ containing the rules
\begin{equation}
	\label{eq:rule-0}
	P_{v_0,v_1} \to \omega(v_0,v_1) \quad \text{and} \quad P_{v_0,v_i} \to P_{v_0,v_{i-1}} \omega(v_{i-1},v_i), \text{ for all } 2 \le i \le k.
\end{equation}
We also introduce the abbreviation $P_{v_0,v_0} := \varepsilon$.
Then we construct a new tree $T'$ from $T$
by contracting every maximal unary path $(v_0, \dots, v_k)$ into a single edge $(v_0,v_k)$
labeled by the variable $P_{v_0,v_k}$
defining the string $\omega(v_0,v_1) \dots \omega(v_{k-1},v_k)$ carrying its weight.
This yields a tree where the only possibly unary node is the root.
Therefore it has at least as many leaves as inner nodes.
We then remove all leaves so that $T'$ has at most $n/2$ many edges.
Let $V' \subseteq V$ be the node set of $T'$.

Let $v$ be a node in $T'$.
Let $d(v)$ be the weight of the path from the root to $v$ in $T'$ or equivalently in $T$.
We define the value $\rk(v) = \inf\{ k \in \Z \mid d(v) \le 2^k \}$.
For the root $v$ with $d(v) = 0$ we have $\rk(v) = -\infty$.
Let $\hat v$ be the highest ancestor of $v$ in $T'$ with $\rk(v) = \rk(\hat v)$,
called the {\em peak node} of $v$.
Notice that the root and its children are peak nodes.
Let $Z$ be a maximal set of nodes in $T'$ with the same peak node,
which is a tree rooted in the common peak node, say $\hat v$.
We apply the construction recursively on each tree $Z$.
Let $\G'$ be the union of all obtained SLPs.
Its terminal symbols are of the form $P_{v,v'}$ where $(v,v')$ is an edge in $T'$.
Furthermore, it has at most $4 \cdot n/2 \le 2n$ variables
and the subgraph of $\dag(\G')$ induced by its heavy symbols is a disjoint union of paths.
For every node $v \in V'$ which is not a peak node $\G'$ contains a variable $B_{\hat v, v}$
where $\val{B_{\hat v, v}}$ is the labeling on the path from $\hat v$ to $v$ in $T'$.
To simplify notation we set $B_{\hat v, \hat v} := \varepsilon$ for all peak nodes $\hat v$.

Let $\G$ be the union of $\G'$ and all SLPs $\G_\pi$,
which has at most $n + 2n = 3n$ variables.
For every $x \in V$ which is not the root we add a variable $A_x$
such that $\val{A_x}$ is the labeling of the path from the root to $x$ in $T$.
This yields $4n$ variables, as claimed.
Let $v$ be the lowest ancestor of $x$ in $T$ contained in $V'$.
If $v$ is the root then we add the rule
\begin{equation}
	\label{eq:rule-1}
	A_x \to P_{v,x}.
\end{equation}
Now assume that $v$ is not the root and hence $\hat v$ is also not the root, since the children of the root are peak nodes.
Let $u$ be the parent node of $\hat v$ in $T'$.
If $u$ is the root of $T'$ we add the rule
\begin{equation}
	\label{eq:rule-2}
	A_x \to P_{u,\hat v} \, B_{\hat v, v} \, P_{v,x}.
\end{equation}
Otherwise, $u$ and $\hat u$ are not the root. Let $s$ be the parent node of $\hat u$ in $T'$ and add the rule
\begin{equation}
	\label{eq:rule-3}
	A_x \to A_s \, P_{s,\hat u} \, B_{\hat u, u} \, P_{u, \hat v} \, B_{\hat v, v} \, P_{v,x}.
\end{equation}
One can prove correctness by induction on $\rk(v)$.
Furthermore all new rules have right-hand sides of length $\le 6$.
We claim that the $A$- and $B$-variables are light in the rules
\eqref{eq:rule-2} and \eqref{eq:rule-3}.
Notice that for all $v \in V'$ which are not the root we have
\begin{equation}
	\label{eq:peak}
	\|B_{\hat v, v}\| = d(v) - d(\hat v) < 2^{\rk(v)} - 2^{\rk(v)-1} = 2^{\rk(v)-1} < d(\hat v).
\end{equation}
In the rules \eqref{eq:rule-2} and \eqref{eq:rule-3} the variable $B_{\hat v, v}$ is light
since $2\|B_{\hat v, v}\| < d(\hat v)+\|B_{\hat v, v}\| = \|A_v\| \le \|A_x\|$ by \eqref{eq:peak}.
Similarly, $B_{\hat u, u}$ is light in \eqref{eq:rule-3} because
$2\|B_{\hat u, u}\| < d(\hat u)+\|B_{\hat u, u}\| = \|A_u\| \le \|A_x\|$ by \eqref{eq:peak}.
Finally, $A_s$ is light in \eqref{eq:rule-3} because $\|A_s\| \le 2^{\rk(s)} \le 2^{\rk(v)-2} < \|A_v\|/2 \le \|A_x\|/2$.
This concludes the proof of the claim.

Hence the only possibly heavy symbols in $\G$ are the $P$-variables, the terminal symbols from $\Sigma$
in rules of the form \eqref{eq:rule-0} and the heavy symbols in $\G'$.
It remains to prove \eqref{cond}, i.e.\ every heavy symbol is the heavy child of at most one heavy variable $A$.
Consider a maximal unary path $\pi = (v_0, \dots, v_k)$.
A variable $P_{v_0,v_i}$ where $1 \le i \le k-1$ occurs on the right-hand side of exactly one rule, namely \eqref{eq:rule-0}.
The variable $P_{v_0,v_k}$ can occur on the right-hand sides of \eqref{eq:rule-1}, \eqref{eq:rule-2} and \eqref{eq:rule-3},
but the corresponding left-hand side $A_x$ is not heavy.
By induction hypothesis $P_{v_0,v_k}$ is the heavy child of at most one heavy variable in $\G'$.
Finally, each terminal symbol occurs exactly once on the right-hand side of a rule \eqref{eq:rule-0}.
This concludes the proof.
\end{proof}

\begin{proof}[Proof of \Cref{thm:tree-slp}]
Given a tree $T$ with $n$ edges labeled by single symbols,
we first compute the SLP $\G$ from \Cref{prop:tree-weak}.
Then we compute its left labeled and right labeled heavy trees, which are caterpillar trees
with $\O(n)$ edges and right-hand sides of constant length.
\Cref{prop:path-like} yields contracting SLPs for their prefixes
with $\O(n)$ variables and constant-length right-hand sides.
Hence, by \Cref{prop:reduction-to-trees} we can construct a contracting SLP
defining all strings from $\G$, with $\O(n)$ variables and constant-length right-hand sides.
\end{proof}

\section{Navigation in FSLP-compressed trees}

As a simple application we extend the navigation data structure on FSLP-compressed trees \cite{RehS20}
by the operation which moves to the $i$-th child in time $\O(\log d)$ where $d$ is the degree of the current node.
This is established by applying \Cref{thm:contracting} to the substructure of the FSLP that compresses forests horizontally.

\subparagraph{SLP navigation}
The navigation data structure on FSLPs is based on a navigation data structure on (string) SLPs from \cite{LohreyMR18},
which extends the data structure from \cite{GasieniecKPS05} from one-way to two-way navigation.
The data structure represents a position $1 \le i \le |A|$ in a variable $A$ by a data structure $\sigma(A,i)$,
that we will call {\em pointer},
which is a compact representation of the path in the derivation tree from $A$ to the leaf corresponding to position $i$.
\begin{theorem}[\cite{LohreyMR18}]
\label{thm:slp-navi}
A given SLP $\S$ can be preprocesed in $\O(|\S|)$ time and space so that the following operations are supported in constant time:
\begin{itemize}
\item Given a variable $A$, compute $\sigma(A,1)$ or $\sigma(A,|A|)$.
\item Given $\sigma(A,i)$, compute $\sigma(A,i-1)$ or $\sigma(A,i+1)$, or return $\bot$ if the position is invalid.
\item Given $\sigma(A,i)$, return the symbol at position $i$ in $A$.
\end{itemize}
Furthermore, a single pointer $\sigma(A,i)$ uses $\O(\height(A))$ space and can be computed in time $\O(\height(A))$ for a given pair $(A,i)$.
\end{theorem}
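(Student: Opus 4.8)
The plan is to recover the construction of \cite{GasieniecKPS05,LohreyMR18}. First I would bring $\S$ into Chomsky normal form in linear time, so that every rule is $A \to BC$ or $A \to a$ (this changes each $\height(A)$ only by a constant factor for SLPs with right-hand sides of bounded length, which is what the applications need). In one bottom-up pass over $\dag(\S)$ I would compute $|B|$ for every variable $B$. Next, for every symbol $B$ I would compute its \emph{left spine} $B = B_0, B_1, B_2, \dots$, where $B_{j+1}$ is the left child of $B_j$ and the spine ends at a terminal symbol (the leftmost leaf of $B$), and symmetrically its \emph{right spine}. Storing these as persistent singly linked lists ordered from the leaf upwards, the left spine of $B$ is obtained from the left spine of the left child of $B$ by prepending $B$, so processing variables in topological order yields all left spines and all right spines in total time and space $\O(|\S|)$, with tails shared across symbols; I would also keep for each spine direct access to its bottom terminal and to its node at any prescribed depth.

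The pointer $\sigma(A,i)$ represents the length-$\O(\height(A))$ path from the root of the derivation tree of $A$ to its $i$-th leaf, but stored compactly: I would cut this path into its maximal blocks of consecutive left-turns and of consecutive right-turns, and encode each block as a triple (starting symbol $X$, direction, block length) pointing into the precomputed left or right spine of $X$. The resulting strictly alternating list of blocks, held from the bottom upwards as a persistent list with constant-size entries, therefore uses $\O(\height(A))$ space; for a given pair $(A,i)$ it is built in $\O(\height(A))$ time by descending from $A$, at each node comparing $i$ with the length of the left child to choose the turn and merging equal consecutive turns on the fly. Reporting the symbol at position $i$ from $\sigma(A,i)$ is then immediate from the bottom terminal of its bottom block, and $\sigma(A,1)$ is just the single block ``follow the left spine of $A$ to its end'' (and $\sigma(A,|A|)$ the analogous right-spine block), all in constant time.

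For the operation computing $\sigma(A,i+1)$ from $\sigma(A,i)$, the key point is that the maximal suffix of right-turns at the bottom of the path is already stored as a single bottom block: I would pop that block if it is a right-block (and return $\bot$ when nothing remains, i.e.\ $i=|A|$), leaving a pointer whose bottom block is a left-block that takes a left-turn at some variable $B$; then I shorten that block by its last step, append a single right-turn at $B$, and append the left-spine block of the right child of $B$, merging the topmost two blocks if they become equidirectional. Each such update touches only $\O(1)$ list cells, all freshly allocated so the old pointer stays valid, and ``go to $i-1$'' is symmetric with left and right exchanged. The main obstacle is exactly this last step: verifying that the alternating block structure is preserved and handling the degenerate cases cleanly --- blocks of length one, a block that shrinks to length zero and must be deleted, the ensuing merge of adjacent equidirectional blocks, and the boundary cases where the move leaves $\val{A}$ at either end --- which is the combinatorial core that \cite{GasieniecKPS05} establishes for one-way traversal and that \cite{LohreyMR18} extends to the two-way setting. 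With the $\O(|\S|)$ preprocessing above, every listed operation then runs in constant time.
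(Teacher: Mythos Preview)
The paper does not give its own proof of \Cref{thm:slp-navi}; the theorem is quoted from \cite{LohreyMR18} (building on \cite{GasieniecKPS05}) and used as a black box, so there is nothing in the paper to compare your sketch against. That said, your reconstruction is faithful to the approach of those references: precompute shared left/right spine lists in $\O(|\S|)$ space, represent $\sigma(A,i)$ as the root-to-leaf path compressed into an alternating stack of maximal left/right blocks, and implement $\sigma(A,i\pm 1)$ by popping the bottom block, shortening the next one by a single turn, and pushing the opposite-direction spine.

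Two small points to tighten. First, your sentence ``persistent singly linked lists ordered from the leaf upwards, the left spine of $B$ is obtained from the left spine of the left child of $B$ by prepending $B$'' is internally inconsistent: with the leaf at the head, $B$ would have to be appended, which does not share; you presumably mean head $=$ root with shared tails. Second, and more substantively, the step ``shorten that block by its last step'' followed by ``append a single right-turn at $B$'' requires you to \emph{name} $B$, the node at depth $k-1$ on $X$'s left spine. Because spines share tails in a DAG, a variable can have several left-parents, so this is not a back-pointer in your singly linked list; what you call ``direct access to its node at any prescribed depth'' is exactly a level-ancestor query in the left-child (resp.\ right-child) forest. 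That is indeed $\O(1)$ per query after $\O(|\S|)$ preprocessing, but you should say so rather than leave it implicit.
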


\subparagraph{Forest straight-line programs}
In this section we use the natural term representation for forests.
Let $\Sigma$ be an alphabet of node labels.
The set of {\em forests} is defined inductively as follows:
The concatenation of $n \ge 0$ forests is a forest (this includes the empty forest $\varepsilon$),
and, if $a \in \Sigma$ and $t$ is a forest, then $a(t)$ is a forest.
A {\em context} is a forest over $\Sigma \cup \{x\}$ where $x$ occurs exactly once
and this occurrence is at a leaf node.
If $f$ is a context and $g$ is a forest or a context then $f \langle g \rangle$ is
obtained by replacing the unique occurrence of $x$ in $f$ by $g$.
A {\em forest straight-line program (FSLP)} $\G = (\V_0,\V_1,\Sigma,\R,S)$
consists of finite sets of forest variables $\V_0$ and context variables $\V_1$, the alphabet $\Sigma$,
a finite set of rules $\R$, and a start variable $S \in \V_0$.
The rules contain arbitrary applications of horizontal concatenation and substitutions of forest and context variables.
We restrict ourselves to rules in a certain {\em normal form}, which can be established in linear time with a constant factor size increase \cite{GasconLMRS20}.
The normal form assumes a partition $\V_0 = \V_0^\top \cup \V_0^\bot$
where $\V_0^\bot$-variables produce trees whereas $\V_0^\top$-variables produce forests with arbitrarily many trees.
The rules in $\R$ have one of the following forms:
\begin{align*}
A & \to \varepsilon & \text{ where } & A \in \V_0^\top, & \\
A & \to BC & \text{ where } & A \in \V_0^\top \text{ and } B,C \in \V_0, & \\
A & \to a(B) & \text{ where } & A \in \V_0^\bot, \, a \in \Sigma, \text{ and } B \in \V_0, & \\
A & \to X\langle B \rangle & \text{ where } & A,B \in \V_0^\bot \text{ and } X \in \V_1, & \\
X & \to Y\langle Z \rangle & \text{ where } & X,Y,Z \in \V_1 & \\
X & \to a(L x R) & \text{ where } & X \in \V_1, \, a \in \Sigma, \text{ and } L,R \in \V_0, &
\end{align*}

Every variable $A \in \V_0$ derives a forest $\val{A}$ and
every variable $X \in \V_1$ derives a context $\val{X}$, see \cite{GasconLMRS20} for formal definitions.
An example FSLP for a tree is shown in \Cref{fig:fslp-tree}.
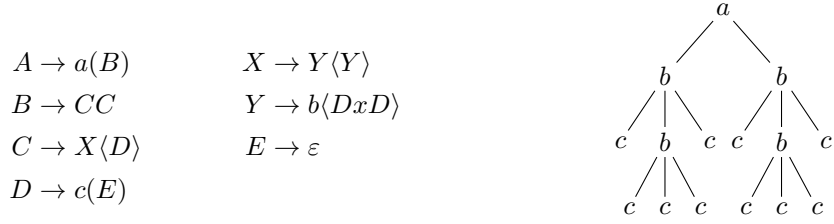
\begin{figure}

\tikzset{every tree node/.style={minimum size = 5pt, inner sep = 2pt}}
\tikzset{level distance = 25pt}
\tikzset{level 2/.style={sibling distance=-5pt}}
\tikzset{level 3/.style={sibling distance=5pt}}
\tikzset{edge from parent/.style={draw, edge from parent path={(\tikzparentnode) -- (\tikzchildnode)}}}
\centering
\begin{minipage}{0.5\textwidth}
\centering
\begin{align*}
A &\to a(B) & X &\to Y\langle Y \rangle \\
B &\to CC & Y &\to b\langle DxD \rangle \\
C &\to X\langle D \rangle & E &\to \varepsilon \\
D &\to c(E)
\end{align*}
\end{minipage}
\begin{minipage}{0.4\textwidth}
\centering
\begin{tikzpicture}
\Tree [.{$a$} [.{$b$} [.{$c$} ] [.{$b$} [.{$c$} ] [.{$c$} ] [.{$c$} ] ]  [.{$c$} ] ] [.{$b$} [.{$c$} ] [.{$b$} [.{$c$} ] [.{$c$} ] [.{$c$} ] ]  [.{$c$} ] ] ]
\end{tikzpicture}
\end{minipage}

\caption{An example FSLP with the variables $\V_0^\bot = \{A,C,D\}$, $\V_0^\top = \{B,E\}$ and $\V_1 = \{X,Y\}$. The tree defined by $A$ is displayed on the right.}
\label{fig:fslp-tree}

\end{figure}

The normal form allows us to define two string SLPs (without start variables) that capture the horizontal and the vertical compression in $\G$.
The {\em rib SLP} $\G_\boxminus = (\V_0,\Sigma_\boxminus,\R_\boxminus)$
over the alphabet $\Sigma_\boxminus = \{ \underline{A} \mid A \in \V_0^\bot \}$
contains all rules of the form $A \to \varepsilon$ or $A \to BC$ from $\R$ where $A \in \V_0^\top$,
and the rule $A \to \underline{A}$ for all $A \in \V_0^\bot$.
We write $\val{A}_\boxminus = \underline{A_1} \dots \underline{A_n}$ for the string derived by $A$ in $\G_\boxminus$,
which satisfies $\val{A} = \val{A_1} \dots \val{A_n}$.
In the example of \Cref{fig:fslp-tree} we have $\val{B}_\boxminus = \underline{C} \, \underline{C}$.
The {\em spine SLP} $\G_\boxbar = (\V_0^\bot \cup \V_1,\Sigma_\boxbar,\R_\boxbar)$
is defined over the alphabet
\[
	\Sigma_\boxbar = \{a(B) \mid (A \to a(B)) \in \R \} \cup \{ a(LxR) \mid (X \to a(LxR)) \in \R \}.
\]
The set $\R_\boxbar$ contains all rules $A \to a(B)$ and $X \to a(LxR)$ from $\R$.
It also contains the rule $A \to X$ for all $(A \to X\langle B \rangle) \in \R$ where $A \in \V_0^\bot$,
and $X \to YZ$ for all $(X \to Y\langle Z \rangle) \in \R$.
We write $\val{V}_\boxbar$ for the string derived by $V$ in $\G_\boxbar$.
If $X \in \V_1$ and $\val{X}_\boxbar = a_1(L_1 x R_1) \dots a_n(L_n x R_n)$ then $\val{X}$ is the vertical composition
of all contexts $a_i(\val{L_i} x \val{R_i})$.
In the example of \Cref{fig:fslp-tree} we have $\val{C}_\boxminus = b\langle DxD \rangle \, b\langle DxD \rangle$.

\subparagraph{FSLP navigation}

Now we define the data structure from \cite{RehS20}.
It represents a node $v$ in a tree produced by a variable $A \in \V_0$ by a pointer $\tau(A,v)$,
which is basically a sequence of navigation pointers in the SLPs $\G_\boxminus$ and $\G_\boxbar$
describing the path from the root of $\val{A}$ to $v$.
Intuitively, the pointer $\tau(A,v)$ can be described as follows.
First we select the subtree of $\val{A}$ which contains $v$, by navigating in $\G_\boxminus$ to a symbol $\underline{B_0}$ where $B_0 \in \V_0^\bot$.
The tree $\val{B_0}$ is defined by a sequence of insertion rules
$B_0 \to X_1 \langle B_1 \rangle, \, B_1 \to X_2 \langle B_2 \rangle, \dots, B_{k-1} \to X_k \langle B_k \rangle$,
where possibly $k = 0$, and a final rule $B_k \to a(C)$.
We navigate in $\G_\boxbar$ in the variable $B_0$ from left to right.
The string $\val{B_0}_\boxbar$ specifies the contexts $a_j(L_j x R_j)$ which together form the context $\val{X_1}$.
If we encounter a context $a_j(L_j x R_j)$ which contains $v$, there are two cases.
If $v$ is the $a_j$-labeled root then we are done.
If $v$ is contained in either $L_j$ or $R_j$ then we record the direction ($\mathsf{L}$ or $\mathsf{R}$)
and continue recursively from the variable $L_j$ or $R_j$.
If $v$ is not contained in the context $X_1$ then we reach the end of $\val{B_0}_\boxbar$,
and continue searching from $B_1$, etc.
If $v$ is contained in $B_k$ then it is either its root or it is contained in $C$.
In the former case, we are done; in the latter case we record the direction $\mathsf{M}$ and continue recursively from $C$.

To define $\tau(A,v)$ formally, let us write $\sigma_\boxminus(A,i)$ and $\sigma_\boxbar(A,i)$ for the pointers to the $i$-th position of a variable $A \in \V_0$
in $\G_\boxminus$ and $\G_\boxbar$, respectively.
We represent every node $v$ in every variable $A \in \V_0$ by a {\em horizontal pointer} $\tau_\boxminus(A,v)$.
Furthermore, we represent every node $v$ in every variable $A \in \V_0^\bot$, deriving a tree,
by a {\em vertical pointer} $\tau_\boxbar(A,v)$.
The pointers are defined recursively as follows:
\begin{enumerate}
\item Let $A \in \V_0$. If $\val{A}_\boxminus = \underline{A_1} \dots \underline{A_n}$ and $v$ is contained in $\val{A_i}$
then set $\tau_\boxminus(A,v) := \sigma_\boxminus(A,i) \, \tau_\boxbar(A_i,v)$.
\item Let $A \in \V_0^\bot$ with a rule $A \to a(B)$. If $v$ is the root of $\val{A}$ set $\tau_\boxbar(A,v) := \sigma_\boxbar(A,1)$,
and otherwise $\tau_\boxbar(A,v) := \sigma_\boxbar(A,1) \, \mathsf{M} \, \tau_\boxminus(B,v)$.
\item Let $A \in \V_0^\bot$ with a rule $A \to X \langle B \rangle$ and
$\val{A}_\boxbar = \val{X}_\boxbar = a_1(L_1 x R_1) \dots a_n(L_n x R_n)$.
If $v$ is contained in $f_i = a_i(\val{L_i} x \val{R_i})$
set $\tau_\boxbar(A,v)$ to be $\sigma_\boxbar(A,i)$, $\sigma_\boxbar(A,i) \, \mathsf{L} \, \tau_\boxminus(L_i, v)$
or $\sigma_\boxbar(A,i) \, \mathsf{R} \, \tau_\boxminus(R_i, v)$, depending whether $v$ is the root of $f_i$
or is contained in $\val{L_i}$ or $\val{R_i}$.
If $v$ is contained in $\val{B}$ set $\tau_\boxbar(A,v) := \sigma_\boxbar(A,n) \, \tau_\boxbar(B,v)$.
\end{enumerate}
For the navigation we only use the horizontal pointers and write $\tau(A,v)$ instead of $\tau_\boxminus(A,v)$.

\begin{theorem}[\cite{RehS20}]
\label{thm:fslp-old-navi}
A given FSLP $\G$ can be preprocesed in $\O(|\G|)$ time and space so that the following operations are supported in constant time:
\begin{itemize}
\item Given a variable $A$, compute $\tau(A,v)$ where $v$ is the root of the first/last tree in $\val{A}$.
\item Given $\tau(A,v)$, compute $\tau(A,v')$ where $v'$ is the parent, first/last child or left/right sibling of $v$,
or return $\bot$ if it does not exist.
\item Given $\tau(A,v)$, return the symbol of node $v$.
\end{itemize}
\end{theorem}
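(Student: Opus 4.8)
The plan is to reduce every operation to the constant-time SLP-navigation primitives of \Cref{thm:slp-navi}, applied to the rib SLP $\G_\boxminus$ and the spine SLP $\G_\boxbar$. For the preprocessing I would first construct $\G_\boxminus$ and $\G_\boxbar$ in linear time (both have size $\O(|\G|)$) and run the $\O(|\cdot|)$ preprocessing of \Cref{thm:slp-navi} on each of them. A horizontal pointer $\tau(A,v)=\tau_\boxminus(A,v)$ is stored as a doubly-linked list whose entries are $\sigma_\boxminus$- and $\sigma_\boxbar$-pointers (each manipulable in constant time via \Cref{thm:slp-navi}), interspersed with the direction markers $\mathsf{L}$, $\mathsf{R}$, $\mathsf{M}$, exactly following the recursive definitions of $\tau_\boxminus$ and $\tau_\boxbar$; correctness of everything below then reduces, by induction on those definitions, to correctness of \Cref{thm:slp-navi}.

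The key structural fact, obtained by unfolding the definitions, is that the trailing entry of $\tau(A,v)$ is always some $\sigma_\boxbar(B,i)$ and that $v$ is the root of the $i$-th context $f_i$ of $\val{B}_\boxbar$. Hence \emph{returning the symbol of $v$} amounts to reading the terminal at position $i$ of $\val{B}_\boxbar$ (of the form $a(B')$ or $a(LxR)$) and extracting $a$, which is constant time. Similarly, unfolding shows that $\tau(A,v)$ for $v$ the root of the first (respectively last) tree of $\val{A}$ collapses after one case distinction to the two-entry pointer $\sigma_\boxminus(A,1)\,\sigma_\boxbar(A_1,1)$ (respectively $\sigma_\boxminus(A,|\val{A}_\boxminus|)\,\sigma_\boxbar(A_n,1)$), again constant time.

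It then remains to argue that each remaining navigation step inspects and rewrites only a constant-size suffix of the linked list. Moving to the \emph{first/last child} of $v$: from the trailing $\sigma_\boxbar(B,i)$ read the terminal $a(B')$ or $a(LxR)$ at position $i$ and append a bounded number of new entries (an $\mathsf{M}$, $\mathsf{L}$ or $\mathsf{R}$ marker together with a two-entry ``root of first/last tree'' pointer into $B'$, $L$ or $R$), with a small case distinction for empty $L$/$R$ where one instead increments or decrements $i$ using \Cref{thm:slp-navi}. A \emph{sibling} step either shifts $i$ inside $\val{B}_\boxbar$ or rewrites the suffix after inspecting one entry further up. The \emph{parent} step is the delicate case: if $i>1$ the parent is the root at position $i-1$ of the same $\val{B}_\boxbar$; if $i=1$ we pop $\sigma_\boxbar(B,1)$ and, according to whether the preceding entry records an $\mathsf{M}$/$\mathsf{L}$/$\mathsf{R}$ transition or belongs to an insertion chain $B_{j-1}\to X_j\langle B_j\rangle$, the parent is either already named by an entry still on the list (pop a bounded prefix of the suffix) or it is the bottom node of the enclosing context, sitting at the last position of the predecessor variable's $\G_\boxbar$-string. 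I expect this last analysis to be the main obstacle: one must verify that crossing an insertion-chain boundary never forces walking up through an unbounded run of $\sigma_\boxbar$-entries. The point is that each such boundary is a single list entry, and the parent of the root of $\val{B_j}$ is exactly the node at the last position of $\val{B_{j-1}}_\boxbar$, i.e.\ the entry immediately below it, so one pop suffices. Checking all cases is routine but tedious.
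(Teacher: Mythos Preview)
The paper does not prove this theorem: it is stated with the attribution \cite{RehS20} and no proof is given. There is therefore nothing in the paper to compare your attempt against. Your sketch follows the natural route---reduce to the constant-time SLP navigation of \Cref{thm:slp-navi} on $\G_\boxminus$ and $\G_\boxbar$ and show that each forest navigation step touches only a bounded suffix of the pointer list---and this is indeed the approach of \cite{RehS20}.

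One remark on your analysis of the parent step. Your worry about ``an unbounded run of $\sigma_\boxbar$-entries'' is the right place to be careful, but your resolution glosses over a symmetric issue for siblings. When the current node is the root of $\val{B_k}$ in an insertion chain $B_0\to X_1\langle B_1\rangle,\dots,B_{k-1}\to X_k\langle B_k\rangle$, moving to a \emph{sibling} (not the parent) may require reading the context $a_{n}(L\,x\,R)$ at the last position of $\val{B_{k-1}}_\boxbar$ and entering $L$ or $R$; if that side is empty you must hop one more chain level up. In the worst case this could cascade, so one needs either a normal-form assumption ruling out empty $L$ and $R$, or a separate argument (as in \cite{RehS20}) that these cascades are amortised away or prevented by preprocessing. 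Your write-up should make explicit which of these you rely on.
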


\subparagraph{Navigation to a child}
We extend \Cref{thm:fslp-old-navi} by the operation which, given a pointer $\tau(S,v)$ and a number $1 \le j \le d$,
where $v$ has degree $d$, moves the pointer to the $j$-th child of $v$ in $\O(\log d)$ time.
To this end we apply \Cref{thm:contracting} to $\G_\boxminus$ so that every variable $A \in \V_0$ in the rib SLP has height $\O(\log |\val{A}_\boxminus|)$,
by adding only $\O(g)$ new variables.
In particular, we can compute a pointer $\sigma_\boxminus(A,i)$ in $\O(\log |\val{A}_\boxminus|)$ time by \Cref{thm:slp-navi}.
Furthermore, we compute the length $|\val{A}_\boxminus|$ for all $A \in \V_0$ in linear time.

Suppose we are given a pointer $\tau(S,v)$ to a node $v$ with degree $d$
for some variable $S \in \V_0$.
We show how to compute $\tau(S,v_j)$ where $v_j$ is the $j$-th child of $v$ in $\O(\log d)$ time.
\begin{enumerate}
\item In the first case the last pointer in $\tau(S,v)$ is $\sigma_\boxbar(A,1)$ where the rule of $A \in \V_0^\bot$ is of the form $A \to a(B)$.
Here $B$ derives the forest below the $a$-node and we need to move to the root of the $j$-th tree in the forest.
We compute the pointer $\sigma_\boxminus(B,j)$ in $\O(\log |\val{B}_\boxminus|) \le \O(\log d)$ time.
Then we query the symbol $\underline{B_j}$ at pointer $\sigma_\boxminus(B,j)$ and compute the pointer $\sigma_\boxbar(B_j,1)$ in constant time.
Then we obtain $\tau(S,v_j) = \tau(S,v) \, \mathsf{M} \, \sigma_\boxminus(B,j) \, \sigma_\boxbar(B_j,1)$.
\item In the second case the last pointer in $\tau(S,v)$ is $\sigma_\boxbar(A,i)$ where the rule of $A \in \V_0^\bot$ is of the form $A \to X \langle B \rangle$.
We query the symbol $a_i (L_i x R_i)$ at pointer $\sigma_\boxbar(A,i)$.
The $j$-th child $v_j$ is either in $L_i$, $R_i$ or at the position of the parameter $x$.
If $j = |\val{L_i}_\boxminus| + 1$ we replace $\sigma_\boxbar(A,i)$ by $\sigma_\boxbar(A,i+1)$ in constant time.
If this is not successful then $v_j$ is the root of $B$
and we have $\tau(S,v_j) = \tau(S,v) \, \sigma_\boxbar(B,1)$, which can be computed in constant time.
If $j \le |\val{L_i}_\boxminus|$ we compute $\sigma_\boxminus(L_i,j)$ in $\O(\log |\val{L_i}_\boxminus|) \le \O(\log d)$ time.
We query the symbol $B_j$ at $\sigma_\boxminus(L_i,j)$ and compute $\sigma_\boxbar(B_j,1)$ in constant time.
Then we have $\tau(S,v_j) = \tau(S,v) \, \mathsf{L} \, \sigma_\boxminus(L_i,j) \, \sigma_\boxbar(B_j,1)$.
If $j \ge |\val{L_i}_\boxminus| + 2$ we proceed similarly using $\sigma_\boxminus(R_i,j-|\val{L_i}_\boxminus|-1)$.
\end{enumerate}

\subparagraph{Remarks}
In its original form the SLP navigation data structure from \cite{LohreyMR18} is non-persistent, i.e. the operations modify the given pointer.
However, it is not hard to adapt the structure so that an operation returns a fresh pointer,
by representing paths in the derivation tree using linked lists that share common prefixes.
In a similar fashion, \Cref{thm:fslp} can be adapted so that a pointer is not modified by a navigation step.

Finally, let us comment on the space consumption of a single pointer in \Cref{thm:fslp}.
A single pointer $\tau(A,v)$ consists of a sequence of pointers in $\G_\boxminus$ and $\G_\boxbar$
that almost describes a path in the derivation tree of $A$ in $\G$.
The sequence may contain pointers $\sigma_\boxbar(A,n)$ that point to the lowest node above the parameter
of a context $\val{X}$.
However, in the representation of \cite{LohreyMR18} such a pointer $\sigma_\boxbar(A,n)$ only uses $\O(1)$ space,
since it is a rightmost path in the derivation tree of $A$ in $\G_\boxbar$.
Therefore $\tau(A,v)$ uses $\O(\height(A))$ space where $\height(A)$ is the height of the derivation tree of $A$ in $\G$.
By \cite[Theorem~VII.3]{GanardiJL19} we can indeed assume that the FSLP $\G$ has $\O(\log N)$ height while retaining the size bound of $\O(|\G|)$.
We also need the fact that the transformation into the normal form increases the height only by a constant factor.
However, since the application of \Cref{thm:contracting} to the rib SLP may possibly increase the total height of the FSLP $\G$
by more than a constant factor, it is unclear whether \Cref{thm:fslp} can be achieved with $\O(\log N)$ sized pointers.

\section{Finger search in SLP-compressed strings}

In this section we present our solution (\Cref{thm:finger}) for the finger search problem using contracting SLPs.
Our finger data structure is an {\em accelerated path},
which compactly represents the path from root to the finger in the derivation tree
using precomputed forests on the dag of the SLP.
To move the finger we ascend to some variable on the path,
branch off from the path, and descend in a subtree while computing the new accelerated path.
We can maintain the accelerated path in a dynamic predecessor structure with constant update and query time,
thanks to the $\O(\log N)$ height of the SLP.
We follow the approach of \cite{BilleCCG18} and present an improved $\O(tg)$ space solution for the {\em fringe access problem}:
Given a variable $A$ and a position $1 \le i \le |A|$,
we can access the $i$-th symbol of $\val{A}$ in time $\O(\log d + \log^{(t)} N)$ where $d = \min\{i,|A|-i+1\}$
is the distance from the fringe of $A$, and $t$ is any parameter.

\subsection{Predecessor and weighted ancestor data structures}

\label{sec:ds}

Recall that we assume the word RAM model with word size $w \ge \log N$ where $N$ is the string length.
Since all occurring sets and trees have size $n \le N$ we have $w \ge \log n$ in the following.
We use a dynamic predecessor data structure by P\v{a}tra\c{s}cu-Thorup, which represents a dynamic set $S$
of $n = w^{\O(1)}$ many $w$-bit integers in space $\O(n)$, supporting the following updates and queries in constant time \cite{PatrascuT14}:
$\insert(S,x) = S \cup \{x\}$, $\delete(S,x) = S \setminus \{x\}$, $\pred(S,x) = \max\{y \in S \mid y < x\}$,
$\suc(S,x) = \min\{y \in S \mid y > x\}$, $\rank(S,x) = |\{y \in S \mid y < x\}|$, and $\select(S,i) = x$ with $\rank(S,x) = i$, if any.
By enlarging the word size to $2w$ we can identify a number $x \cdot 2^w + y$, where $x,y$ are $w$-bit numbers,
with the key-value pair $(x,y)$, allowing us to store key-value pairs in the data structure sorted by their keys.
We remark that all standard operations on a $2w$-bit word RAM can be simulated by a constant number of $w$-bit operations.
This dynamic predecessor structure is used to maintain the accelerated path in the derivation tree to the finger.
We extend the data structure by the operation $\split(S,x) = \{y \in S \mid y \le x \}$
for $\O(w)$-sized sets.

\begin{theorem}[\cite{PatrascuT14}]
	\label{thm:dyn-pred}
	There is a data structure representing a dynamic set $S$ of at most $n = \O(w)$ many $w$-bit numbers in space $\O(n)$
	supporting the operations $\insert(S,x)$, $\split(S,x)$ and $\pred(S,x)$ in constant time.
\end{theorem}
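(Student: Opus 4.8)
The plan is to extend the Pătrașcu--Thorup dynamic predecessor structure, which already supports $\insert$, $\delete$, $\pred$, $\suc$, $\rank$ and $\select$ in constant time on sets of size $n = \O(w)$, by a $\split(S,x)$ operation that returns the set $\{y \in S \mid y \le x\}$ (destructively, say, turning $S$ into $\{y \in S \mid y > x\}$). Since we only need this for sets whose size is $\O(w)$, the whole set fits in a constant number of machine words, and the idea is to represent such a small set not merely through the black-box Pătrașcu--Thorup structure but additionally as a sorted bit-packed sequence in $\O(1)$ words, so that a split becomes a bit-masking operation. First I would store, alongside the predecessor structure, the elements of $S$ in sorted order packed into $\O(1)$ words of $w$ bits each; maintaining this under $\insert$ is the standard word-RAM trick of locating the insertion position (via $\rank(S,x)$, which the structure already gives in $\O(1)$ time) and then shifting the suffix by one slot using a constant number of shifts, masks and ORs on $\O(1)$-word integers.

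To implement $\split(S,x)$, I would first compute $k = \rank(S,x)$ (or $\rank$ plus a membership check to decide whether $x$ itself goes left), then use $k$ to build, in $\O(1)$ time, two bitmasks that isolate the first $k$ slots and the remaining slots of the packed representation; ANDing these against the packed word(s) and compacting yields the packed representations of $S_{\le x}$ and $S_{>x}$. The remaining issue is that each of the two resulting sets must also be handed back as a genuine Pătrașcu--Thorup structure supporting further constant-time operations, and rebuilding such a structure from scratch on $\Theta(w)$ elements is not obviously $\O(1)$ time. Here I would appeal to the fact that Pătrașcu--Thorup structures of size $n = w^{\O(1)}$ admit $\O(1)$-time construction from a sorted packed list (this is implicit in their construction, which is built on a constant number of word operations over bit-packed fusion-tree-like nodes); alternatively, since in our application the set being split is small and we control how it is used, I would maintain the accelerated path directly in the packed representation and only invoke the heavier structure when $n$ genuinely grows, so that $\split$ is always applied to a packed $\O(1)$-word object.

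The step I expect to be the main obstacle is precisely this interface question: reconciling the ``black-box predecessor structure'' view, under which a split would naively cost $\Theta(n)$ re-insertions, with the ``$\O(1)$-word bitmask'' view, under which the split is trivial but one must argue that everything downstream still runs in constant time. I would resolve it by committing to the bit-packed representation as the primary one for sets of size $\O(w)$ and noting that all of $\insert$, $\pred$ and $\split$ reduce to a constant number of arithmetic and bitwise operations on $\O(1)$-word integers once $\rank$ is available; and $\rank$ on a sorted packed word of $\O(w)$ entries is itself a standard constant-time word-RAM primitive (a most-significant-set-bit computation after a parallel comparison). Thus the theorem follows: the data structure stores $S$ in $\O(n)$ space and supports $\insert(S,x)$, $\split(S,x)$ and $\pred(S,x)$ in constant time.
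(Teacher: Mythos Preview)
Your proposal has a basic arithmetic gap that breaks the whole argument. You assert that ``the whole set fits in a constant number of machine words'' and then pack the elements of $S$ in sorted order ``into $\O(1)$ words of $w$ bits each''. But $S$ contains up to $n = \Theta(w)$ elements, each a full $w$-bit integer, so the packed list occupies $\Theta(w^2)$ bits, i.e.\ $\Theta(w)$ machine words, not $\O(1)$. Consequently none of your constant-time word tricks go through: shifting the suffix by one slot on insert, masking off a prefix on split, and the parallel-comparison $\rank$ you invoke at the end all become $\Theta(w)$-word operations and hence $\Theta(w)$ time. The fusion-tree-style parallel comparison you cite applies when the sorted list fits in $\O(1)$ words, which is not the case here.

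The paper sidesteps exactly this obstacle. It keeps all elements ever inserted in a single P\v{a}tra\c{s}cu--Thorup structure $T$ (which already gives constant-time $\rank$ and $\select$) and, crucially, never deletes from $T$ on a split. Instead it maintains a single bit vector $b$ of length $n = \O(w)$, one bit per element of $T$ indexed by rank, indicating membership in the current $S$. This bit vector \emph{does} fit in $\O(1)$ words. A $\split(S,x)$ then reduces to zeroing a suffix of $b$ (one mask), and $\pred(S,x)$ becomes $\rank(T,x)$ followed by a most-significant-set-bit search in $b$ and a $\select$ in $T$. Insertion updates $T$ and $b$, with a lazy eviction from $T$ to keep $|T| \le n$. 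The point is that the $\O(1)$-word object being manipulated is a bitmask with one bit per element, not a packed list of the $w$-bit elements themselves; your proposal conflates these two and that is where it fails.
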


\begin{proof}
	Let $T$ be the set of all numbers that have been added to $S$ at some point,
	which is stored in a dynamic predecessor structure with constant update/query time.
	Additionally we maintain a bit vector $b[0..n-1]$
	where $b[i] = 1$ if and only if $i < |T|$ and the $i$-th element of $T$ is contained in $S$
	(we count starting with zero).
	\begin{itemize}
	\item To determine $\pred(S,x)$
	we first compute $i = \rank(T,x)$.
	We can determine the maximal $j \le i$ with $b[j] = 1$
	using a left shift and determining the most significant set bit,
	which can be computed using multiplication \cite{FredmanW93}.
	If there is no such $j$ then $x$ has no predecessor in $S$;
	otherwise the predecessor is $\select(T,j)$.
	\item To insert element $x$ into $S$
	we compute $i = \rank(T,x)$ and check whether $\select(i) = x$.
	If so, $x$ is already contained in $T$ and we set $b[i] = 1$.
	Otherwise, we insert $x$ into $T$ and insert a 1-bit before the $i$-th entry of $b$.
	To prevent $b$ and $T$ from overflowing, before inserting $x$ we delete an element $y \in T \setminus S$ from $T$
	if $|T| = n$ and $|S| < n$.
	Such an element $y \in T \setminus S$ can be found by computing a position $i$ with $b[i] = 0$ and setting $y = \select(T,i)$.
	Then we delete the 0-bit at position $i$.
	All of this can be implemented using a constant number of bit operations
	\item For $\split(S,x)$ we determine $i = \rank(T,x+1)$
	and set bits $i$ to $n-1$ in $b$ to zero,
	by computing $b \wedge \mathtt{0}^{n-i} \mathtt{1}^i$ where $\mathtt{0}^{n-i} \mathtt{1}^i = (1 \ll i) - 1$.
	\end{itemize}
\end{proof}

A {\em weighted tree} $T$ is a rooted tree where each node $v$ carries a nonnegative integer $d(v)$,
called the {\em weighted depth}, satisfying $d(u) \le d(v)$ for all nodes $v$ with parent $u$.
Given a node $v$ and a number $p \in \N$, the {\em weighted ancestor query} $(v,p)$ asks
to return the highest ancestor $u$ of $v$ with $d(u) > p$.
In other words, it is a successor query on the set of ancestors of $v$.
Given a node $v$ and $p \in \N$, we can also compute the highest ancestor $u$ of $v$
where the {\em weighted distance} $d(u,v) = d(v) - d(u)$ is less than $p$, by the weighted ancestor query $(v,d(v)-p)$.
In our application the edges have nonnegative edge weights
and the weighted depth of a node is computed as the sum of all edge weights on the path from the root.

Kopelowitz and Lewenstein \cite{KopelowitzL07} showed how to reduce the weighted ancestor problem to the predecessor problem:
Weighted ancestor queries on a tree of size $n$ can be answered in time $\O(\mathrm{pred}(n) + \log^* n)$
where $\mathrm{pred}(n)$ is the query time of a predecessor data structure.
It was claimed in \cite{GawrychowskiLN14} that the additive term $\log^* n$ can be eliminated without giving an explicit proof.
For completeness we give a proof in \Cref{app:waq} for the setting where $n \le w$
using the predecessor structure from \cite{PatrascuT14}.

\begin{restatable}{proposition}{propwaq}
\label{prop:waq}
A weighted tree $T$ with $n \le w$ nodes can be preprocessed
in $\O(n)$ space and time so that weighted ancestor queries can be answered in constant time.
\end{restatable}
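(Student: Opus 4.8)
The plan is to exploit the hypothesis $n \le w$ to pack the ancestor set of each node of $T$ into a single machine word, so that a weighted ancestor query reduces to one predecessor query plus a constant number of word operations. The guiding observation, already noted before the statement, is that $(v,p)$ is a successor query: the ancestors of $v$ form a chain $v_0 = \root, v_1, \dots, v_k = v$ along which $d$ is nondecreasing, so $\{i : d(v_i) > p\}$ is a final segment of $\{0,\dots,k\}$ and the query must return its shallowest element $v_j$. First I would fix the total order $\prec$ on the nodes of $T$ comparing them by $(d(\cdot),\mathrm{depth}(\cdot),\mathrm{id}(\cdot))$ lexicographically; since for two comparable nodes the ancestor has strictly smaller depth, $\prec$ restricted to any root-to-node chain is exactly the ancestor order. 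Writing $\pi(u)$ for the position of $u$ in the $\prec$-order, the set $\{\pi(u) : d(u) > p\}$ is a fixed final segment $\{r_0,\dots,n-1\}$ of $\{0,\dots,n-1\}$ (because $\prec$ orders primarily by $d$-value), and the answer $v_j$ is the node whose $\prec$-position is the least element of that segment that also lies in $\{\pi(u) : u$ an ancestor of $v\}$.

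For preprocessing I would encode each node $u$ by the key $\mathrm{key}(u) = d(u)\cdot 2^{c} + \mathrm{depth}(u)\cdot n + \mathrm{id}(u)$ for a suitable $c = \O(\log n)$, so that numerical order of keys equals $\prec$; this fits in $\O(1)$ words, since $d$-values are $\O(N)$-bounded and $w \ge \log N$. Inserting all $n$ keys one at a time into a P\v{a}tra\c{s}cu--Thorup structure $D$ \cite{PatrascuT14} costs $\O(1)$ each, hence $\O(n)$ time and space in total, after which $\rank$, $\select$, $\pred$, $\suc$ on $D$ take constant time; in particular $\pi(u) = \rank(D,\mathrm{key}(u))$ is computed for all $u$ in $\O(n)$ time. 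Then, processing the nodes top-down, I would build for every $v$ the word $M_v$ whose set bits are exactly $\{\pi(u) : u$ an ancestor of $v\}$: the word $M_{\root}$ has the single bit $\pi(\root)$ set, and $M_v$ is obtained from $M_u$ by additionally setting bit $\pi(v)$, where $u$ is the parent of $v$. This uses $\O(n)$ time and $\O(n)$ words. Mapping a position back to a node is $\select(D,\cdot)$ followed by reading the identifier out of the key.

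To answer a query $(v,p)$ in constant time I would compute $q = \suc(D,\,(p+1)\cdot 2^{c}-1)$, the $\prec$-least key with $d$-value $> p$; if there is none, return $\bot$. Otherwise set $r_0 = \rank(D,q)$ and $M = M_v \wedge \mathtt{1}^{n-r_0}\mathtt{0}^{r_0}$, the restriction of the ancestor bitmask of $v$ to positions $\ge r_0$. If $M = 0$, return $\bot$; otherwise extract the position $r$ of the least set bit of $M$ in constant time \cite{FredmanW93} and return $\select(D,r)$. One then checks, using the two final-segment observations above, that along the chain $v_0,\dots,v_k$ the value $\pi(v_i)$ is nondecreasing in $i$ and is $\ge r_0$ exactly when $d(v_i) > p$, so that $r = \pi(v_j)$ and the returned node is precisely the shallowest ancestor of $v$ whose weighted depth exceeds $p$.

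The step I expect to be most delicate is the bookkeeping that keeps everything linear and constant-time. The preprocessing must be genuinely $\O(n)$, which forbids a general sort of $w$-bit numbers; building $\prec$ incrementally through $n$ constant-time P\v{a}tra\c{s}cu--Thorup insertions circumvents this, and this is exactly where the additive $\log^{*} n$ term in the Kopelowitz--Lewenstein reduction \cite{KopelowitzL07} is saved once $n \le w$. One also has to verify that $\mathrm{key}(u)$ occupies $\O(1)$ words, so that each arithmetic or bitwise step is a constant number of word operations, and — the single point where the structure of weighted trees rather than arbitrary sets is used — that $d$ being nondecreasing along root-to-node paths together with the tie-break by depth really forces $\prec$ to agree with the ancestor order on every such path, which is what legitimizes the ``mask and take the least set bit'' step.
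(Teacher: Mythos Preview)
Your proposal is correct and follows essentially the same approach as the paper's proof: make the weighted depths distinct by a tie-break that respects the ancestor order, insert the resulting $\O(w)$-bit keys into a P\v{a}tra\c{s}cu--Thorup structure, store for each node a single-word bitmask of the ranks of its ancestors (built top-down in $\O(n)$), and answer a query by masking to ranks above the threshold and extracting the least set bit. The only cosmetic difference is that the paper tie-breaks by DFS index rather than by $(\mathrm{depth},\mathrm{id})$, which equally guarantees that ancestors precede descendants; everything else, including the $2w$-bit key observation and the query mechanics, matches.
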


Furthermore, we can also support constant time weighted ancestor queries if the height of the tree
is bounded by $\O(w)$.

\begin{proposition}
	\label{prop:static-wa}
	A weighted tree $T$ with $n$ nodes and height $h = \O(w)$ can be preprocessed in $\O(n)$ space and time
	so that weighted ancestor queries can be answered in constant time.
\end{proposition}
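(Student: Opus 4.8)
The plan is to reduce each weighted ancestor query to a predecessor query over a \emph{small} set plus an ordinary level-ancestor lookup, exploiting that the height bound forces every set in sight to have $\O(w)$ elements. For a node $v$ at edge-depth $k$, let $u_0 \prec u_1 \prec \dots \prec u_k = v$ be its ancestors ordered by depth; since $d(u_0) \le d(u_1) \le \dots \le d(u_k)$, the query $(v,p)$ returns $u_i$ where $i$ is the number of ancestors of weighted depth at most $p$, i.e.\ $i$ is the outcome of a rank/predecessor query on $\{d(u_0),\dots,d(u_k)\}$, and $u_i$ is then the depth-$i$ ancestor of $v$. The crucial point is that this set has only $k+1 \le h+1 = \O(w)$ elements. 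One clean realization: cut, for every node $v$, its ancestor chain into $\O(1)$ consecutive blocks of at most $w$ ancestors each (block $j$ being the ancestors at edge-depth in $[jw,(j+1)w)$), store with $v$ pointers to its $\O(1)$ block-boundary ancestors together with their weighted depths (computable for all nodes in $\O(n)$ time by one depth-first traversal), and handle a query by first deciding in $\O(1)$ time which block contains the answer using those boundary depths, and then calling the \Cref{prop:waq} structure of that block. A single block is a weighted path with at most $w$ nodes, so \Cref{prop:waq} answers the within-block query, and returns the sought node, in constant time; hence the query is $\O(1)$.

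An alternative, perhaps slicker framing starts from the Kopelowitz--Lewenstein reduction, which already answers weighted ancestor queries in $\O(\mathrm{pred}(n)+\log^* n)$ time and $\O(n)$ space. To turn this into $\O(1)$ one would argue two things: first, because the tree has height $\O(w)$, every predecessor query the reduction performs is over a set of $\O(w)$ numbers and is thus answered in constant time by a fusion-style predecessor structure (exactly what underlies \Cref{prop:waq}); second, the additive $\log^* n$ term is eliminated by the technique used to prove \Cref{prop:waq} in \Cref{app:waq}, the only new ingredient being to carry it out on a tree whose node set is large while every ancestor chain still has length $\O(w)$.

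The main obstacle is the construction, i.e.\ achieving $\O(n)$ total space and time. In the block-based route, the block of depth range $[jw,(j+1)w)$ is shared among all nodes below it, and the union of all blocks in that band is the subforest $F_j$ of $T$ induced by those edge-depths -- a forest of height $<w$ but possibly far more than $w$ nodes -- so the within-block problem is itself a bounded-height weighted ancestor problem. This is handled by recursion: decompose $F_j$ into micro-forests of at most $w$ nodes, to which \Cref{prop:waq} applies directly, and recurse on the induced macro-forest, which again has height $<w$, with \Cref{prop:waq} as the base case. Since the forests occurring across the recursion are vertex-disjoint with total size $\le n$, the preprocessing time and space telescope to $\O(n)$. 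The delicate point -- and the reason the refined, $\log^*$-free \Cref{prop:waq} is needed rather than the textbook reduction -- is to organize this recursion so that it bottoms out after the right number of rounds and each round contributes only $\O(1)$ to the query, keeping the total query time constant instead of $\O(\log^* n)$.
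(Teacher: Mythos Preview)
Your proposal assembles the right ingredients---the $\O(w)$-element predecessor structure behind \Cref{prop:waq} and a micro/macro split---but the central step remains unresolved. In the block-based route, once you are inside a depth band $F_j$ of height $<w$ and perform the micro/macro decomposition, you propose to \emph{recurse} on the macro forest. That recursion does not bottom out in $\O(1)$ rounds: a forest of height $<w$ with $n'$ nodes can yield a macro forest with $\Theta(n'/w)$ nodes and height still $<w$ (take a complete binary tree of height $w-1$), so the recursion depth is $\Theta(\log_w n)$ and the query time becomes $\Theta(\log n/\log w)$ rather than $\O(1)$. You yourself flag this as the ``delicate point'' without supplying a fix; using the refined \Cref{prop:waq} at the leaves of the recursion does not bound the number of levels. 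Your Kopelowitz--Lewenstein alternative is similarly incomplete: you assert that the $\log^* n$ term is removed ``by the technique used to prove \Cref{prop:waq},'' but that technique is stated only for trees with at most $w$ nodes, and you do not show how to lift it to arbitrary-size trees of height $\O(w)$---which is precisely the proposition being proved.

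The paper's argument avoids recursion with one observation you are missing. Apply the micro/macro split once to the whole tree (no depth bands are needed). The macro tree has at most $n/w$ \emph{leaves}, since distinct macro leaves have disjoint descendant sets of size $\ge w$, and its height is at most $h=\O(w)$. Hence one can afford to store, for \emph{every} root-to-leaf path of the macro tree, a P\v{a}tra\c{s}cu--Thorup predecessor structure over that path's $\O(w)$ weighted depths, in total space $\O((n/w)\cdot w)=\O(n)$. Each macro node keeps a pointer to one such path through it; since all ancestors of a node lie on any root-to-leaf path containing it, a weighted ancestor query on the macro tree is answered by a single successor lookup. Together with one micro-tree lookup via \Cref{prop:waq} (using the precomputed lowest macro ancestor), the query takes $\O(1)$ time with no recursion at all.
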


\begin{proof}

We again use the predecessor data structure from \cite{PatrascuT14} on sets of size $h = \O(w)$,
which supports constant time successor queries.
A node is called {\em macro} if it has at least $w$ descendant nodes, and otherwise {\em micro}.
This partitions $T$ into a {\em macro tree}, containing all macro nodes, and multiple {\em micro trees}.
On every micro tree of size $\le w$ we construct the data structure from \Cref{prop:waq},
supporting weighted ancestor queries in constant time.
Since the macro tree has at most $n/w$ many leaves
we can construct a predecessor structure from \cite{PatrascuT14} over every root-to-leaf path in the macro tree.
It contains the pairs $(d(v),v)$ for all nodes $v$ on the path such that no proper ancestor of $v$ has the same weighted depth.
By storing a pointer from each macro node to some predecessor structure for a path containing $v$
we can answer weighted ancestor queries on the macro tree in constant time.
Notice that the total preprocessing time and space is $\O((n/w)w) = \O(n)$.
To answer a weighted ancestor query $(v,p)$ on $T$ where $v$ is a micro node we precompute and
store the lowest macro ancestor $\mathrm{lma}(v)$ of $v$.
If $d(\mathrm{lma}(v)) > p$ then the answer node is contained in the macro tree and can be determined in constant time as described above.
Otherwise, the answer node is contained in the micro tree containing $v$
and can be found in constant time using the data structure from \Cref{prop:waq}.
\end{proof}

\subsection{The fringe access problem}

Consider an SLP $\G$ with the variable set $\V$ containing $g$ variables for a string of length $N$.
By \Cref{thm:contracting} we can ensure in linear time that $\G$ is contracting, has size $\O(g)$ and constant-length right-hand sides.
In particular, every variable $A$ has height $\O(\log |A|)$.
We then bring $\G$ into Chomsky normal form, which preserves the size $\O(g)$ and the logarithmic height subtree property.
We precompute in linear time the length of all variables in $\G$.
To simplify notation we assume that the variables $B$ and $C$ in all rules $A \to BC$ are distinct,
which can be established by doubling the number of variables.
We assign to each edge $e$ in $\dag(\G)$ a {\em left weight} $\lambda(e)$ and a {\em right weight} $\rho(e)$:
For every rule $A \to BC$ in $\G$, the edge $e = (A,B)$ has left weight $\lambda(e) = 0$ and right weight $\rho(e) = |C|$,
whereas the edge $e = (A,C)$ has left weight $\rho(e) = |B|$ and right weight $\rho(e) = 0$.

\subparagraph{Forest data structure}
Let $\F$ be a finite set of subgraphs of $\dag(\G)$ with node set $\V$ where each $F \in \F$ is a forest
whose edges point towards the roots (as for example in the heavy forest).
The forests will be computed later in \Cref{prop:acc}.
For every forest $F \in \F$ we define two edge-weighted versions $F_\mathsf{L}$ and $F_\mathsf{R}$
where the edges inherit the left weights and the right weights from $\dag(\G)$, respectively, yielding $2|\F|$ many weighted forests.
Let $\lambda_F(A)$ and $\rho_F(A)$ be the weighted depths of $A$ in $F_\mathsf{L}$ and $F_\mathsf{R}$, respectively.
In $\O(|\F| \cdot g)$ time we compute for each variable $A$ the weighted depths $\lambda_F(A)$ and $\rho_F(A)$
and the root $\root_F(A)$ of the subtree of $F$ containing $A$.
We write $\lambda_F(A,B)$ and $\rho_F(A,B)$ for the weighted distances between $A$ and $B$
in $F_{\mathsf{L}}$ and $F_{\mathsf{R}}$, respectively.
We preprocess all $2|\F|$ weighted forests in time and space $\O(|\F| \cdot g)$
to support weighted ancestor queries in constant time according to \Cref{prop:static-wa}.
This is possible because the height of the forests is $\O(\log N) = \O(w)$.

\subparagraph{Short and long steps}
We denote by $\langle A,i \rangle$ the state in which we aim to compute a compact representation
of the path from $A$ to the $i$-th leaf in the derivation tree of $A$.
Starting from state $\langle A,i \rangle$ we can take short steps and long steps.
\begin{itemize}
\item A {\em short step} considers the rule of $A$:
If it is a terminal rule $A \to a$ we have found the symbol $a$.
If it is a binary rule $A \to BC$ we compare $i$ with $|B|$:
If $i \le |B|$ then the short step leads to $\langle B,i \rangle$, and otherwise to $\langle C, i - |B| \rangle$.
\item A {\em left long step in $F \in \F$} is possible if $i \le \lambda_F(A) + |\root_F(A)|$.
Put differently, the path from $A$ to $A[i]$ branches off to the left on the path from $A$ to $\root_F(A)$,
or continues below $\root_F(A)$.
We determine the highest ancestor $X$ of $A$ in $F_{\mathsf{L}}$ with $\lambda_F(A,X) < i$
and move to $\langle X, i - \lambda_F(A,X) \rangle$.
Using the weighted ancestor data structure on $F$ the variable $X$ can be determined in constant time.
\item Symmetrically, a {\em right long step in $F$} is possible if $|A| - i + 1 \le \rho_F(A) + |\root_F(A)|$.
Put differently, the path from $A$ to $A[i]$ branches off to the right on the path from $A$ to $\root_F(A)$,
or continues below $\root_F(A)$.
After finding the highest ancestor $X$ of $A$ in $F_{\mathsf{R}}$ with $\rho_F(A,X) < |A|-i+1$
we move to $\langle X, |A|-i+1-\rho_F(A,X) \rangle$.
\end{itemize}
If we take a long step in a forest $F$ then a subsequent short step moves us
from one subtree in $F$ to a different subtree, by maximality of the answer from the weighted ancestor query.

\subparagraph{Accelerated paths}
A sequence of short and long steps is summarized in an accelerated path.
A {\em short edge} is an edge $(A,B)$ in $\dag(\G)$
whereas a {\em long edge} is a triple $(A,F,B)$
such that $F \in \F$ contains a (unique) path from $A$ to $B$.
In the triple $(A,F,B)$ we store only an identifier of $F$ instead of the forest itself.
The left weight and the right weight of a long edge $e = (A,F,B)$ are
$\lambda(e) = \lambda_F(A,B)$ and $\rho(e) = \rho_F(A,B)$, respectively.
An {\em accelerated path} is a sequence $(e_1, \dots, e_m)$ of short and long edges
such that the target node of $e_{k-1}$ is the source node of $e_k$ for all $1 \le k \le m$.
The length of an accelerated path is bounded by $\height(\G) \le \O(\log N)$,
assuming that we do not store long steps from a variable to itself.

\begin{proposition}
	\label{prop:acc}
	Let $t \ge 1$.
	One can compute and preprocess in $\O(tg)$ time a set of forests $\F$ with $|\F| = \O(t)$ so that
	given a variable $A$ and a position $1 \le i \le |A|$,
	one can compute an accelerated path from $A$ to $A[i]$
	in time $\O(\log d + \log^{(t+1)}N)$ where $d = \min\{i, |A|-i+1\}$.
\end{proposition}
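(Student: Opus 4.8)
The plan is to follow the fringe-access strategy of Bille, Christiansen, Cording and G{\o}rtz~\cite{BilleCCG18}, exploiting as the key structural advantage that in the contracting (then Chomsky-normalised) SLP every variable $A$ has height $\O(\log|A|)=\O(\log N)=\O(w)$, so that \emph{every} auxiliary tree we build is $\O(w)$-tall and \Cref{prop:static-wa} answers weighted ancestor queries on it in constant time. By building the mirror image of every forest it suffices to treat a query $\langle A,i\rangle$ with $d=i$, i.e.\ a position in the left fringe of $\val A$; the case $d=|A|-i+1$ is symmetric and uses the mirrored forests.

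First I would fix a rapidly decreasing threshold sequence $N=\mu_0>\mu_1>\dots>\mu_t\ge 2$ with $\mu_{j+1}=\Theta(\log\mu_j)$, so that $\mu_j=\Theta(\log^{(j)}N)$. For each $0\le j<t$ I would then build, for the scale band $(\mu_{j+1},\mu_j]$, a left forest $F_j^{\mathsf L}$ and, by mirroring, a right forest $F_j^{\mathsf R}$: each variable $w$ whose length lies in the band selects one of the two children of its rule as its parent in $F_j^{\mathsf L}$, so that every connected component is a single root-to-node path of a derivation tree (hence $\O(\log N)$-tall, so \Cref{prop:static-wa} applies). The child must be chosen so that a left long step on $F_j^{\mathsf L}$ from a left-fringe state either follows the component down until the target position falls onto a sibling of length $\le\mu_{j+1}$ — which therefore belongs to a lower band — or reaches the bottom of the path, a variable of length $\le 2\mu_{j+1}$. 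I expect the design of this child-selection rule to be the main obstacle: the leftmost-child rule makes all left weights $0$ and hence yields no progress at all, while the heavy-child rule only halves the length; one must combine them so that a single long step drops a whole scale band while keeping the position within distance $d$ of the relevant fringe, and one must also show that when a left long step is not applicable (the target lies to the right of the component's bottom) the symmetric right long step on $F_j^{\mathsf R}$ is applicable and equally productive. The case analysis deciding, for the current $\langle A,i\rangle$, whether the target lies left of, inside, or right of $\val{\root_{F_j}(A)}$ — together with checking that each long step is genuinely ``possible'' in the sense of the long-step definition — is the delicate part.

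Granting such a family, the rest is routine. We have $|\F|=\O(t)$; computing for all $2|\F|$ weighted forests the weighted depths $\lambda_F,\rho_F$, the roots $\root_F$, and the \Cref{prop:static-wa} preprocessing takes $\O(|\F|\cdot g)=\O(tg)$ time and space, as in the forest-data-structure paragraph. To answer a query from $\langle A,i\rangle$ with $d=i$: repeatedly, if the current variable has length $\le 2d$ or $\le\mu_t$, stop; otherwise, if its length lies in band $j$, append to the accelerated path the long edge of one long step in $F_j^{\mathsf L}$ (or $F_j^{\mathsf R}$) together with the ensuing short edge, descending to a variable in a strictly lower band. This performs $\O(t)$ long steps, each computed in $\O(1)$ time via the preprocessed weighted ancestor structures, and leaves a variable of length $\le\max\{2d,\mu_t\}$, from which we finish with ordinary short steps, of which there are $\O(\log\max\{2d,\mu_t\})=\O(\log d+\log^{(t+1)}N)$ since that variable has height logarithmic in its length. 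Pinning down the threshold recursion so that the residual short-step cost is exactly $\O(\log^{(t+1)}N)$ and charging the long-step phase against it (equivalently: ``one long step, then solve with parameter $t-1$ on a variable of length $\O(\log N)$'') then gives an accelerated path from $A$ to $A[i]$ computed in $\O(\log d+\log^{(t+1)}N)$ time.

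A final point that will need care, besides the child-selection rule, is making each $F_j^{\mathsf L}$ a genuine forest on the \emph{shared} variable set of $\dag(\G)$: since a variable has several parents in the dag, one must fix, per band, a canonical choice so that every component stays a path with the $\O(\log N)$-height bound — hence constant-time weighted ancestor queries — while still guaranteeing the ``drop a whole band'' property used in the analysis.
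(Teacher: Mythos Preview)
Your overall framework is close to the paper's: build $O(t)$ forests on $\dag(\G)$, preprocess each via \Cref{prop:static-wa}, descend by long steps to a short variable, then finish by short steps. Your band forests correspond to the paper's $F_1,\dots,F_{t-1}$, and the child-selection rule you flag as the main obstacle is in fact simple: for a rule $A\to BC$ put the edge $(A,B)$ into $F_j$ if $|B|>\mu_{j+1}$, else $(A,C)$ if $|C|>\mu_{j+1}$. Then every component of $F_j$ contains only variables of length $>\mu_{j+1}$, so whenever $i\le\mu_{j+1}$ a left long step is possible and, followed by a short step, lands in a variable of length $\le\mu_{j+1}$. (Your last-paragraph worry is also unfounded: each variable chooses at most one forest-parent, namely one child of its own rule, so the result is automatically a forest; components are trees, not paths, but \Cref{prop:static-wa} handles trees.)

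The genuine gap is the cost of the long-step phase. Descending one band per long step costs up to $\Theta(t)$ steps, and this is \emph{not} $O(\log d+\log^{(t+1)}N)$ in general: take $d=O(1)$ and $t=\log^*N$ (a choice the paper explicitly wants to allow), so the target time is $O(1)$ while you spend $\Theta(\log^*N)$. Your suggested charging (``one long step, then parameter $t-1$ on length $O(\log N)$'') unrolls to the same $\Theta(t)$. The paper avoids this with a two-level scheme. First it computes the maximal $k\le t-1$ with $i\le\log^{(k)}N$ and takes a \emph{single} long step in $F_k$ (your band-$k$ forest) to reach some $X$ with $|X|\le\log^{(k)}N$. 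Then, instead of iterating through the remaining bands, it uses one additional forest $F_0$ based on the doubly-exponential rank $\rk(A)=\min\{r\in\N:|A|\le 2^{2^r}\}$, with edge $(A,B)$ if $\rk(A)=\rk(B)$, else $(A,C)$ if $\rk(A)=\rk(C)>\rk(B)$. Each long step in $F_0$ (followed by a short step) strictly decreases the rank, so navigation inside $X$ takes only $O(\log\log|X|)\le O(\log^{(k+2)}N)$ long steps, which by the choice of $k$ is $O(\log d+\log^{(t+1)}N)$. The missing ingredient in your plan is precisely this rank-based forest that replaces the $\Theta(t)$ band-by-band descent by $O(\log\log|X|)$ steps.
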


\begin{proof}

We first focus on the case that $i \le |A|/2$.
To this end we construct forests $\F = \{F_0, \dots, F_{t-1}\}$ in $\O(tg)$ time
so that an accelerated path from $A$ to $A[i]$ can be computed in time $\O(\log i + \log^{(t+1)}N)$.
As described above, for all $F \in \F$ we construct two constant time weighted ancestor data structures
(for $F_\mathsf{L}$ and $F_\mathsf{R}$),
and compute $\lambda_F(A)$, $\rho_F(A)$ and $\root_F(A)$ for all $A \in \V$.
At the end we comment on the case $i > |A|/2$.

The simple algorithm which only uses short steps takes time $\O(\log |A|)$.
We first improve the running time to $\O(\log i + \log \log |A|)$.
Let $\rk(A) = \min\{ k \in \N : |A| \le 2^{2^k}\}$, which is at most $1 + \log \log |A|$.
The forest $F_0$ contains for every rule $A \to BC$ in $\G$
either the edge $(A,B)$, if $\rk(A) = \rk(B)$,
or the edge $(A,C)$ if $\rk(A) = \rk(C) > \rk(B)$.
If $\rk(A)$ is strictly greater than both $\rk(B)$ and $\rk(C)$ then no edge is added for the rule $A \to BC$.
Observe that any two connected nodes in $F_0$ have the same $\rk$-number.
To query $A[i]$ where $\rk(A) = k$ we proceed as follows:
\begin{enumerate}
\item If $i \le \lambda_{F_0}(A) + |\root_{F_0}(A)|$ we take a left long step in $F_0$ to
some state $\langle X, j \rangle$ with $\rk(X) < k$ and $j \le i$.
We repeat the procedure from there. \label{it:acc-step}
\item Otherwise $i > |\root_{F_0}(A)| > 2^{2^{k-1}} \ge \sqrt{|A|}$.
We query $A[i]$ using short steps in time $\O(\log |A|) \le \O(\log i)$.
\end{enumerate}
Since the rank of the current variable is reduced in every iteration of point \ref{it:acc-step}
and the queried position only becomes smaller
the procedure above takes time $\O(\log i + k) \le \O(\log i + \log \log |A|)$.

We can replace $\log \log |A|$ by $\log^{(t+1)}N$
by adding forests $F_1, \dots, F_{t-1}$ to $\F$:
The forest $F_k$ where $1 \le k \le t-1$
contains for every rule $A \to BC$ in $\G$ either the edge $(A,B)$, if $|B| > \log^{(k)} N$,
or the edge $(A,C)$, if $|B| \le \log^{(k)} N$ and $|C| > \log^{(k)} N$.
Observe that all edges in $F_k$ connect only variables of length $> \log^{(k)} N$.
To query $A[i]$ we compute the maximal $k \in [0,t-1]$ such that $i \le \log^{(k)} N$,
which satisfies either $\log^{(k+1)} N < i$ or $k = t-1$.
\begin{enumerate}
\item If $|A| \le \log^{(k)}N$ we can query $A[i]$ in time
$\O(\log i + \log \log |A|) \le \O(\log i + \log^{(k+2)} N) \le \O(\log i + \log^{(t+1)} N)$.
\item Assume $|A| > \log^{(k)}N$ and hence $k \ge 1$.
Since $i \le \log^{(k)} N < |\root_{F_k}(A)|$ we can take a left long step in $F_k$
and then a short step to some state $\langle X, j \rangle$ where $|X| \le \log^{(k)}N$ and $j \le i$.
We can query $X[j]$ in time $\O(\log j + \log \log |X|) \le \O(\log i + \log^{(k+2)} N) \le \O(\log i + \log^{(t+1)} N)$.
\end{enumerate}
Finally, for every forest $F \in \F$ we include a mirrored version of $F$ which is right-skewed
instead of left-skewed.
This allows us to compute an accelerated path from $A$ to $A[i]$ in time $\O(\log (|A|-i+1) + \log^{(t+1)}N)$,
concluding the proof.
\end{proof}

\subsection{Solving the finger search problem}

We are ready to prove \Cref{thm:finger}.
We maintain an accelerated path $\pi = (e_1, \dots, e_m)$ from the start variable $S$ to the current finger position $f$
with its left weights and right weights as follows.
Let $\ell_j = \sum_{k=1}^j \lambda(e_k)$ and $r_j = \sum_{k=1}^j \rho(e_k)$ be the prefix sums of the weights.
Observe that $f = \ell_m + 1$.
\begin{itemize}
\item We store a stack $\gamma = ((e_1,\ell_1,r_1), (e_2,\ell_2,r_2), \dots, (e_m,\ell_m,r_m))$, implemented as an array.
Given $i \in [1,m]$, one can pop all elements at positions $i+1, \dots, m$ in constant time.
\item We store the set of distinct prefix sums $L = \{ \ell_j \mid 0 \le j \le m \}$
in a dynamic predecessor data structures from \Cref{thm:dyn-pred}
where a prefix sum $\ell$ is stored together with the maximal index $j$ such that $\ell = \ell_j$.
\item Similarly $R = \{ r_j \mid 0 \le j \le m \}$ is stored in a predecessor data structure.
\end{itemize}
For $\setfinger(f)$ we compute an arbitrary accelerated path from $S$ to $S[f]$, say only using only short steps,
and set up the list $\gamma$ and the predecessor data structures for $L$ and $R$ in time $\O(\log N)$.
For $\movefinger(i)$ we can assume that $f-i = d > 0$ since the data structures are left-right symmetric.
By a predecessor query on $L$ we can find the unique index $j$ with $\ell_j < i \le \ell_{j+1}$.
We store $e_{j+1}$ to continue the access to position $i$.
Then we restrict $\gamma$ to its prefix of length $j$,
and perform $\split(L,\ell_j)$ and $\split(R,r_j)$, all in constant time.
Now $\gamma$, $L$ and $R$ represent the accelerated path $(e_1, \dots, e_j)$.

Suppose that $e_{j+1}$ leads from $A$ to $C$.
Its left weight must be positive since $\ell_j < \ell_{j+1}$.
In the following we compute an accelerated path $\pi'$ from $A$ to $S[i] = A[i']$ where $i' = i - \ell_j$
in time $\O(\log d + \log^{(t)} N)$.
We can then compute the new prefix sums, add them to $L$ and $R$, and prolong $\gamma$ by $\pi'$ appropriately.
This is all possible in time $\O(|\pi'|) \le \O(\log d + \log^{(t)} N)$.
It remains to compute the accelerated path $\pi'$ from $A$ to $A[i']$:
\begin{enumerate}
\item If $e_{j+1} = (A,C)$ is a short edge then its left weight is $|B|$ where $A \to B C$ is the rule of $A$.
We take a short step from $\langle A, i' \rangle$ to $\langle B, i' \rangle$,
and compute an accelerated path from $B$ to $B[i']$ in time $\O(\log (|B|-i'+1) + \log^{(t)} N)$ by \Cref{prop:acc}.
This is at most $\O(\log d + \log^{(t)} N)$ because
\[
	|B|-i'+1 = |B| - i + \ell_j + 1 = \ell_{j+1} - i + 1 \le f - i = d.
\]
\item If $e_{j+1} = (A,F,C)$ is a left or right long edge in $F$
we can take a {\em left} long step from $\langle A, i' \rangle$ in $F$ since
\begin{equation}
	\label{eq:i-est}
	i' = i - \ell_j \le \ell_{j+1}-\ell_j = \lambda(e_{j+1}) = \lambda_F(A,C) \le \lambda_F(A).
\end{equation}
The left long step moves from $\langle A, i' \rangle$ to some state $\langle X, i'' \rangle$
where $X$ is the highest ancestor of $A$ in $F$ with $\lambda_F(A,X) < i'$
and $i'' = i' - \lambda_F(A,X)$ in $Y$.
Observe that $X$ is a proper descendent of $C$ in $F$ since $i' \le \lambda_F(A,C)$ by \eqref{eq:i-est}.
In particular there is a binary rule $X \to YZ$ where $Z$ is the parent of $X$ in $F$.
Then we take a short step from $\langle X, i'' \rangle$ to $\langle Y, i'' \rangle$.
We can compute an accelerated path from $Y$ to $Y[i'']$
in time $\O(\log (|Y|-i''+1) + \log^{(t)} N)$ by \Cref{prop:acc}.
This is at most $\O(\log d + \log^{(t)} N)$ because
\[
	|Y|-i'' +1 = |Y| - i' + \lambda_F(A,X) + 1 = \underbrace{|Y| + \lambda_F(A,X)}_{= \lambda_F(A,Z) \le \lambda(e_{j+1})} - i + \ell_j + 1\le f-i = d.
\]
\end{enumerate}
The query $\access(i)$ is similar to $\movefinger(i)$, except that the data structures $\gamma$, $L$ and $R$ are not updated.
This concludes the proof of \Cref{thm:finger}.
We leave it as an open question whether there exists a linear space finger search data structure, supporting $\access(i)$ and $\movefinger(i)$ in $\O(\log d)$ time.
For path balanced SLPs (e.g. $\alpha$-balanced SLPs or AVL-grammars) such a solution does exist.

\begin{theorem}
	\label{thm:opt}
	Given an $(\alpha,\beta)$-path balanced SLP of size $g$ for a string of length $N$,
	one can support $\setfinger(i)$ in $\O(\log N)$ time, and $\access(i)$ and $\movefinger(i)$ in $\O(\log d)$ time,
	where $d$ is the distance between $i$ and the current finger position, after $\O(g)$ preprocessing time and space.
\end{theorem}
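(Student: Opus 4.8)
The plan is to reuse the accelerated-path framework from the proof of \Cref{thm:finger}, but to replace its $\O(t)$ rank-layered forests by just two precomputed forests, which is exactly what removes the $\log^{(t)}N$ term. Let $\G$ be the given $(\alpha,\beta)$-path balanced SLP; it is in Chomsky normal form and its derivation tree has height at most $\beta\log N = \O(w)$. Define the \emph{leftmost forest} $F_{\mathsf{L}}$ (resp.\ \emph{rightmost forest} $F_{\mathsf{R}}$) as the subgraph of $\dag(\G)$ that keeps, for every rule $A \to BC$, the edge $(A,B)$ (resp.\ the edge $(A,C)$), with edges pointing towards the roots; each is a forest whose roots are the variables carrying terminal rules, and whose height is the maximal length of a leftmost (resp.\ rightmost) root-to-leaf path, hence $\O(w)$ by path-balancedness. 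Equipping each node $u$ of these forests with weighted depth $|u|$ — which is monotone since $|B| < |A|$ whenever $A \to BC$ — \Cref{prop:static-wa} preprocesses them in $\O(g)$ time so that weighted ancestor queries are answered in constant time; in $\O(g)$ time we also precompute all variable lengths and subtree roots. With these structures, the path from $S$ to the finger is maintained, together with the prefix sums of its left and right offsets, in the dynamic predecessor structure of \Cref{thm:dyn-pred}, exactly as in \Cref{thm:finger}: $\setfinger(f)$ is a plain short-step descent from $S$ in $\O(\log N)$ time, and $\movefinger(i)$ with $d = f - i > 0$ uses a predecessor query to find the path edge containing position $i$, truncates the path above its source variable $A$, and then performs a fringe access to the corresponding local position in $A$; the offset bookkeeping guaranteeing that this fringe access starts at distance $\O(d)$ from an end of $A$ is inherited verbatim from \Cref{thm:finger}, and $\access(i)$ is the same without updating the structures.

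So the heart of the matter is a fringe-access claim: after the $\O(g)$ preprocessing above, given a variable $A$ and $1 \le i \le |A|$, one can compute an accelerated path from $A$ to the $i$-th leaf of its derivation tree in time $\O(\log d)$ where $d = \min\{i,|A|-i+1\}$. Assume by symmetry $d = i$. The only non-short step taken is one long step: a weighted ancestor query $(A,i-1)$ on $F_{\mathsf{L}}$ returns the deepest node $A_j$ on the leftmost path from $A$ with $|A_j| \ge i$, and since it is the \emph{deepest} such node its left child $A_{j+1}$ has $|A_{j+1}| < i$. Hence one short step enters the right child $C_j$ of $A_j$ at a local position in $[1,|C_j|]$, from which we descend to the target leaf using only short steps. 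For $d = |A|-i+1$ one proceeds symmetrically on $F_{\mathsf{R}}$, entering after the short step the left child of the relevant node on the rightmost path from $A$.

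The place where path-balancedness is genuinely needed — beyond the height bound — is in bounding the cost of the final descent. Unlike $\alpha$-balancedness, path-balancedness does not force $|A_{j+1}| = \Omega(|A_j|)$: comparing the leftmost root-to-leaf path of $A_j$ (length $\ge \alpha\log|A_j|$) with $1$ plus the leftmost path of $A_{j+1}$ (length $\le \beta\log|A_{j+1}|$) only gives $\alpha\log|A_j| \le 1 + \beta\log|A_{j+1}| < 1 + \beta\log i$, hence $|C_j| \le |A_j| = \O(i^{\beta/\alpha})$ — a polynomial, not a constant-fraction, bound in $d$. This still suffices, since the descent in $C_j$ is a root-to-leaf path of length $\le \beta\log|C_j| = \O(\log d)$ by path-balancedness again; together with the one long step and one short step on top, the fringe access costs $\O(\log d)$. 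The main obstacle I expect is not this estimate but the bookkeeping of integrating these long steps into the maintained accelerated path: a later $\movefinger$ may need to re-enter the interior of a previously inserted long edge, which again reduces to a weighted ancestor query on $F_{\mathsf{L}}$ or $F_{\mathsf{R}}$, and I would simply check that the long-step mechanism of \Cref{thm:finger}, instantiated with $\F = \{F_{\mathsf{L}},F_{\mathsf{R}}\}$, carries over unchanged. Together with the fringe-access claim this yields $\setfinger$ in $\O(\log N)$ and $\access$, $\movefinger$ in $\O(\log d)$ after $\O(g)$ preprocessing.
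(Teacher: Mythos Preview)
Your proposal is correct and follows essentially the same approach as the paper. Both arguments set up exactly the two forests of leftmost and rightmost edges, use a single weighted ancestor query to locate the last node on the leftmost (resp.\ rightmost) path of $A$ whose length is still at least $i$ (resp.\ $|A|-i+1$), take one short step into the sibling subtree, and then bound the remaining short-step descent by $\O(\log d)$ via the $\beta/\alpha$ path-length ratio; your estimate $\alpha\log|A_j|\le 1+\beta\log|A_{j+1}|$ is a minor reformulation of the paper's $\height(Z)+1\le(\beta/\alpha)(\height(Y)+1)$, and both defer the finger bookkeeping to the machinery of \Cref{thm:finger}.
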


\begin{proof}

It suffices to show that fringe access for $(\alpha,\beta)$-balanced SLPs can be solved in time $\O(\log d)$
after linear time preprocessing.
The rest of the proof follows the same route as for \Cref{thm:finger}.
Let $F_0$ and $F_1$ be forests containing all left and right edges:
For every rule $A \to BC$ the forest $F_0$ contains the edge $(A,B)$ and
the forest $F_1$ contains the edge $(A,C)$.
Here it suffices to compute weighted ancestor structures for the right weighted version of $F_0$
and the left weighted version of $F_1$.
To compute an accelerated path from $A$ to $A[i]$ where $i \le |A|/2$
we take a right long step in $F_0$ from $\langle A, i \rangle$ to some state $\langle X, i \rangle$.
If the rule of $X$ is a terminal rule we are done.
Otherwise its rule $X \to YZ$ satisfies $|X| \ge i$ and $|Y| < i$.
Since $\G$ is $(\alpha,\beta)$-path balanced we have $\height(Y) \le \beta \log i$.
Furthermore, since any two root-to-leaf path lengths in the derivation tree below $X$
have a ratio of at most $\beta/\alpha$ we have $\height(Z)+1 \le (\beta/\alpha) (\height(Y)+1)$,
and thus $\height(Z) = \O(\log i)$.
Hence we take a short step to state $\langle Z, i-|Y| \rangle$
and finish the accelerated path in $\O(\log i)$ short steps.
If $i > |A|/2$ we proceed similarly with $F_1$.
\end{proof}

\bibliography{refs}

\appendix

\section{From the weight ancestor problem to the predecessor problem}

\label{app:waq}

\propwaq*

\begin{proof}
First we transform $T$ into a tree $\tilde T$ with pairwise distinct weighted depths.
Let $v_1, \dots, v_n$ be a depth-first traversal of $T'$.
We assign to each node $v_i$ the new weighted depth $\tilde d(v_i) = d(v_i) \cdot 2^w + i$.
Then a weighted ancestor query $(v,p)$ in $T'$ translates into the query $(v,\tilde p)$ in $\tilde T$ where $\tilde p = (p+1) \cdot 2^w$.
We remark that all standard operations on a $2w$-bit word RAM can be simulated by a constant number of $w$-bit operations.
We store all pairs $(\tilde d(v), v)$ in a predecessor data structure $\tilde V$.
Additionally, we store in each node $v$ of $\tilde T$ a bitvector $b(v)$ of length $n \le w$
whose $i$-th bit is one if and only if the node value of $\select(\tilde V, i)$ is an ancestor of $v$.
These bitvectors can be computed in linear time:
The bitvector $b(v)$ can be obtained from the bitvector of its parent by setting the bit at position $\rank(\tilde V, v)$ to one.
To answer a weighted ancestor query $(v,\tilde p)$ in $\tilde T$
we compute $i = \rank(\tilde V, \tilde p)$, compute the minimal $j \ge i$ with $b(v)[j] = 1$
and return the node value of $\select(\tilde V, j)$.
Here the number $j$ is obtained by zeroing out the first $i-1$ many positions in $b$ and computing the most significant bit.
This concludes the proof.
\end{proof}

\end{document}